\documentclass{article}

\usepackage[utf8]{inputenc} 
\usepackage[T1]{fontenc}    
\usepackage{xcolor}         
\usepackage{url}            
\usepackage{nicefrac}       
\usepackage{microtype}      
\usepackage{enumitem}
\usepackage{comment}

\usepackage{fullpage}
\usepackage{graphicx, wrapfig, subfigure} 
\usepackage{mathtools, amsthm, mathrsfs, amssymb, bbm}
\usepackage{thmtools}
\usepackage{comment}
\usepackage{natbib}
\usepackage[colorlinks=true,urlcolor=blue,citecolor=blue,linkcolor=blue,linktocpage,pdfpagelabels,bookmarksnumbered,bookmarksopen]{hyperref}
\usepackage[capitalize,noabbrev]{cleveref}
\usepackage{todonotes}
\usepackage{algorithm}
\usepackage[noend]{algpseudocode}
\usepackage{dsfont}
\usepackage{soul}
\usepackage{cancel}
\usepackage{forest}
\usepackage{tikz}

\usepackage{color-edits}
\addauthor{eb}{blue}
\addauthor{vg}{red}
\addauthor{ja}{green}
\addauthor{jc}{orange}

\newtheorem{definition}{Definition}

\newtheorem{lemma}{Lemma}
\newtheorem{claim}{Claim}

\newtheorem{condition}{Condition}

\crefname{condition}{Condition}{Conditions}
\Crefname{condition}{Condition}{Conditions}

\newcommand{\cost}[1]{\textup{cost}(#1)}
\newcommand{\charge}[1]{\textup{charge}(#1)}

\DeclareMathOperator{\alg}{\textsc{Alg}}

\DeclareMathOperator{\opt}{\textsc{Opt}}
\DeclareMathOperator{\osr}{\textsc{Osr}}

\DeclareMathOperator{\E}{\mathbb{E}}
\DeclarePairedDelimiter\ceil{\lceil}{\rceil}

\newcommand{\cA}{\mathcal{A}}
\newcommand{\cB}{\mathcal{B}}

\newcommand{\cD}{\mathcal{D}}

\newcommand{\leftBucket}{L}
\newcommand{\rightBucket}{R}

\newcommand{\learnBuckets}{t_0}

\DeclareMathOperator*{\argmax}{arg\,max}
\DeclareMathOperator*{\argmin}{arg\,min}

\title{Online Min-Cost Matching with General Arrivals\thanks{Eric Balkanski and Jason Chatzitheodorou were supported in part by NSF grants CCF-2210501 and IIS-2147361. Josh Ascher and Vasilis Gkatzelis were supported by NSF CAREER award CCF-2047907 and NSF grant CCF-2210502.}}

\author{%
    \begin{tabular}{c@{\hspace{1.5cm}}c}
        \begin{tabular}{c}
            Josh Ascher\\
            Drexel University\\
            \texttt{ja3443@drexel.edu}
        \end{tabular}
        &    
        \begin{tabular}{c}
            Eric Balkanski\\
            Columbia University\\
            \texttt{eb3224@columbia.edu}
        \end{tabular}
        \\
        \\
        \begin{tabular}{c}
            Jason Chatzitheodorou\\
            Columbia University\\
            \texttt{ic2621@columbia.edu}
        \end{tabular}
        &
        \begin{tabular}{c}
            Vasilis Gkatzelis\\
            Drexel University\\
            \texttt{gkatz@drexel.edu}
        \end{tabular}     
    \end{tabular}
}
\date{}

\begin{document}

\maketitle

\begin{abstract}

In the classic online min-cost matching problem, the goal is to match a sequence of requests that arrive dynamically over time to a set of static servers, aiming to minimize the total cost of the matching. This assumes that there are two distinct ``sides'' and that only one of these sides arrives online, but many of the motivating applications violate these assumptions. We study online min-cost perfect-matching when \emph{all} participants arrive online and, upon arrival, they need to either be matched to someone from a waiting pool or to join the waiting pool. We evaluate the competitive ratios achievable in different input models and show that for both the adversarial and the random-order input models the competitive ratio of any algorithm is unbounded. In contrast, for i.i.d.\ arrivals we give a $O( \log^2{n})$-competitive algorithm, even if the distribution that generates these arrivals is unknown to the algorithm. This result implies a rare example of separation in the achievable competitive ratio between the random-order and the unknown-i.i.d.\ input models.


    

\end{abstract}

\newpage

\section{Introduction}

In the metric matching problem, a set of points that lie in a metric space need to be matched to each other, aiming to minimize the total distance of the matching. 
Most prior work on this problem has focused on \emph{online} variants, where the points arrive dynamically over time and irrevocable matching decisions need to be made upon their arrival. The online metric matching problem was introduced and studied more than three decades ago \citep{KalyanasundaramPruhs1993,KMV94}, but it has recently received renewed interest, motivated by a surge of new applications in modern platform economies, such as ride-hailing or ride-sharing (e.g., Uber and Lyft), food delivery (e.g., Grubhub and DoorDash), player matching in online games (e.g., FlexMatch and SmartMatch), labor markets (e.g., Upwork and TaskRabbit), and kidney exchange. 

Motivated by some of the earlier applications, e.g., job requests being assigned to servers, the traditional model for online metric matching focuses on ``bipartite'' settings and ``one-sided arrivals'': there are two types of points (servers and requests), and while the requests arrive over time, the servers are all available at the beginning. However, this one-sided arrival model does not capture many of the modern applications, such as ride-sharing, where both the drivers and the passengers may join over time. In fact, many applications, such as real-time matchmaking of players in online games (e.g., matching chess players online based on their Elo rating), are not even bipartite, as any two participants in the market can potentially be matched to each other, depending on the compatibility of the match (e.g., the Elo rating distance). Although non-bipartite models have been well-studied in the value maximization variants of online matching,
they remain poorly understood for min-cost matching, with only a few exceptions (see Section~\ref{sec:related_work} for a detailed discussion).

The vast majority of work on online matching has been restricted to worst-case analysis, assuming that both the locations of the points in the metric and the order in which they arrive are adversarially chosen. This often leads to overly pessimistic bounds, so recent work has instead considered settings where the arrival order is chosen uniformly at random (the ``secretary'' setting) or the points are drawn i.i.d.\ from some known distribution. 
Another natural model 
is ``unknown-i.i.d.'', which drops the, often unrealistic, assumption that the distribution is known. These four input models were at the center of the classic survey by \citet[Theorem 2.1]{SurveyMehta13}, who observed that they 
give rise to the following hierarchy in terms of the best achievable competitive ratios (CR) for each one:

$$\text{CR(Adversarial)} \geq \text{CR(Random Order)} \geq \text{CR(Unknown-i.i.d.)} \geq \text{CR(Known-i.i.d.)}. 
$$

Developing a deeper understanding of this hierarchy in metric matching and identifying which of these inequalities are strict was also one of the open questions in \cite{gupta2019stochasticonlinemetricmatching}.

Our main goal in this paper is to move beyond one-sided arrival models in metric matching and to evaluate the competitive ratios achievable by online algorithms in this hierarchy above.
We focus on the well-studied line metric, which is acknowledged as an important metric space for online min-cost matching~\citep{peserico2021matching,antoniadis2019left,koutsoupias2003online,fuchs2005online}, and also closely captures some of the motivating applications (e.g., summarizing the skill level of chess players using their Elo ratings).

\subsection{Our Results} 

We investigate online min-cost matching on the line, beyond the classic model of one-sided arrivals, and we analyze both bipartite settings, which have two types of requests (e.g., drivers matched to passengers), and non-bipartite ones (e.g., game players matched to each other). Crucially, \emph{all} requests arrive sequentially, and upon the arrival of a request $r$ we must irrevocably decide whether to match it with an available request $r'$ from a waiting pool, or to reject all available matches and add it to the pool. Requests placed in the waiting pool need to be matched with a future request. A match incurs a cost equal to the distance $|r-r'|$, and our objective is to design algorithms that produce a perfect matching while minimizing the aggregate cost.

This \emph{general arrivals} model has received significant attention in the online max-value matching literature~(e.g., \citep{wang_online_2013, gamlath2019, tang2022,buchbinder2019online}), but not in min-cost matching (see Section~\ref{sec:related_work} for context). Our main result bridges this gap by proposing a novel online algorithm for min-cost matching in this model, which achieves a polylogarithmic competitive ratio for the non-bipartite case and the unknown-i.i.d.\ model.

\begin{restatable}{theorem}{UnknownIIDRatio}
    \label{thm: unknown iid ratio}
        There exists an $O(\log^2 n)$-competitive algorithm for non-bipartite online min-cost matching on the line with general arrivals in the unknown-i.i.d.\ model. 
\end{restatable}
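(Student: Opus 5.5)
We prove \Cref{thm: unknown iid ratio} by building the algorithm in two phases and comparing against a benchmark that is easier to analyze than $\opt$ directly. $\opt$ is the min-cost perfect matching on the $n$ realized points, and on the line this is just sorting and pairing consecutive points. Because the points are drawn i.i.d.\ from an unknown distribution $\cD$, the first step is a lower bound on $\E[\opt]$ in terms of the quantiles of $\cD$.

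\textbf{Lower bound on $\E[\opt]$.} Partition the distribution into buckets of equal probability mass, using roughly $n/\log n$ or $\sqrt n$ quantiles. The aim is an expression of the form $\E[\opt] \gtrsim \sum_j (\text{width of quantile gap } j)\cdot \Pr[\text{the gap separates an odd number of points}]$. On the line, $\opt$ pays for every gap between consecutive sorted points that has an odd number of points to its left. A constant fraction of the gaps at each quantile scale should have this parity-imbalance property with constant probability.

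\textbf{Phase one: learn the buckets.} The algorithm first observes a prefix of $t_0$ requests and uses its empirical quantiles to define buckets $\leftBucket, \rightBucket$ at geometrically many scales. Presumably $t_0$ is a small fraction of $n$, and the cost of matching these early points must itself be charged. Standard concentration (DKW or Chernoff with a union bound) then shows that every learned bucket has true mass within a constant factor of its target, with high probability.

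\textbf{Phase two: hierarchical matching.} Arrange the buckets in a hierarchy of $O(\log n)$ levels, with dyadic merging of adjacent buckets. A new request does the following:
\begin{itemize}
\item it matches within its own bucket if some request is waiting there;
\item otherwise it waits, provided a pool-size threshold is not exceeded;
\item otherwise it escalates to the parent level and matches there.
\end{itemize}
At the end, leftover points are matched greedily.

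For the analysis, charge every match at level $\ell$ to the width of the corresponding level-$\ell$ bucket. Within a bucket, arrivals behave like a random walk, so the expected number of unmatched waiting points is about the square root of the number of arrivals. This is exactly the kind of imbalance that $\opt$ also pays for across that bucket's boundaries, which yields a per-level comparison with constant or $O(\log n)$ loss. Summing over $O(\log n)$ levels, together with an $O(\log n)$ loss from the quantile discretization or from conditioning on the high-probability learning event, gives $O(\log^2 n)$.

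\textbf{Main obstacle.} The hardest step is the per-level comparison. The random-walk imbalance inside a bucket must be matched against $\opt$'s cost, and $\opt$ may match points inside a bucket cheaply even while our algorithm pays the full bucket width. The first thing I would try is to prove that each level's expected cost is at most $O(\log n)$ times $\E[\opt]$. This would use a coupling in which each bucket boundary crossed by $\opt$ with constant probability is matched against the algorithm's $O(\sqrt{k})$ expected leftover points in a bucket receiving $k$ arrivals. Heavy tails would be handled by truncating at the extreme quantiles, where $\opt$ itself must pay.
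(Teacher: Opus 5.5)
There is a genuine gap, and it is at the step you flag as the main obstacle. Your per-level comparison relies on a random-walk imbalance that does not occur in this problem. Any two requests may be matched, so a bucket that matches each arrival to a waiting request whenever one is present never holds more than one waiting request, apart from leftovers inherited when buckets merge. The $\Theta(\sqrt{k})$ leftover you want to couple against $\opt$'s boundary crossings is a bipartite phenomenon. Your coupling therefore has nothing to work with, and your worry that $\opt$ may match cheaply inside a bucket while you pay its full width stays unresolved. Two further pieces are missing. Your pool-threshold and escalation rule is unspecified and its cost is not analyzed. Nothing in your plan guarantees that every bucket receives enough arrivals in every phase to absorb inherited leftovers, which is what keeps forced (arbitrary) matches rare.

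The missing idea is that no fine-grained comparison with $\opt$ is needed. Your parity observation, pushed to full strength, gives the global bound $\E[\opt]\ge\frac12\E[\max_t r_t-\min_t r_t]$. Condition on the two extreme points. For every $x$ between them, not just a constant fraction of gaps, the number of the other $n-2$ points left of $x$ is $\mathrm{Bin}(n-2,p_x)$. This count is even with probability $\frac12+\frac12(2p_x-1)^{n-2}\ge\frac12$, so $x$ lies inside an $\opt$ pair with probability at least $\frac12$.

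The upper bound then only needs crude charging to the range:
\begin{itemize}
\item An arrival in a bucket pays at most that bucket's width.
\item The buckets of any phase are disjoint subintervals of $[\min T,\max T]$, so their widths sum to at most the range.
\item With $\beta=\Theta(\log n)$-ary merging and fixed phase lengths, concentration shows each bucket receives $O(\beta^2)$ arrivals in the first phase and $\Theta(\beta)$ in each later phase. The lower bound of at least $\beta$ arrivals ensures at most one unmatched request per bucket at the end of each phase, so the pool never reaches the number of remaining arrivals.
\item The $O(\beta^2)$ outliers outside $[\min T,\max T]$ and the final $3\beta$ forced matches each pay at most the range.
\end{itemize}
Together these give $O(\beta^2+\rho\beta)=O(\log^2 n)$ times the expected range on the good event. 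The low-probability bad event and the distribution's tails are handled by charging to $J=[F^{-1}_-(1/n),F^{-1}_+(1-1/n)]$ and using $\E[\max_t r_t-\min_t r_t]\ge\frac14(\sup J-\inf J)+\frac n4\E_{r}[\mathrm{dist}(r,J)]$.
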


Our algorithm starts by observing a fraction of the arriving requests without matching them. This way, it gathers a sample from the unknown distribution before making any, potentially costly, matching decisions, and it uses this sample to partition the line into intervals with a roughly equal number of requests. The rest of the algorithm proceeds in a sequence of phases, during which any new request that arrives in a non-empty interval (i.e., an interval with an unmatched request) is matched to a request in that interval. At the end of each phase, some of the intervals are merged into bigger intervals until all requests are matched.

If at any point during this process, the number of unmatched requests equals the number of requests yet to come, the algorithm is forced to match every  subsequent request, to ensure that it reaches a perfect matching. This introduces an interesting trade-off: choosing not to match some requests and place them in a waiting pool provides the algorithm with more context and with more matching options for each subsequent request. However, as the size of the waiting pool increases, there is a higher risk that the algorithm may have to make forced matches that could be very costly. For example, if we were to observe the first $n/2$ requests without matching them, we would then need to match each subsequent request to one of the $n/2$ initial ones. This essentially reduces the problem to the bipartite setting with one-sided arrivals, but no matter how we match the subsequent requests it is impossible to achieve a competitive ratio better than $O(\sqrt{n})$ (see Section~\ref{sec:framework} for details). 

Our algorithm succeeds by initially maintaining smaller intervals, which sets a high bar regarding what constitutes a good match, and avoids high matching costs due to a non-representative sample. Then, as time goes by, the interval size is increased and the matching becomes more permissive. It is worth noting that our algorithm is \emph{ordinal}, i.e., it does not use the actual location of a request on the line and instead only requires its position relative to other requests (i.e., the order of the requests on the line). As a result, its decision regarding whether to match two requests $r$ and $r'$ is made without knowing their actual distance, $d(r,r')$.

\paragraph{Random order model (Section~\ref{sec:random_order})}
Apart from the unknown-i.i.d.\ model, we also consider the random order model and we prove the following strong impossibility result:
    \begin{restatable}{theorem}{RandomOrderImpossibility}
    \label{thm: impossibility result in random order}
        The competitive ratio of any algorithm (even randomized ones) for online min-cost matching on the line with general arrivals in the random order model is unbounded for any $n\geq 4$.
    \end{restatable}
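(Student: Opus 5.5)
The plan is to use Yao-style indistinguishability. I will build two instances that look identical to any online algorithm after the first two arrivals but punish opposite decisions, and I will first do this for $n=4$. Fix a small $\varepsilon>0$ and set $M=1/\varepsilon$. Define instance $I=\{0,\varepsilon,1,1+\varepsilon\}$, where $\opt(I)=2\varepsilon$ by matching $0$–$\varepsilon$ and $1$–$(1+\varepsilon)$. Define instance $J=\{0,1,M,M+1\}$, where $\opt(J)=2$ by matching $0$–$1$ and $M$–$(M+1)$. In the random order model the arrival order is a uniformly random permutation. In either instance, the event $\mathcal{E}$ that the first two arrivals are the points at locations $0$ and $1$ (in either order) has probability $1/\binom{4}{2}=1/6$. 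Conditioned on $\mathcal{E}$ and on the order of those two arrivals, the algorithm's history after two steps is identical in $I$ and $J$.

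Let $p$ be the probability, averaged over the two orderings within $\mathcal{E}$ and over the algorithm's internal randomness, that the algorithm matches the second arrival to the first under $\mathcal{E}$. Since the algorithm's behaviour up to step two depends only on what it has seen, $p$ is the same for $I$ and $J$. I will then bound each instance separately.

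In $I$, matching $0$ with $1$ costs at least $1$. Hence
\[
\E[\alg(I)]\ge \tfrac16 p, \qquad \text{so the ratio on } I \text{ is at least } \frac{p}{12\varepsilon}.
\]
In $J$, if the algorithm does not match the two arrivals, both wait in the pool. Two requests then remain and two are unmatched, so the third arrival (which is $M$ or $M+1$) is forced to match to $0$ or $1$, at cost at least $M-1$. Hence
\[
\E[\alg(J)]\ge \tfrac16(1-p)(M-1), \qquad \text{so the ratio on } J \text{ is at least } \frac{(1-p)(M-1)}{12}.
\]
With $M=1/\varepsilon$, the larger of these two ratios is $\Omega(1/\varepsilon)$ whatever $p$ is. Letting $\varepsilon\to0$ shows that the competitive ratio is unbounded, and this holds for randomized algorithms too, since $p$ already absorbs the algorithm's randomness.

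To extend to all even $n\ge4$, I will pad both instances with the same $(n-4)/2$ pairs of co-located points, placed at one far location $F\gg M$. These pairs add $0$ to $\opt$, so $\opt$ is still $2\varepsilon$ for $I$ and $2$ for $J$. I then redefine $\mathcal{E}$ as the event that the first two arrivals are $0$ and $1$; its probability is $1/\binom{n}{2}$, which is a constant for fixed $n$. Under $\mathcal{E}$ the histories again coincide, so the cost on $I$ carries over unchanged.

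The argument for $J$ is the step I expect to be the main obstacle. With padding the pool is no longer forced at step three, so I need a different reason why keeping $0$ and $1$ apart costs $\Omega(M)$. The intended argument is a parity count on the cluster $\{M,M+1\}$. If $0$ and $1$ are not matched to each other, then each of them must eventually be matched to something other than its natural partner. The far pairs sit at a single point, so matching one of $0$ or $1$ to them costs at least $F$, and otherwise they can only be matched among themselves. It follows that at least one of $0$ or $1$ ends up matched to $M$ or $M+1$ (cost at least $M-1$) or to $F$. I will verify this by a counting argument over the four points $\{0,1,M,M+1\}$: if they are not paired internally as $0$–$1$ and $M$–$(M+1)$, some edge must cross the gap of length $M-1$.
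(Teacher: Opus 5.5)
Your proof is correct and is essentially the paper's argument. In both, two instances agree on an early pair of arrivals, matching that pair is penalized in one instance and leaving it unmatched is penalized in the other, and a common matching probability $p$ absorbs the algorithm's randomness; the paper instead sets $\opt(I_1)=0$ to force $p=0$ and pads with $n-4$ zeros, whereas you keep both optima positive, balance the two ratios via $M=1/\varepsilon$, and pad far away at $F$. The step you flag as the main obstacle needs no parity count: once $0$ and $1$ both sit in the pool they can never be matched to each other, since matches form only upon an arrival, so the request at $0$ is eventually matched to a point in $\{M,M+1,F\}$ at cost at least $M$ (this also covers $n=4$ directly, and a common rescaling puts all points in $[0,1]$ if you want to stay in the paper's domain).
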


What is interesting about this result is that, combined with our main result, it exhibits a big separation between the competitive ratios achievable in the random order model and in the unknown-i.i.d.\ one. In his survey of online matching, \citet{SurveyMehta13} emphasized that \emph{``despite their apparent differences there is no known separation result differentiating the unknown-i.i.d.\ model from the random order model.''} To the best of our knowledge, this is the first example of separation between these two models in online matching.

\paragraph{Bipartite case (Section~\ref{sec:bipartite})}
Finally, for the bipartite case we provide a reduction that allows us to adapt results from the one-sided arrivals model to the general arrivals model. This reduction implies a constant competitive ratio for the line in the known-i.i.d.\ model, and an $O\left( (\log{\log{\log{n}}})^{2} \right)$ competitive ratio for general metrics.
    \begin{restatable}{theorem}{TwoSidedRatio}
    \label{thm: ratio for two sided bipartite arrivals}
     There exists a constant-competitive algorithm for bipartite online min-cost matching on the line with general arrivals in the known-i.i.d.\ model. For general metrics, there exists an $O\left( (\log{\log{\log{n}}})^{2} \right)$-competitive algorithm.
    \end{restatable}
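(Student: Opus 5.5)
The plan is a reduction: any algorithm for bipartite matching with one-sided arrivals in the known-i.i.d.\ model becomes an algorithm for general (two-sided) arrivals, losing only a constant factor. Then I plug in known results. For the line, there is a constant-competitive algorithm with one-sided known-i.i.d.\ arrivals. For general metrics, \citet{gupta2019stochasticonlinemetricmatching} give an $O((\log\log\log n)^2)$-competitive algorithm. In the bipartite model, $n$ requests arrive i.i.d.\ from a known distribution $\cD$, each of which is a red or blue point, and we must output a perfect red--blue matching. I will assume the natural balanced version of the model: $n/2$ red points drawn from $\cD_R$ and $n/2$ blue points from $\cD_B$, arriving in random interleaved order.

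\textbf{Construction.} Since $\cD_R$ and $\cD_B$ are known, the algorithm samples a \emph{virtual} set $S$ of $n/2$ points i.i.d.\ from $\cD_B$ before anything arrives. It treats $S$ as the offline servers and each arriving red request as an online request. The one-sided algorithm $\cA$ assigns each red arrival $r$ to a virtual server $s\in S$. Real blue arrivals are coupled to virtual servers as follows:
\begin{itemize}
\item When a real blue request $b$ arrives, it is paired with an unused virtual point $s'$. If the real red request currently assigned to $s'$ is waiting, we match $b$ to it; otherwise $b$ waits.
\item When a red $r$ arrives and is assigned to $s$, we match $r$ to the real blue partner of $s$ if that partner is waiting; otherwise $r$ waits.
\end{itemize}
By the triangle inequality, the cost of each real match is at most $d(r,s)+d(s,b)$. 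The first terms sum to $\cost{\cA}$. The second terms form the cost of a coupling between the real blue points and the i.i.d.\ virtual sample $S$.

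\textbf{Bounding the coupling cost.} The key step is to pair blue arrivals with virtual points so that $\E\bigl[\sum d(s,b)\bigr] = O(\E[\opt])$. The natural choice is the offline min-cost matching between the two i.i.d.\ samples, the real blues and $S$. That matching is not online, however. Two facts help here:
\begin{enumerate}
\item The expected min-cost perfect matching between two independent size-$n/2$ samples from $\cD_B$ is $O(\E[\opt])$. Indeed, $\opt$ itself is a red--blue matching, so one can route each sample through the red sample, giving a bound of at most $2\,\E[\opt]$ via a double-sample argument.
\item The same one-sided machinery applies. Virtual servers $S$ are known in advance and blue points arrive i.i.d.\ from the known $\cD_B$, so we can run a second copy of the one-sided known-i.i.d.\ algorithm $\cB$ to assign each arriving blue to a virtual point online. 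This costs $O(\text{competitive ratio})\cdot\E[\text{offline matching of two samples}]$.
\end{enumerate}
Altogether the total cost is $\cost{\cA}+\cost{\cB} = O(c)\cdot\E[\opt]$, where $c$ is the one-sided ratio.

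\textbf{Checking the one-sided benchmark.} Each one-sided guarantee is relative to the offline optimum between its own servers and requests, namely $\E[\opt(\text{reds},S)]$, and this must be related to the true $\E[\opt]$. Since $S$ has the same distribution as the real blues, $\E[\opt(\text{reds},S)]=\E[\opt]$. This is the main subtlety. The one-sided results assume the servers are drawn from the request distribution, or at least known, and I expect that checking their hypotheses hold in this coupled construction will be the main obstacle.

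\textbf{Feasibility.} I also need every request to eventually be matched. Each real red is tied to a virtual $s$, and each $s$ is tied to exactly one real blue. This gives a bijection between reds and blues, and each pair is matched when its later member arrives. The result is a perfect matching whose cost is bounded as above.
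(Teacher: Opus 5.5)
Your proof is correct, but it takes a different route from the paper.

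\textbf{The paper's route.} The paper's reduction matches nothing until $n$ of the $2n$ requests have arrived. The left requests $S_1$ seen so far then act as offline servers for the remaining right requests $R_2$, and symmetrically $R_1$ serves $S_2$. Since $|R_2|=|S_1|$ and $|S_2|=|R_1|$, each is a balanced one-sided instance with i.i.d.\ requests, so the cost is at most $a\,\E[\opt(S_1,R_2)]+a\,\E[\opt(R_1,S_2)]$. To compare these smaller instances with the full one, the paper proves that $\E[\opt]$ is nondecreasing in the number of pairs. That proof uses three things: the line-specific identity $\opt=\int_0^1|Z_n(x)|\,dx$, the martingale property of $Z_n(x)$, and the fact that both sides share one distribution. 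The result is the factor $2a$.

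\textbf{Your route.} You route both sides through a virtual sample $S$ drawn in advance and rely only on the triangle inequality. You compare $\E[\cost{\cA}]$ with $\E[\opt(\text{reds},S)]=\E[\opt]$ by equality in distribution. You compare $\E[\cost{\cB}]$ with $\E[\opt(\text{blues},S)]\le 2\,\E[\opt]$ by composing matchings through the reds.

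\textbf{What each buys.} You get $3c$ rather than $2a$, but you need no monotonicity lemma. Your argument applies verbatim in any metric, whereas the paper's general-metric claim tacitly uses monotonicity beyond the line. That is true (e.g.\ via Kantorovich--Rubinstein duality and Jensen) but is not proved there. Your argument also allows different distributions for the two sides, and any fixed interleaving of colors (the paper's ``for all $\pi$''), since it never uses randomness of the interleaving.

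\textbf{Your anticipated obstacle.} It does not arise. The cited results of \citet{gupta2019stochasticonlinemetricmatching} hold for an arbitrary fixed server set. Since $S$ is sampled before any arrival and independently of the real requests, you may condition on $S$.

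\textbf{Cost of your approach.} Its only extra ingredient is the ability to sample from the known distribution. The paper avoids this by using real arrivals as servers.
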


\subsection{Related Work}\label{sec:related_work}

\paragraph{Min-cost matching with one-sided arrivals.} Online min-cost matching was first studied by~\citet{KalyanasundaramPruhs1993} and \citet{KMV94} who considered the one-sided arrivals model and gave a $2n-1$ competitive algorithm for general metrics. In the one-sided arrivals model, there is an equal number of servers and requests that lie in a metric space, the servers are known offline, and the online requests must be matched to an available server upon arrival. This model has since then been extensively studied and the line metric has received particular attention (see, e.g., \citep{KalyanasundaramPruhs1993, KMV94, fuchs2005online, koutsoupias2003online, raghvendra2018optimal, gupta2012online, nayyar2017input, peserico2021matching, balkanski2023power, megow2020online, antoniadis2019left}).  
The first  polylogarithmic competitive ratio was obtained by \citet{meyerson2006randomized}. The best-known upper and lower bounds are $O(\log^2 n)$ and $\Omega(\log n)$ for general metrics~\cite{bansal2007log2} and $O(\log n)$~\cite{gupta2012online} and $\Omega(\sqrt{\log n})$~\cite{peserico2021matching} for the line; closing these gaps remains a notable open problem in online algorithms.

\paragraph{Max-value matching with general arrivals.} In general arrival models, instead of having servers that are known offline, the points all arrive online. The general arrivals model considered in this paper is well-studied in online max-value matching (see, e.g., \citep{wang_online_2013, gamlath2019, tang2022,buchbinder2019online}), where it is referred to as the general vertex arrivals model. To the best of our knowledge, this model has not previously been studied in min-cost matching.

\paragraph{Min-cost  matching with general arrivals.} General arrival models that have been studied for min-cost matching include matching with delays, where the objective is to minimize the sum of the distances between matched pairs plus the sum of the delays, which are the waiting time between when a request arrives and when it is matched~\citep{emek2016onlinematchinghastemakes,bienkowski2018primaldualonlinedeterministicalgorithm,azar2020DeterministicMinCostMatchingDelays}. \cite{mari2025OnlineMatchingWithDelaysStochasticArrivalTimes} study a stochastic version of this model where arrivals follow a Poisson distribution. Instead of delays, \citet{kanoria2025dynamic} considers a bipartite model with supply and demand units  where, upon each arrival of a demand unit, a supply unit also arrives.

\paragraph{Non-bipartite matching.} In bipartite settings, the points are partitioned into two sides and a point on one side can only be matched to a point on the other side. In online non-bipartite matching, there is no such restriction, and every pair of points is a feasible match. The one-sided arrival model is only well-defined for the bipartite setting. In general arrival models, the primary focus has been on the non-bipartite (i.e., unrestricted) setting, but the bipartite setting has also been studied.

\paragraph{Online matching in the unknown-i.i.d.\ model.} While online min-cost matching has been mostly studied in the adversarial model, there has also been some work on the random order model~\citep{ raghvendra2016robust, Caragiannis_SerialDictatorship} and the known-i.i.d. model~\citep{gupta2019stochasticonlinemetricmatching, balkanski2023power, yang2026online, kanoria2025dynamic, tsai1994average}. For  min-cost matching with one-sided arrivals, \citet{raghvendra2016robust} gave an $O(\log n)$-competitive algorithm for general metrics in the random arrival order model and, for the known-i.i.d. model, \citet{gupta2019stochasticonlinemetricmatching} gave an algorithm that is $O((\log \log \log n)^2)$ and $O(1)$ competitive for general metrics and the line metric respectively.
The unknown-i.i.d. model has been discussed as a relevant intermediary model between the random-order and known-i.i.d. models~\citep{SurveyMehta13, gupta2019stochasticonlinemetricmatching, huang2024online, karande2011online}. To the best of our knowledge, no previous separation was known between the optimal competitive ratios for the random order and  unknown-i.i.d. models in online matching; and the best-known competitive ratios for the unknown-i.i.d. model were the ones inherited from results for the random order model. A related model to unknown-i.i.d. was recently considered in \cite{li2026smoothedanalysisonlinemetric}, where there are unknown distributions, and the algorithm is provided with a single sample from each distribution.

\paragraph{Matching using Ordinal Information}
The fact that our algorithm requires only ordinal information, i.e., the order of the requests on the line rather than their actual locations, is aligned with the literature on metric matching distortion. This line of work aims to design minimum cost metric matching algorithms that take as input only ordinal information (each request ranks all potential matches from closest to farthest), and it has analyzed ordinal algorithms for both general metrics~\citep{Caragiannis_SerialDictatorship,distortion_metric} and for the line~\citep{distortion_line}.

\section{Preliminaries}

 \paragraph{The online min-cost matching with general arrivals problem.} An instance $I$ consists of $n$ requests  embedded in a metric space  $(\mathcal{M},d)$, where $n$ is assumed to be even. The requests arrive online in some order, which we denote by a permutation $(r_1, \ldots, r_n)$ of $I$. At any time $t$, each request is in exactly one of three states: matched, unmatched, or not yet arrived.  Upon arrival of  request $r_t$ at time $t$, the algorithm must irrevocably match $r_t$ to an unmatched request, or leave it unmatched to be matched with a future request. At time $n$, all requests must be matched and the total cost of a matching $M$ is the sum of distances between matched requests, i.e., $\text{cost}(M) = \sum_{(r, r') \in M} d(r, r')$.

 \paragraph{The non-bipartite and bipartite settings.} In the bipartite setting, the requests $I$ are partitioned into equal-size sides,  and every request $r$ can only be matched to an unmatched request $r'$ that is on the other side.
We note that the  online  min-cost bipartite matching problem with one-sided arrivals is a special case of the bipartite setting of the online min-cost  matching with general arrivals problem where  the requests on one side  all arrive before the requests on the other side. In the non-bipartite setting, there is no such restriction and every match is feasible.

 \paragraph{The arrival order.} There are four standard arrival order models in online matching.
\begin{itemize}
    \item Adversarial: the arrival order is worst-case.
    \item Random: the arrival order is a uniformly random permutation of $I$.
    \item Unknown-i.i.d.: each request $r_t$ is drawn independently from an unknown distribution  $\mathcal{D}$.
    \item Known-i.i.d.: each request $r_t$ is drawn independently from a known distribution  $\mathcal{D}$.
\end{itemize}

 \paragraph{The competitive ratio.} 
 Let $\pi(I)$ be a uniformly random permutation of instance $I$ and let $\textsc{Alg}(\pi(I))$ denote the  total cost of the matching produced by an algorithm $\textsc{Alg}$ on the arrival sequence $\pi(I)$. Let $\textsc{Opt}(I)$ be the total cost of the optimal offline matching on $I$.  An algorithm $\textsc{Alg}$ is $c$-competitive in the random arrival order model if
\[
\max_{I} \ \frac{\mathbb{E}_{\pi,\textsc{Alg}}\!\left[\textsc{Alg}(\pi(I))\right]}{\textsc{Opt}(I)} \ \le \ c.
\]
 Let $\mathcal{D}$ be a distribution over $\mathcal{M}$ and $S=(r_1,\ldots,r_n)$ be a sequence of $n$ requests drawn i.i.d.\ from $\mathcal{D}$. In the unknown-i.i.d. and known-i.i.d. models, an algorithm  $\textsc{Alg}$ is $c$-competitive if
\[
\sup_{\mathcal{D}} \ \frac{\mathbb{E}_{S\sim \mathcal{D}^n,\textsc{Alg}}\!\left[\textsc{Alg}(S)\right]}{\mathbb{E}_{S\sim \mathcal{D}^n}\!\left[\textsc{Opt}(S)\right]} \ \le \ c.
\]

In the remainder of the paper, we write $\mathbb{E}\!\left[\textsc{Alg}(\pi(I))\right]$, $\mathbb{E}\left[\textsc{Alg}(S)\right]$, and $\mathbb{E}\left[\textsc{Opt}(S)\right]$ when the sources of randomness are clear from context.

\section{The Algorithmic Framework}\label{sec:framework}

In this section, we develop an algorithmic framework for general arrivals.  In Section~\ref{sec:3.1}, we describe the framework, which depends on a subroutine $\textsc{FindInitialBuckets}$ to be later defined. In Section~\ref{sec: framework main lemma}, we give the main lemma for this section, which shows that the competitive ratio is $O(\log^2 n)$ if $\textsc{FindInitialBuckets}$ satisfies three conditions. The proof of this lemma requires upper bounding the cost of the algorithm, which we do in Section~\ref{sec: upper bound cost of framework}, and lower bounding the optimal cost, which we do in Section~\ref{sec: opt lower bound}. As a warm-up, we then demonstrate how this framework allows us to prove an $O(\log^2 n)$ competitive ratio for the uniform distribution in Section~\ref{sec:warm-up}.

\paragraph{A reduction to the one-sided arrivals setting.} We first briefly discuss a simple reduction that uses an algorithm $\textsc{OneSided}$ for the one-sided arrivals setting. In this reduction, the first half  $R_1$ of the requests to arrive are left unmatched upon arrival. For the second half of the requests $R_2$, they are matched to an unmatched request from $R_1$ according to $\textsc{OneSided}$ that treats $R_1$ as the offline servers and $R_2$ as the online requests in the one-sided arrivals setting. 

For the uniform distribution, the expected optimal cost for general arrivals is constant (which is implied by  Lemma~\ref{lemma: opt pays expected rightmost minus expected leftmost}) and it is $\Theta(\sqrt{n})$ for one-sided arrivals ~\citep{akbarpour2021value,harel1993}. Thus, combined with the algorithm of~\citet{gupta2019stochasticonlinemetricmatching}  that is constant-competitive on the line for the known-i.i.d. with one-sided arrivals setting, this reduction gives an $O(\sqrt{n})$-competitive algorithm. However, even for the uniform distribution, there is a $\Theta(\sqrt{n})$ gap between the optimal costs in the general and one-sided arrival settings. Consequently, this reduction—and, more generally, any algorithm that leaves the first half of the requests unmatched upon arrival—cannot achieve a competitive ratio better than 
$\Theta(\sqrt{n})$.

\subsection{Description of the Algorithmic Framework}
\label{sec:3.1}

To improve over the $\Theta(\sqrt{n})$ competitive ratio achieved by the reduction to the one-sided arrival setting, an algorithm must thus start matching requests before time $t= n/2$. A tradeoff then arises since the earlier a request is matched, the fewer options there are to match that request. Our framework balances this tradeoff by matching an early request upon arrival only if a great match is available, and then over time, it  lowers the threshold for how good a match needs to be to occur. The algorithm, called \textsc{Framework} and described in Algorithm~\ref{alg:framework}, does leave the first $\learnBuckets$ requests unmatched, but for some parameter $\learnBuckets < n/2$. These $\learnBuckets$ requests are partitioned into buckets $B_i$ according to a subroutine \textsc{FindInitialBuckets}, which also constructs an initial mapping $f^1$ from locations $x \in [0,1]$ to a bucket index $f^1(x)$. This mapping plays a central role. 

The core of the algorithm is the following, which occurs over phases. For each request $r_t$ during a phase $k$, $r_t$ is mapped by $f^k$ to  bucket $B_i$ with  $i = f^k(r_t)$. If bucket $B_i$ contains some unmatched request, then $r_t$ is matched to an unmatched request in $B_i$. Otherwise, it is added to $B_i$. At the end of a phase, the number of buckets is reduced such that each new bucket is obtained by merging $\beta$ old buckets, for some parameter $\beta$. The mapping $f^k$ is also updated so that any request that would have been mapped to an old bucket $f^k(r)$ is now mapped to a new bucket $f^k(r) / \beta$ that the old bucket $f^k(r)$ got merged into. We highlight that, informally, reducing the number of buckets means that the algorithm becomes less restrictive for matching a newly arrived request.

\begin{algorithm}[h]
\begin{algorithmic}
    \caption{$\textsc{Framework}$}
    \label{alg:framework}
    \State \textbf{Input:}  number  of requests $n$,  learning parameter $0 \leq  \learnBuckets \leq n$
    \State merging parameter $\beta \leftarrow  \left\lfloor400 \log n\right\rfloor$
    \State Let number of phases $\rho$ be such that $ \frac{n}{4\beta^{2}} \leq \beta^{\rho} \leq \frac{n}{4\beta} $
    \State Initialize matching $M \gets \emptyset$ 
    \State  First $\learnBuckets$ requests  $r_1, \dots, r_{\learnBuckets}$ are left unmatched upon arrival
    \State $B_{\leftBucket}, B_1, \ldots, B_{\beta^{\rho}}, B_{\rightBucket}, f^1 \leftarrow \textsc{FindInitialBuckets}\left(\left\{r_1, \dots, r_{\learnBuckets}\right\}, \beta^{\rho}\right)$
    \For{phase $k= 1$ to $\rho$} 
    \State $t_k \leftarrow n - 3 \beta^{\rho + 1 - k}$
    \For{time steps $t = t_{k-1}$ to $t_k$} 
    \State Request $r_t$ arrives
    \State $i \leftarrow f^k(r_t)$ \Comment{find bucket index for current request}
     \If{$B_i  \neq \emptyset$} \Comment{match current request}
    \State $r \gets $ arbitrary request in $B_i$
         \State Add  $\{r_t, r\}$ to $M$ and remove $r$ from $B_i$ 
        \Else \Comment{do not match current request}
        \State Add $r_t$ to $B_i$
        \EndIf
        \If{\# of requests not yet arrived = \# of unmatched requests}
        \State \textbf{break} out of outer for loop
        \EndIf
    \EndFor
    \State $B_1,  \ldots, B_{\beta^{\rho - k}} \gets \cup_{i=1}^{\beta} B_i,  \ldots, \cup_{i=\beta^{\rho + 1 - k} - \beta+1}^{\beta^{\rho + 1 - k}} B_i$ \Comment{ merge buckets }
    \State $f^{k+1}(x) \gets 
        \begin{cases}
             f^k(x) & \text{ if  } f^{k}(x)  \in \{\leftBucket, \rightBucket\} \\
             \ceil*{\frac{f^{k}(x)}{\beta} } & \text{ otherwise }
        \end{cases}$ 
    \Comment{update bucket mapping}
    \EndFor

    \State Match each remaining request not yet arrived to an arbitrary unmatched request and update $M$

    \State \Return $M$
\end{algorithmic}
\end{algorithm}

There are two additional important features of the algorithm. The first is that there are two special buckets, the left and right buckets $B_{\leftBucket}$ and $B_{\rightBucket}$, that are for requests close to $0$ and $1$ respectively and that never get merged with other buckets. The second is that if the number of requests not yet arrived reaches the number of unmatched requests, then the algorithm breaks out of the phases and matches each remaining request to an arbitrary unmatched request.

\subsection{Main Lemma for \textsc{Framework}}
\label{sec: framework main lemma}

The main lemma of this section shows that if the subroutine \textsc{FindInitialBuckets} satisfies three conditions, then Algorithm~\ref{alg:framework} is $O(\log^2 n)$-competitive. The first condition is that, with high probability the break condition of the algorithm is not triggered. We note that at time $t$, the  unmatched requests are $\{r_1, \dots, r_{t}\} \setminus M_t$, where $M_t$ is the set of matched requests at time  $t$, and the number of requests not yet arrived is $n - t - 1$. We also note that the break condition cannot be triggered after time $t_\rho$, which is when the final iteration of the outer loop finishes.

\begin{condition}
\label{cond: probability of hitting wall}
     $\Pr[\forall t \in \left[t_{\rho}\right] : |\{r_1, \dots, r_{t}\} \setminus M_t | < n - t - 1 ] \geq 1-\frac{\rho}{n}$.
\end{condition}

We let $i(t)$ be the index $i$ of the bucket $B_i$ to which request $r_t$ is mapped at time step $t$ of the algorithm. The second condition is that, for any phase and bucket, a sufficiently small number of requests are assigned to that bucket during that phase. For the left and right buckets $B_{\leftBucket}$ and $B_{\rightBucket}$, the total number of requests assigned to these buckets must be bounded.

\begin{condition}
\label{cond: few arrivals in any one bucket} 
    We have the following bounds on the number of requests in each bucket:
    \begin{enumerate}
        \item For buckets $B_{\leftBucket}$ and $B_{\rightBucket}$, we have 
        \[
        \Pr[\left| \{t \in \{ \learnBuckets+1, \dots, n\}: i(t) \in \left\{\leftBucket, \rightBucket\right\} \} \right| \leq 32\beta^2] \geq 1-1/n.
        \]
        \item For all other buckets in the first phase, we have 
        \[
        \Pr\left[ \forall j \in \left[\beta^{\rho}\right]: \left| \{t \in \{\learnBuckets+1, \dots, t_{1}\}: i(t) = j \} \right| \leq 32\beta^2 \right] \geq 1-1/n.
        \]
        \item For all other buckets in subsequent phases, we have 
        \[
        \Pr[ \forall k \in [2..\rho], j \in \left[\beta^{\rho + 1 - k}\right]: \left| \{t \in \{ t_{k-1}+1, \dots, t_{k}\}: i(t) = j \} \right| \leq 24\beta ] \geq 1-1/n.
        \]
    \end{enumerate}
\end{condition}

We let $I^k_j = \{ x \in [0,1]: \Pr\left[ f^k(x) = j \right] > 0\}$ be the locations that have a non-zero probability of being assigned to bucket $B_j$ during phase $k$. The third and last condition states that the initial mapping $f^1$ from locations to buckets is such that the sum, over the $\beta^\rho$ buckets $B_j$ in the first phase, of the lengths of the intervals $I^1_j$ is, up to a factor $3$, upper bounded by the expected distance between the leftmost and rightmost requests.

\begin{condition}\label{cond: buckets have bounded length}
    $\E_{I \sim \cD^n}\left[\sum_{j \in \left[\beta^{\rho}\right]} (\sup{I^1_j} - \inf{I^1_j})\right] 
    \leq 3 \cdot \E_{I \sim \cD^n}\left[ \max_{t \in [n]}r_{t} - \min_{t \in [n]}r_{t} \right]$.
\end{condition}

The following lemma is the main result for the framework. Any subroutine \textsc{FindInitialBuckets} that satisfies \cref{cond: probability of hitting wall,cond: few arrivals in any one bucket,cond: buckets have bounded length}, achieves competitive ratio $O\left(\log^2 n\right)$. 
The proof follows from upper bounding the expected cost of the framework (\cref{lem: upper bound cost of framework}) in \cref{sec: upper bound cost of framework} and lower bounding the expected cost of the offline optimal solution (\cref{lemma: opt pays expected rightmost minus expected leftmost}) in \cref{sec: opt lower bound}.

\begin{restatable}{lemma}{MainLemma}
\label{lem: ratio of algorithmic framework}
    For any  distribution $\mathcal{D}$, if \textsc{FindInitialBuckets} and the learning parameter $\learnBuckets$ and such that \cref{cond: probability of hitting wall}, \cref{cond: few arrivals in any one bucket}, and \cref{cond: buckets have bounded length} are satisfied, then \cref{alg:framework} is $O(\log^2{n})$-competitive.
\end{restatable}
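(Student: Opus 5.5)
The plan is to bound $\E[\alg]$ and $\E[\opt]$ separately against the common quantity $\Delta := \E_{S\sim\cD^n}[\max_t r_t - \min_t r_t]$. The lower bound is the easier half. Since every request must be matched, the optimal matching on the line can be taken non-crossing. Consider the gaps between consecutive sorted requests that have an odd number of requests on each side: each of these is covered by some matched pair, and they are roughly half of all gaps. Taking expectations over the i.i.d.\ draw and using exchangeability, this should give $\E[\opt(S)] = \Omega(\Delta)$ up to lower-order corrections. This is the content of \cref{lemma: opt pays expected rightmost minus expected leftmost}, which I will invoke directly.

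For the upper bound (\cref{lem: upper bound cost of framework}), I condition on the good event $\mathcal{G}$ that the conclusions of \cref{cond: probability of hitting wall} and of all three parts of \cref{cond: few arrivals in any one bucket} hold. By a union bound, $\Pr[\neg\mathcal{G}] \le (\rho+3)/n = O(\log n / n)$. On $\neg\mathcal{G}$, every matched pair costs at most $\max r - \min r$ and there are $n/2$ pairs. Conditioning on the small-probability event might distort the range, so I will bound the contribution more carefully by $\E[(n/2)(\max r-\min r)\mathbf{1}_{\neg \mathcal G}]$. Replacing $n$ fresh samples by a resampling argument, or using that the range is monotone in the sample, should make this $O(\log n)\cdot\Delta$. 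This is the step I expect to be most delicate: if it fails, I would reinforce it by noting that the events in the conditions depend only on the ordinal structure, which gives near-independence from the range.

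On $\mathcal{G}$ I charge costs phase by phase. The break never fires before $t_\rho$, so every match inside the phase loop is between two requests mapped to the same bucket $B_j$ in phase $k$. Such a match costs at most the length of $I^k_j$. Since each phase-$k$ bucket is a union of $\beta^{k-1}$ consecutive phase-$1$ buckets, we have
\[
\sum_j |I^k_j| \le \sum_j |I^1_j|
\]
plus the gaps between consecutive intervals, and those gaps are again within the range. The number of matches made in bucket $j$ during phase $k$ is at most the number of arrivals it receives. By \cref{cond: few arrivals in any one bucket}, this is $32\beta^2$ in phase $1$ and $24\beta$ in later phases. Phase $1$ therefore costs $O(\beta^2)\sum_j|I^1_j|$, and each later phase costs $O(\beta)\sum_j |I^1_j|$, for a total of $O(\beta^2 + \rho\beta)\sum_j|I^1_j|$. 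With $\beta = \Theta(\log n)$ and $\rho = O(\log n/\log\log n)$, this is $O(\log^2 n)\sum_j |I^1_j|$. \cref{cond: buckets have bounded length} then turns this into $O(\log^2 n)\,\Delta$.

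Three remaining pieces complete the bound. First, matches in $B_L, B_R$ number at most $32\beta^2$, each costing at most the range, contributing $O(\log^2 n)\Delta$. Second, the leftover unmatched requests remaining in the buckets after phase $\rho$ are matched arbitrarily. There are at most $3\beta$ arrivals after $t_\rho$, since $n - t_\rho = 3\beta \le$ on the order of $\beta$, so this adds $O(\beta)\Delta$. Third, for the initial $\learnBuckets$ requests, their matches are already counted as matches within buckets. Summing and dividing by $\E[\opt] = \Omega(\Delta)$ gives $O(\log^2 n)$.
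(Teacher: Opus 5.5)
\paragraph{Gap: the bad-event bound.} The gap is the step you flagged yourself. You need $\E[\frac n2(\max_t r_t-\min_t r_t)\mathbf 1_{\neg\mathcal G}]=O(\log n)\,\Delta$, which requires $\neg\mathcal G$ to be essentially uncorrelated with the range. \cref{cond: probability of hitting wall,cond: few arrivals in any one bucket,cond: buckets have bounded length} give only probability bounds. The lemma is also stated for an arbitrary, possibly cardinal, \textsc{FindInitialBuckets}; the warm-up one is cardinal.

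Neither monotonicity of the range nor resampling addresses this correlation, and it can be extreme. Let $\cD$ put mass $1-n^{-2}$ on $[0,\eta]$ with $\eta\le 1/n$, and mass $n^{-2}$ at $1$, so $\Delta=\Theta(1/n)$. A cardinal subroutine may violate \cref{cond: few arrivals in any one bucket} exactly when the outlier is among the first $\learnBuckets$ requests, e.g.\ by collapsing everything into one bucket when it sees a far-away sample. That event has probability about $\learnBuckets/n^2<1/n$, so all three conditions still hold. Yet your quantity is about $\learnBuckets/(2n)=\Omega(1/\log n)$ for the paper's $\learnBuckets$, far above $\Delta\log n$. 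The algorithm's true cost on that event is small, since only the outlier's pair is long; what fails is the bound that each of the $n/2$ pairs costs the realized range.

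Your ordinal fallback does work for the subroutine of \cref{alg:FIBordinal}. With i.i.d.\ draws and random tie-breaking, events determined by ranks are independent of the order statistics. But it does not prove the lemma in the stated generality.

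\paragraph{The missing idea.} The paper charges against the quantile interval $J=[F^{-1}_-(1/n),F^{-1}_+(1-1/n)]$, replacing the realized range by deterministic, distribution-level quantities. A request outside $J$ is charged its distance to the far endpoint of $J$ regardless of what the algorithm does. That part therefore needs no conditioning, and its expectation is at most $2(\sup J-\inf J)+n\,\E_{r\sim\cD}[\inf_{x\in J}|r-x|]$ by \cref{lem: charge requests outside J}. A request in $J$ is charged half its match length only when its partner is also in $J$, hence at most $(\sup J-\inf J)/2$.

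So on $E_{bad}$ the in-$J$ charge is at most $\frac n2(\sup J-\inf J)$ pointwise. Multiplying by $\Pr[E_{bad}]=O(\log^2 n/n)$ gives $O(\log^2 n)(\sup J-\inf J)$. \cref{lem:maxminregions} then bounds both $\sup J-\inf J$ and $n\,\E[\inf_{x\in J}|r-x|]$ by $4\Delta$.

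The rest of your proposal is essentially the paper's argument. On the good event, your phase-by-phase bucket charging, the $O(\beta^2)$ outer-bucket and $3\beta$ final arrivals costed at the range, and $\E[X\mathbf 1_{\mathcal G}]\le\E[X]$ match \cref{lem: total cost for good event}. Invoking \cref{lemma: opt pays expected rightmost minus expected leftmost} for the lower bound is also what the paper does.
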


\subsection{Upper Bound for Cost of \textsc{Framework}}
\label{sec: upper bound cost of framework}

In this section, we upper bound  the expected cost of \textsc{Framework}. The cost of \textsc{Framework} over
$I \sim \cD^n$ is denoted as $\alg(I)$.

\begin{restatable}{lemma}{UpperBoundFramework}
\label{lem: upper bound cost of framework}
    For any  distribution $\mathcal{D}$, if \textsc{FindInitialBuckets} and the learning parameter $\learnBuckets$ are such that \cref{cond: probability of hitting wall}, \cref{cond: few arrivals in any one bucket}, and \cref{cond: buckets have bounded length} are satisfied,
    then
    $$\E_{I \sim \cD^n}\left[ \alg(I) \right] 
    \leq O\left( \log^2{n} \right) \cdot \E_{I \sim \cD^n}\left[ \max_{t \in [n]}r_t - \min_{t \in [n]}r_t \right].$$
\end{restatable}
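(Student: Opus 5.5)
We split the expected cost into a good event and a failure event. Let $G$ be the event that the break condition is never triggered and all three bounds in \cref{cond: few arrivals in any one bucket} hold. By \cref{cond: probability of hitting wall} and a union bound, $\Pr[\bar G]\le (\rho+3)/n = O(\log n / n)$.

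On $\bar G$, every match costs at most $\max_t r_t-\min_t r_t$ and there are at most $n/2$ matches. So the contribution is at most $\E[\mathbf{1}_{\bar G}\cdot \tfrac n2(\max r-\min r)]$. The range is not independent of $\bar G$, so we cannot simply factor. Instead we use the boundedness of all locations in $[0,1]$ and the fact that the range of $n$ i.i.d.\ samples is comparable to the range of any constant fraction of them. Concretely, we would bound the range by the sum of the ranges of two disjoint halves and use independence across halves, with the failure event controlled on one half. If this coupling proves awkward, the fallback is to strengthen the failure probabilities to $1/n^2$, which should be available from the same Chernoff arguments used to verify the conditions. Either way, the failure term is $O(\log n)\cdot\E[\max r-\min r]$.

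On $G$, we charge each match to the bucket in which it happens. Every match formed during phase $k$ pairs two requests $r,r'$ whose locations lie in the same interval $I^k_j$, since both were mapped by some $f^{k'}$, $k'\le k$, to a bucket that is contained in bucket $j$ of phase $k$. Hence the cost of such a match is at most $\sup I^k_j-\inf I^k_j$. Because $I^k_j$ is the union of $\beta^{k-1}$ consecutive intervals $I^1_{j'}$, summing over $j$ gives
\[
\sum_j (\sup I^k_j-\inf I^k_j)\le \sum_{j'\in[\beta^\rho]}(\sup I^1_{j'}-\inf I^1_{j'})=:\Lambda.
\]
The number of matches inside bucket $j$ during phase $k$ is at most the number of arrivals to it in that phase. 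For $k=1$ this is at most $32\beta^2$, since the $t_0$ learned requests are only ever matched with later arrivals. For $k\ge 2$ it is at most $24\beta$. Summing over phases, the cost is at most $32\beta^2\Lambda+\rho\cdot 24\beta\Lambda=O(\beta^2)\Lambda$, using $\rho\le\log n\le\beta$. With $\beta=\Theta(\log n)$ this is $O(\log^2 n)\Lambda$.

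The left and right buckets $B_L,B_R$ receive at most $32\beta^2$ arrivals in total, each costing at most the range, for $O(\beta^2)(\max r-\min r)$. One subtlety here is that these buckets hold the extreme requests, so their matching cost is really bounded by the range of the realized instance. Taking expectations and applying \cref{cond: buckets have bounded length} to $\E[\Lambda]$ gives $O(\log^2 n)\E[\max r-\min r]$.

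It remains to handle the final forced matching after phase $\rho$. Only $3\beta$ requests arrive after $t_\rho$. Each of them, and the at most $3\beta$ leftover unmatched requests, is matched at cost at most the range, contributing $O(\beta)(\max r-\min r)$.

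The main obstacle is the failure-event term. We need to avoid paying $n\cdot\Pr[\bar G]$ times a range that may be correlated with $\bar G$. We expect this to be resolved by using the location bound together with the high-probability bounds, possibly by boosting the failure probabilities in the conditions to $O(1/n^2)$ via the same concentration arguments.
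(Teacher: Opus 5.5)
Your good-event analysis is sound and essentially matches the paper's good-event lemma: the bucket intervals telescope to $\Lambda$, Condition 2 bounds the overloads, and $B_L,B_R$ and the last $3\beta$ arrivals are paid at the range. Working with $\mathbf{1}_G$ instead of conditioning is even a bit cleaner.

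The genuine gap is the failure term. You flag it but do not close it, and neither of your suggested fixes works.

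\textbf{Why the fixes fail.} The statement is scale-invariant: $\E[\max_t r_t-\min_t r_t]$ can be arbitrarily small, e.g.\ for $\cD$ supported on $[0,\epsilon]$.
\begin{itemize}
\item Using that locations lie in $[0,1]$ only gives an \emph{additive} error of order $n\Pr[\bar G]$. However small you make $\Pr[\bar G]$, this is not $O(\log^2 n)\cdot\E[\max_t r_t-\min_t r_t]$. Boosting the failure probabilities to $1/n^2$ also changes the lemma's hypotheses.
\item Without the $[0,1]$ bound, a small $\Pr[\bar G]$ gives no control of $\E[\mathbf{1}_{\bar G}(\max_t r_t-\min_t r_t)]$. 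The range may be driven by rare outliers, and nothing in your argument rules out correlation with failure.
\item The two-halves coupling fails as well. $\bar G$ depends on the whole sequence: the learning sample fixes the buckets, and later arrivals decide overloads and the break condition. So no half is independent of it. Also, the range of a union is not bounded by the sum of the parts' ranges unless their hulls intersect.
\end{itemize}

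\textbf{The missing idea.} Measure the failure term against a \emph{deterministic} scale, and pay for outliers unconditionally. Let $J=[F^{-1}_-(1/n),F^{-1}_+(1-1/n)]$ with length $|J|=\sup J-\inf J$. Charge each request as follows: a request outside $J$ pays its distance to the far endpoint of $J$; a request in $J$ matched inside $J$ pays half the match distance; a request in $J$ matched outside $J$ pays $0$. Pointwise, the algorithm's cost is at most the total charge.
\begin{itemize}
\item The in-$J$ charges are deterministically at most $n|J|/2$. So on $\bar G$ they contribute at most $\Pr[\bar G]\cdot n|J|/2=O(\rho)|J|$, with no correlation issue.
\item On $G$, the in-$J$ charges sum to at most the algorithm's cost, so your good-event bound applies.
\item The out-of-$J$ charges need no conditioning. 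A request lies on either side of $J$ with probability at most $1/n$, so their expected total is at most $2|J|+n\E_{r\sim\cD}[\inf_{x\in J}|r-x|]$.
\end{itemize}
Finally, write $\E[\max_t r_t-\min_t r_t]=\int_0^1\bigl(1-F(x)^n-(1-F(x))^n\bigr)dx$ and split the integral into the regions left of, inside, and right of $J$. This gives $\E[\max_t r_t-\min_t r_t]\ge\frac14|J|+\frac n4\E_{r\sim\cD}[\inf_{x\in J}|r-x|]$ (\cref{lem:maxminregions}). That inequality turns every term above into $O(\log n)\cdot\E[\max_t r_t-\min_t r_t]$.
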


The remainder of \cref{sec: upper bound cost of framework}  gives an overview of the proof of this lemma.
For ease of presentation, we let $\tau(t)$ denote the time step, such that $\{r_{t}, r_{\tau(t)}\} \in M$.
We start by defining the cost incurred by \cref{alg:framework} on each time step $t$:
\begin{align*}
    \cost{t} = 
        \begin{cases}
            |r_t - r_{\tau(t)}|,&\text{ if } t > \tau(t) \\
            0,                  &\text { else}
        \end{cases}
\end{align*}

Next, we introduce some notation.  Let the CDF of $\cD$ be $F(x) = \Pr_{r \sim \cD}[r\leq x]$, the upper quantile $F^{-1}_+(p) = \inf\{x \in [0,1] : F(x) \geq p\}$,
and the lower quantile $F^{-1}_-(p) = \sup\{x \in [0,1] : F(x) \leq p\}$. The proof relies crucially on the following interval $J$.
\begin{align*}
    J = [F^{-1}_-(1/n), F^{-1}_+(1-1/n)]
\end{align*}
We also define a second notion of cost that depends on whether request $r_t$ is in $J$ or not.
\begin{align*}
    \charge{t} = 
        \begin{cases}
            \sup{J} - r_t  &\text{ if } r_t < \inf{J} \\
            r_t - \inf{J}  &\text{ if } r_t > \sup{J} \\
            0             &\text{ if } r_t \in J, r_{\tau(t)} \notin J \\
            \frac{|r_t - r_{\tau(t)}|}{2} &\text{ if } r_t \in J, r_{\tau(t)} \in J 
        \end{cases}
\end{align*}
Intuitively, each request that arrives in a low probability area (outside of $J$) contributes a lot to the cost, as it will probably be matched to a request in a high probability area (inside $J$). Let 
$$A_{\rho} = \{ \forall t \in [t_\rho]: |\{r_1, \dots, r_{t}\} \setminus M_t | < n - t - 1 \},$$
$$E_{a} = \left\{ \forall k \in [\rho], j \in \left[\beta^{\rho + 1 - k}\right]: \left| \{t \in \{ t_{k-1}+1, \dots, t_{k}\}: i(t) = j \} \right| \leq 12\beta \right\},$$
$$E_o = \{ \left| \{t \in \{ \learnBuckets+1, \dots, n\}: i(t) \in \left\{\leftBucket, \rightBucket\right\} \} \right| \leq \beta^2 \}.$$
be the events that the first two conditions require to occur with high probability. The bad event $E_{bad}$ is the event that at least one of $A_{\rho}$, $E_{a}$, and  $E_o$ does not occur, i.e.,  $E_{bad} = \bar{A}_{\rho} \cup \bar{E}_{a} \cup \bar{E}_{o}$.

We split the proof of \cref{lem: upper bound cost of framework} into three separate cases.  The first case concerns the cost of all the requests that arrive within $J$, conditional on event $E_{bad}$ happening, which we charge to the length of $J$. The  proofs of \cref{lem:maxminregions,lem: charge requests in J for bad event,lem: total cost for good event,lem: charge requests outside J}, as well as the proof of \cref{lem: upper bound cost of framework} that combines these four  lemmas, are deferred to \cref{sec: upper bound cost of framework omitted proofs}.
\begin{restatable}{lemma}{ChargeInsideJBadEvent}
\label{lem: charge requests in J for bad event}
    For any distribution $\cD$, we have
    $$\E_{I \sim \cD^n}\left[ \sum_{t \in [n]: r_t \in J}\charge{t} \;\middle|\; E_{bad} \right] \leq n \cdot \left( \sup{J} - \inf{J} \right).$$
\end{restatable}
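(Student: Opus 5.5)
Inspecting the definition of $\charge{t}$ shows the bound holds pointwise, for every realization of the instance and every run of the algorithm. Conditioning on $E_{bad}$ therefore plays no role. The plan is to bound $\charge{t}$ by $\sup{J} - \inf{J}$ for each $t$ with $r_t \in J$, and then sum over at most $n$ such time steps.

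Fix any realization and any $t$ with $r_t \in J$. Only the last two cases of the definition of $\charge{t}$ can apply, since the first two require $r_t \notin J$.
\begin{itemize}
\item If $r_{\tau(t)} \notin J$, then $\charge{t} = 0$.
\item If $r_{\tau(t)} \in J$, then both $r_t$ and $r_{\tau(t)}$ lie in the interval $J = [F^{-1}_-(1/n), F^{-1}_+(1-1/n)]$. Hence $|r_t - r_{\tau(t)}| \le \sup{J} - \inf{J}$, and so $\charge{t} = |r_t - r_{\tau(t)}|/2 \le (\sup{J}-\inf{J})/2$.
\end{itemize}
In both cases $0 \le \charge{t} \le \sup{J} - \inf{J}$. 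Two small points need checking here. First, $\tau(t)$ is well defined on every realization, because the algorithm always outputs a perfect matching: the final step matches every remaining request, and the break condition guarantees feasibility. Second, $J$ is a genuine interval with $\inf{J} \le \sup{J}$. This holds because $F^{-1}_-(1/n) \le F^{-1}_+(1-1/n)$ when $1/n \le 1 - 1/n$, that is, for $n \ge 2$. If $J$ were empty, the sum would be empty and the bound trivial.

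Summing, the number of indices $t \in [n]$ with $r_t \in J$ is at most $n$, so
\[
\sum_{t \in [n]: r_t \in J} \charge{t} \le n \cdot (\sup{J} - \inf{J})
\]
holds surely. Taking conditional expectation given $E_{bad}$ preserves the inequality, assuming $\Pr[E_{bad}]>0$; otherwise the conditional expectation is vacuous or conventionally zero. This gives the lemma.

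I expect no real obstacle. The only subtlety is verifying that $\charge{t}$ is well defined and nonnegative in every case, which is the quantile monotonicity check above, and that the bound is uniform over the randomness, so that conditioning on an arbitrary event is harmless.
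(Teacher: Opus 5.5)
Your proof is correct and matches the paper's: both note that for $r_t\in J$ the charge is either $0$ (partner outside $J$) or $|r_t-r_{\tau(t)}|/2\le(\sup J-\inf J)/2$. This bound holds pointwise, so conditioning on $E_{bad}$ is irrelevant, and both then sum over at most $n$ indices (the paper records the intermediate bound $\tfrac n2(\sup J-\inf J)$ before relaxing to $n$).

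One side remark needs a small correction. The endpoint ordering $F^{-1}_-(1/n)\le F^{-1}_+(1-1/n)$ requires the strict inequality $1/n<1-1/n$, i.e.\ $n\ge 3$. At $n=2$, a CDF that is flat at level $1/2$ makes $J$ empty with a negative right-hand side, so your ``empty sum'' fallback does not apply; this is immaterial because the paper only invokes the lemma for large $n$.
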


The second case concerns the cost incurred by our algorithm, conditional on $E_{bad}$ not happening, which we charge to the expected distance between the rightmost and leftmost request.
\begin{restatable}{lemma}{CostGoodEvent}
\label{lem: total cost for good event}
    For any  distribution $\mathcal{D}$, if \textsc{FindInitialBuckets} and learning parameters $\learnBuckets$ is such that \cref{cond: probability of hitting wall}, \cref{cond: few arrivals in any one bucket}, and \cref{cond: buckets have bounded length} are satisfied, then
    $$\E_{I \sim \cD^n}\left[ \sum_{t \in [n]}\cost{t} \;\middle|\; \bar{E}_{bad} \right] 
    \leq O(\log^2{n}) \cdot \E_{I \sim \cD^n}\left[ \max_{t \in [n]}r_t - \min_{t \in [n]}r_t \right].$$
\end{restatable}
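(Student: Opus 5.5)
Under $\bar{E}_{bad}$ I would bound the cost phase by phase, via the bucket structure. Conditioned on $\bar{E}_{bad}$, the break condition never triggers before $t_\rho$ (event $A_\rho$). Each bucket receives at most $12\beta$ arrivals per phase (event $E_a$), and the outer buckets receive at most $\beta^2$ arrivals in total (event $E_o$). The cost splits into four parts: (i) matches made within a bucket during a phase; (ii) matches inside $B_L$ or $B_R$; (iii) the forced matches after $t_\rho$. Part (iv) is the first-phase matches of the $t_0$ sampled requests, which sit in the initial buckets and are covered by (i). For (i), a request $r_t$ with $i(t)=j$ in phase $k$ is matched to a request in the same bucket. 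The bucket's locations all lie in the union of the phase-$k$ interval $I^k_j$, which is the union of the $\beta^{k-1}$ initial intervals $I^1_{j'}$ merged into it. The new request lies there too, so $\cost{t}\le \sup I^k_j-\inf I^k_j$.

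Summing over a phase, the phase-$k$ cost is at most $\sum_j (\#\text{arrivals in } B_j)\cdot|I^k_j| \le 12\beta\sum_j |I^k_j|$. The intervals $I^k_j$ are unions of consecutive $I^1_{j'}$, and the buckets are ordered along the line. I would therefore argue that $\sum_j|I^k_j|\le \sum_{j'}|I^1_{j'}|$ plus gaps between consecutive initial intervals. To avoid the gap issue I would use $|I^k_j|\le \sup I^k_j-\inf I^k_j$ directly, together with the fact that merged intervals are disjoint up to endpoints. This bounds the sum by the total span $\sup_j I^1_j-\inf_j I^1_j$. I would then relate that span to $\sum_{j}|I^1_j|$ plus the range covered, and invoke Condition~\ref{cond: buckets have bounded length}. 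Since there are $\rho=O(\log n/\log\log n)\le O(\log n)$ phases and $\beta=\Theta(\log n)$, the total is $O(\log^2 n)$ times the right-hand side of Condition~\ref{cond: buckets have bounded length}, i.e.\ $O(\log^2 n)\,\E[\max r_t-\min r_t]$. For phase 1 the per-bucket count is $32\beta^2$ rather than $12\beta$. However, phase 1 is a single phase, so it contributes $O(\beta^2)=O(\log^2 n)$ times the same quantity. This keeps the bound.

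For (ii), every match inside $B_L$ (or $B_R$) costs at most the diameter of its interval. There are at most $\beta^2$ such matches, so this part is $O(\log^2 n)\cdot \E[\max-\min]$, provided the outer intervals' lengths are controlled. I expect this to follow because their requests lie within the range of the sample. For (iii), after $t_\rho$ at most $3\beta$ requests remain, and each forced match costs at most $\max_t r_t-\min_t r_t$. This contributes $O(\beta)\cdot(\max-\min)$.

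The main obstacle is that the expectations are conditional on $\bar{E}_{bad}$, whereas Condition~\ref{cond: buckets have bounded length} and $\E[\max-\min]$ are unconditional. Since $\Pr[\bar E_{bad}]\ge 1-O(\rho/n)\ge 1/2$ by Conditions~\ref{cond: probability of hitting wall} and \ref{cond: few arrivals in any one bucket}, I would use $\E[X\mid \bar E_{bad}]\le \E[X]/\Pr[\bar E_{bad}]\le 2\E[X]$ for nonnegative $X$. The quantities here are the interval-length sums and $\max-\min$. This transfers all bounds at a constant-factor loss. A secondary delicate point is verifying that costs in the outer buckets and in phase-$k$ merged buckets are truly bounded by interval spans that Condition~\ref{cond: buckets have bounded length} controls, rather than by gaps between intervals. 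Where that fails, I would fall back on $\max-\min$, which bounds every individual match.
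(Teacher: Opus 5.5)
Your proposal follows the paper's proof essentially step for step. You use the same split: in-bucket matches before $t_\rho$ (arrival counts per bucket times $\sup I^k_j-\inf I^k_j$, summed over phases and controlled by \cref{cond: buckets have bounded length}), then at most $\beta^2$ outer-bucket matches and $3\beta$ post-$t_\rho$ matches, each charged $\max_t r_t-\min_t r_t$. You then finish with $\E[X\mid \bar{E}_{bad}]\le \E[X]/\Pr[\bar{E}_{bad}]$, with $\Pr[\bar{E}_{bad}]$ bounded below by a constant.

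Two small corrections. First, requests in $B_{\leftBucket},B_{\rightBucket}$ lie \emph{outside} $[\min T,\max T]$, not within the sample range, so only your $\max-\min$ fallback works there; that fallback is exactly what the paper uses. Second, your ``gap'' worry does not arise for the paper's subroutines, because the initial intervals $I^1_j$ tile a contiguous range. Merging therefore preserves $\sum_j(\sup I^k_j-\inf I^k_j)$, which the paper uses implicitly when it applies \cref{cond: buckets have bounded length} to every phase.
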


The third case concerns the cost incurred by any algorithm for all requests that arrived outside of $J$, and is charged to the cost of matching them to a request inside $J$. Intuitively, even though this overestimates the cost, it is only done for arrivals far from $J$; therefore, the optimal solution pays their distance to $J$ as well.
\begin{restatable}{lemma}{ChargeOutsideJ}
\label{lem: charge requests outside J}
    For any distribution $\cD$, we have: 
    $$\E_{I \sim \cD^n}\left[ \sum_{t \in [n]: r_t \notin J}\charge{t} \right] \leq 2(\sup{J} - \inf{J}) + n \cdot \E_{r \sim \cD}\left[\inf_{x \in J} |r - x|\right].$$
\end{restatable}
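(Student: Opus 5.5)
The plan is to split each outside-$J$ charge into a term depending only on $J$ and a term measuring the distance of $r_t$ to $J$. Both terms then reduce to quantities depending only on the marginal distribution $\cD$, after which linearity of expectation over the $n$ i.i.d.\ requests finishes the argument. No property of the algorithm is needed: the charge of a request outside $J$ depends only on its location.

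First, decompose the charge. If $r_t < \inf J$, then
\[
\charge{t} = \sup J - r_t = (\sup J - \inf J) + (\inf J - r_t) = (\sup J - \inf J) + \inf_{x\in J}|r_t - x|.
\]
Symmetrically, if $r_t > \sup J$, then $\charge{t} = (\sup J - \inf J) + \inf_{x \in J}|r_t - x|$. Summing over all $t$ with $r_t \notin J$ gives
\[
\sum_{t: r_t\notin J}\charge{t} = (\sup J - \inf J)\cdot\bigl|\{t : r_t\notin J\}\bigr| + \sum_{t: r_t \notin J}\inf_{x\in J}|r_t-x| .
\]
The second sum can be extended to all $t\in[n]$, since the distance to $J$ is $0$ inside $J$. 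Because each $r_t\sim\cD$, linearity of expectation shows that its expectation is exactly $n\cdot\E_{r\sim\cD}[\inf_{x\in J}|r-x|]$.

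For the first term, $\E[|\{t: r_t\notin J\}|] = n\Pr_{r\sim\cD}[r\notin J]$, so it suffices to show $\Pr[r<\inf J]\le 1/n$ and $\Pr[r>\sup J]\le 1/n$.

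For the left tail, let $q = F^{-1}_-(1/n) = \sup\{x : F(x)\le 1/n\}$. Take any $x<q$. By the definition of the supremum there is some $x'\in(x,q]$ with $F(x')\le 1/n$, and monotonicity gives $F(x)\le 1/n$. Hence
\[
\Pr[r<q] = \lim_{x\uparrow q}F(x)\le 1/n .
\]

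For the right tail, let $q' = F^{-1}_+(1-1/n) = \inf\{x : F(x)\ge 1-1/n\}$. Right-continuity of the CDF gives $F(q')\ge 1-1/n$, so
\[
\Pr[r>q'] = 1-F(q')\le 1/n .
\]

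Combining the two tails gives $n\Pr[r\notin J]\le 2$. The first term is therefore at most $2(\sup J-\inf J)$, which yields the claimed bound.

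The only step needing care is this quantile bookkeeping: one must use the lower quantile on the left, with strict inequality $r<q$, and the upper quantile together with right-continuity on the right. Atoms of $\cD$ could otherwise place more than $1/n$ mass outside $J$. Everything else is direct algebra and linearity of expectation.
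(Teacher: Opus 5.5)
Your proof is correct and follows essentially the same route as the paper: you split each outside-$J$ charge into $(\sup J - \inf J)$ plus the distance from $r_t$ to $J$, apply linearity of expectation over the $n$ i.i.d.\ requests, and use $\Pr[r<\inf J]\le 1/n$ and $\Pr[r>\sup J]\le 1/n$. Your explicit derivation of these two tail bounds, via the lower quantile on the left and the upper quantile with right-continuity on the right, fills in a step the paper only asserts as holding ``by definition.''
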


We further show a technical lemma that allows us to relate the expected distance between the rightmost and the leftmost request to the cost of requests that arrive within the interval $J$ and those that arrive outside.
\begin{restatable}{lemma}{MaxMinRegions}
\label{lem:maxminregions}
    For any distribution $\cD$, we have:
    \[
    \mathbb E[\max_{t \in [n]} r_t-\min_{t \in [n]} r_t]
    \;\ge\;
    \frac14
    (\sup{J} - \inf{J})
    + \frac{n}{4} \E_{r \sim \cD}\left[\inf_{x \in J} |r - x|\right].
    \]
\end{restatable}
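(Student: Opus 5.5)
We need to prove $\E[\max-\min]\ge \frac14(\sup J-\inf J)+\frac n4\E[\operatorname{dist}(r,J)]$. The plan is to prove two separate lower bounds, each with constant $1/2$, and average them: (a) $\E[\max-\min]\ge \frac12(\sup J-\inf J)$, and (b) $\E[\max-\min]\ge \frac n2\E_{r\sim\cD}[\inf_{x\in J}|r-x|]$. Write $a=\inf J=F^{-1}_-(1/n)$ and $b=\sup J=F^{-1}_+(1-1/n)$. Since $\max-\min$ is nonnegative, averaging (a) and (b) gives the statement.

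For (a), write $\max-\min\ge (\max-b)^+ \cdot$ stuff; more cleanly, $\max_t r_t-\min_t r_t\ge (b-a)\mathbb 1[\max\ge b,\ \min\le a]$. By the definition of the quantiles, $\Pr[r< b]\le 1-1/n$, i.e.\ $\Pr[r\ge b]\ge 1/n$, and similarly $\Pr[r\le a]\ge 1/n$. I need to be careful at atoms, using the sup/inf conventions: for $p<$ the true quantile we have $F<p$, and right-continuity handles the boundary. Hence $\Pr[\max<b]\le(1-1/n)^n\le e^{-1}$ and likewise for the min, so $\Pr[\max\ge b,\min\le a]\ge 1-2/e\approx 0.26$. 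That falls short of $1/2$, so I will use a cleverer split: $\max-\min\ge (\max-c)+(c-\min)$ for any $c$ in $[a,b]$, each term taken where nonnegative. More simply, $\E[\max]-\E[\min]$ is at least $\E[\max r_t\vee\cdot]$ differences. Concretely I aim for $\E[\max]\ge b\Pr[\max\ge b]+\E[\min]\Pr[\max<b]$, which is messy, so instead I will settle for the constant $1-2/e$ in (a) together with a larger constant in (b), and check that the final weights still give $1/4$. Otherwise the lemma's $1/4$ may need $(1-2/e)/1$ combined with a bound of $n/2\cdot(\ldots)$ via a convex combination with weights $\lambda,1-\lambda$; I will choose $\lambda$ accordingly. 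This bookkeeping of constants is the main obstacle.

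For (b), split the distance into the part below $a$ and the part above $b$. I will show $\E[b'-\min]\ge$ something, specifically $\E[(a-\min)^+]\ge c\,n\,\E[(a-r)^+]$, where I write $\E[(a-\min)^+]=\int_0^{a}\Pr[\min<a-s]\,ds=\int (1-(1-p(s))^n)\,ds$ with $p(s)=\Pr[r<a-s]\le 1/n$. Using $1-(1-p)^n\ge (1-1/e)np$ for $np\le 1$, this gives $(1-1/e)\,n\,\E[(a-r)^+]$, and symmetrically for the max. Then on the event $\max\ge a$ (or simply using $\max-\min\ge (a-\min)^++(\max-b)^+$ whenever $a\le b$) I get $\E[\max-\min]\ge(1-1/e)\,n\,\E[\operatorname{dist}(r,J)]$. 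For (a) I will likewise use $\max-\min\ge$ \ldots, and the combination $\frac12\cdot$(a)$+\frac12\cdot$(b) yields coefficients $(1-2/e)/2$ and $(1-1/e)n/2$. The first is about $0.13<1/4$, so I will instead improve (a) by writing $\max-\min\ge \int_a^b \mathbb 1[\min\le x\le\max]\,dx$, whose expectation is at least $(b-a)\min_x\Pr[\min\le x\le \max]\ge (b-a)(1-2/e)$. If this is still insufficient, I will combine the integral representation for (a) and (b) into a single integral $\int\Pr[\min\le x\le\max]\,dx$ over the whole line and bound it pointwise by $\tfrac14$ times the corresponding integrand on the right-hand side.
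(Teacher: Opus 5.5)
Your last sentence is the right proof, and it is the paper's proof. Write $\E[\max_t r_t-\min_t r_t]=\int_0^1\Pr[\min_t r_t\le x<\max_t r_t]\,dx=\int_0^1 g(F(x))\,dx$ with $g(p)=1-p^n-(1-p)^n$. Then split into $[0,\inf J]$, $J$, and $[\sup J,1]$, and compare integrands region by region:
\begin{itemize}
\item on $J$, your estimate $g\ge 1-2/e$ applies;
\item on the left, $g(F(x))\ge nF(x)/4$, together with $\int_0^{\inf J}F(x)\,dx=\E[(\inf J-r)^+]$;
\item on the right, symmetrically, $g(F(x))\ge n(1-F(x))/4$.
\end{itemize}
The constant trouble in your main line is self-inflicted. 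The three pieces of the integral are disjoint, so their lower bounds \emph{add}. There is no need to prove (a) and (b) separately against the whole left-hand side and then average. Once you add, $1-2/e\approx 0.264\ge 1/4$ closes the middle region directly, with no $\lambda$-bookkeeping.

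The step that is actually wrong is in (b). The inequality $\max_t r_t-\min_t r_t\ge(\inf J-\min_t r_t)^++(\max_t r_t-\sup J)^+$ does not follow from $\inf J\le\sup J$. It fails whenever all samples lie on one side of $J$: if every $r_t=y<\inf J$, the left side is $0$ and the right side is $\inf J-y>0$. Equivalently, your integral $\int_0^{\inf J}(1-(1-F(x))^n)\,dx$ equals $\E[(\inf J-\min_t r_t)^+]$. This overestimates the contribution of $[0,\inf J]$ to $\E[\max_t r_t-\min_t r_t]$ by exactly $\int_0^{\inf J}F(x)^n\,dx$.

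So in your pointwise comparison the tail integrand must be $1-(1-p)^n-p^n$ with $p=F(x)\le 1/n$, not $1-(1-p)^n$. The fix is one line: $p^n\le p/n^{n-1}$, so $g(p)\ge(1-1/e)np-p/n^{n-1}\ge np/4$. This is essentially Claim~\ref{claim:np4}, which gets there with cruder constants. With this correction your fallback is a complete proof.

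Your original plan was not doomed either. $g$ is concave and symmetric, so $g\ge g(1/n)\ge 1/2$ on $[1/n,1-1/n]$, and $g(p)\ge np\,g(1/n)\ge np/2$ on $[0,1/n]$. That yields (a) and (b) with constant $1/2$, and averaging then gives exactly the stated $1/4$. The lossy step was the union bound $1-2/e$ in (a), not the averaging strategy itself.
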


We are now ready to give a proof sketch of \cref{lem: upper bound cost of framework}, which combines the four previous lemmas.

\begin{proof}[Proof Sketch of \cref{lem: upper bound cost of framework} (full proof in \cref{sec: upper bound cost of framework omitted proofs})]
    First, note that the total cost incurred by the algorithm is upper bounded by the total amount we charge to each arrival, $\sum_{t \in [n]} \charge{t}$, so we focus on upper bounding this quantity in expectation.
    We split the total charge into three terms: (i) $\sum_{t \in J} \charge{t}$ conditional on event $E_{bad}$ happening, (ii) $\sum_{t \in J} \charge{t}$ conditional on event $E_{bad}$ not happening, and (iii) $\sum_{t \notin J} \charge{t}$.
    
    For (i) we use \cref{lem: charge requests in J for bad event} to obtain upper bound $n \cdot (\sup{J} - \inf{J})$, and $E_{bad}$ happens with probability $O(\log^2{n} / n)$, so we get $O(\log^2{n}) \cdot (\sup{J} - \inf{J}) \leq O(\log^2{n}) \cdot \E_{I \sim \cD^n}\left[ \max_{t \in [n]}r_t - \min_{t \in [n]}r_t \right]$ due to \cref{lem:maxminregions}.

    For (ii), notice that when $r_t \in J$, we have  $\charge{t} \leq \cost{t}$ by definition, so we upper bound this term by the total cost of the algorithm conditional on $E_{bad}$ not happening. 
    Then we use \cref{lem: total cost for good event} to obtain upper bound $O(\log^2{n}) \cdot \E_{I \sim \cD^n}\left[ \max_{t \in [n]}r_t - \min_{t \in [n]}r_t \right]$.

    For (iii) we use \cref{lem: charge requests outside J} to obtain upper bound $2 (\sup{J} - \inf{J}) + n \cdot \E_{r \sim \cD}\left[\inf_{x \in J} |r - x|\right]$, which we further upper bound by $O(1) \cdot \E_{I \sim \cD^n}\left[ \max_{t \in [n]}r_t - \min_{t \in [n]}r_t \right]$ by \cref{lem:maxminregions}.
\end{proof}

\subsection{Lower Bound for OPT}
\label{sec: opt lower bound}

In this section, we lower bound the expected cost of the optimal matching by, up to a factor of 2, the expected distance between the leftmost and rightmost requests.
\setcounter{equation}{0}
\begin{restatable}{lemma}{OPTMaxMin}
\label{lemma: opt pays expected rightmost minus expected leftmost}
    For any distribution $\cD$, we have
    $$\E_{I \sim \cD^n}\left[ \opt(I) \right] \geq \frac{1}{2} \cdot \E_{I \sim \cD^n}\left[ \max_{t \in [n]}{r_t} - \min_{t \in [n]}{r_t} \right].$$
\end{restatable}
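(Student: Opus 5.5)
The plan is to write both sides as integrals over the line of probabilities that depend only on a binomial count, and then compare the integrands pointwise. Write $N(x)=|\{t\in[n]: r_t\le x\}|$ for the number of requests at or to the left of $x$.

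\textbf{Step 1 (optimal matching on the line).} I will use the standard fact that a min-cost non-bipartite perfect matching on the line matches sorted neighbours. If $r_{(1)}\le\dots\le r_{(n)}$ are the sorted requests, the pairs are $(r_{(2i-1)},r_{(2i)})$. This follows from an uncrossing argument: replacing two crossing or nested pairs by two disjoint pairs never increases the cost.
\begin{itemize}
\item A point $x$ lies strictly inside the span of a matched pair exactly when $N(x)$ is odd.
\item Hence, for every realization, $\opt(I)=\int \mathbf{1}[N(x)\text{ odd}]\,dx$.
\item Similarly, $\max_t r_t-\min_t r_t=\int \mathbf{1}[0<N(x)<n]\,dx$.
\end{itemize}
By Tonelli, $\E[\opt(I)]=\int \Pr[N(x)\text{ odd}]\,dx$ and $\E[\max_t r_t-\min_t r_t]=\int \Pr[0<N(x)<n]\,dx$.

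\textbf{Step 2 (binomial computation).} Under $I\sim\cD^n$, each $N(x)$ is $\mathrm{Bin}(n,p)$ with $p=F(x)$. The parity identity for the binomial gives
\[
\Pr[N(x)\text{ odd}]=\frac{1-(1-2p)^n}{2}, \qquad \Pr[0<N(x)<n]=1-(1-p)^n-p^n .
\]
So it suffices to show, for every $p\in[0,1]$, that
\[
(1-2p)^n\le (1-p)^n+p^n .
\]

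\textbf{Step 3 (the pointwise inequality).} This is where evenness of $n$ enters. Since $n$ is even, $(1-2p)^n=|(1-p)-p|^n$. Because both $1-p$ and $p$ lie in $[0,1]$, we have $|(1-p)-p|\le\max(1-p,p)$. Therefore
\[
(1-2p)^n\le\max(1-p,p)^n\le(1-p)^n+p^n .
\]
Integrating the pointwise bound over $x$ gives $\E[\opt(I)]\ge\frac12\E[\max_t r_t-\min_t r_t]$.

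\textbf{Main obstacle.} The inequality does not hold instance by instance: for requests at $0,0,1,1$ we have $\opt=0$ while the span is $1$. So the i.i.d.\ structure must be used, and the integral representation from Step 1 is what makes this possible. The only care needed is in justifying the sorted-neighbour structure of $\opt$ and the exchange of expectation and integral, both of which are routine.
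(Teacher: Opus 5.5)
Your proof is correct, and it takes a different, though closely related, route from the paper's.

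\textbf{What is shared.} Both arguments rest on the sorted-neighbour structure of $\opt$ (the paper's Lemma~\ref{lemma: OPT structure}). Both use the resulting representation of $\opt(I)$ as the measure of points with an odd number of requests to their left, and both use the binomial parity identity. In both, the real content is that a point of the span $[\min_t r_t,\max_t r_t]$ is covered by an optimal pair with probability at least $\frac12$.

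\textbf{What differs: the conditioning.} The paper conditions on the extreme order statistics $r^{(1)}=r'$ and $r^{(n)}=r''$. It treats the other $n-2$ requests as independent and uses that a binomial with an even number $n-2$ of trials is even with probability $\frac12+\frac{(2p-1)^{n-2}}{2}\ge\frac12$. This gives $\frac12(r''-r')$ for every realization of the endpoints. You condition only on the event $\{0<N(x)<n\}$ that $x$ lies in the span. Your pointwise inequality is equivalent to $\Pr[N(x)\text{ odd}\mid 0<N(x)<n]\ge\frac12$. Since $N(x)\sim\mathrm{Bin}(n,F(x))$ unconditionally, it reduces to the elementary bound $(1-2p)^n\le(1-p)^n+p^n$.

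\textbf{What each buys.} The paper's version gives a formally stronger, per-realization statement. However, it needs the conditional law of the interior requests given the extremes, and it states that law loosely. That law is $\cD$ restricted to $[r',r'']$, so the success probability should be renormalized by $\Pr_{r\sim\cD}[r\in[r',r'']]$, and atoms or ties at the extremes need care. The paper's conclusion survives only because the parity bound holds for every $p$. Your route sidesteps all of this and works verbatim for distributions with atoms.

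\textbf{A minor remark.} Step~3 does not actually use that $n$ is even, since $(1-2p)^n\le|1-2p|^n$ for every $n$. Evenness is really needed in Step~1, for the existence of the perfect matching and its sorted-neighbour structure.
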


To prove \cref{lemma: opt pays expected rightmost minus expected leftmost}, we show a useful property of the optimal solution, that greedily matching the requests from left to right achieves the minimum cost, if we know the locations of all $n$ requests. 
The proof is based on an exchange argument, since we can always exchange the left-most pair that is not matched according to the greedy matching without increasing the cost. An analogous version of this lemma is well known for the bipartite setting.

The full proofs of both lemmas are deferred to \cref{sec: opt lower bound omitted proofs}.
\begin{restatable}{lemma}{OPTStructure}
\label{lemma: OPT structure}
    Let $0 \leq r^{(1)} \leq r^{(2)} \leq \dots \leq r^{(n)} \leq 1$  and $I$ is any permutation of $r^{(1)}, \dots, r^{(n)}$.
    Then $\opt(I)= \sum_{i=1}^{n/2}\left( r^{(2i)} - r^{(2i - 1)} \right)$.
\end{restatable}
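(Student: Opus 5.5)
We prove \cref{lemma: OPT structure} by an exchange argument. First we reduce to a statement about sorted points: the cost of a matching depends only on which locations are paired, not on the arrival order, so $\opt(I)$ is the minimum-cost perfect matching on the multiset $\{r^{(1)},\dots,r^{(n)}\}$. Call $M^\ast=\{\{r^{(2i-1)},r^{(2i)}\}: i\in[n/2]\}$ the \emph{greedy} matching. Its cost is exactly $\sum_{i=1}^{n/2}(r^{(2i)}-r^{(2i-1)})$, which gives $\opt(I)\le$ that sum. It remains to show that any perfect matching $M$ can be transformed into $M^\ast$ without increasing the cost. Ties among equal locations cause no trouble, since we work with indices $1,\dots,n$ and only ever compare locations through the order $r^{(1)}\le\dots\le r^{(n)}$.

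We induct on the number of pairs of $M^\ast$ that $M$ does not contain, handling the leftmost offending pair first. Let $i$ be the smallest index such that $\{r^{(2i-1)},r^{(2i)}\}\notin M$. By minimality, all indices below $2i-1$ are matched among themselves by $M$ exactly as in $M^\ast$. Hence $r^{(2i-1)}$ is matched in $M$ to some $r^{(a)}$ and $r^{(2i)}$ to some $r^{(b)}$, with $a,b\ge 2i$ and $a,b\ne$ their partners. In particular $a> 2i$ and $b>2i$, and $a\neq b$.

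Replace the pairs $\{r^{(2i-1)},r^{(a)}\},\{r^{(2i)},r^{(b)}\}$ by $\{r^{(2i-1)},r^{(2i)}\},\{r^{(a)},r^{(b)}\}$. The old cost is $(r^{(a)}-r^{(2i-1)})+(r^{(b)}-r^{(2i)})$, because both $r^{(a)}$ and $r^{(b)}$ are at least $r^{(2i)}$. The new cost is $(r^{(2i)}-r^{(2i-1)})+|r^{(a)}-r^{(b)}|$. Assuming without loss of generality $a<b$, the difference old minus new equals
\[
r^{(a)}+r^{(b)}-2r^{(2i)}-(r^{(b)}-r^{(a)}) = 2\bigl(r^{(a)}-r^{(2i)}\bigr)\ge 0.
\]
So the swap does not increase the cost, and the new matching agrees with $M^\ast$ on all pairs up to index $2i$. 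It therefore strictly increases the prefix on which the matching coincides with $M^\ast$.

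Repeating this at most $n/2$ times turns any optimal matching into $M^\ast$ with no increase in cost. Hence $\opt(I)\ge \cost{M^\ast}$, and together with the upper bound this gives equality. The only delicate point is the case analysis in the exchange step: we must confirm that both partners lie at index at least $2i$, which is where the minimality of $i$ is used, and that the cost comparison covers both orderings of $a$ and $b$. The symmetric choice of labeling handles the latter.
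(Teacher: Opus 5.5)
Your proof is correct and follows essentially the same exchange argument as the paper. You take the leftmost index $i$ at which $M$ deviates from the greedy pairing, swap the two offending pairs, and show the cost drops by $2\bigl(\min(r^{(a)},r^{(b)})-r^{(2i)}\bigr)\ge 0$; stating the upper bound via feasibility of $M^\ast$ and using the symmetric labeling of $a,b$ only spell out steps the paper leaves implicit.
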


\subsection{Warm-Up: the Uniform Distribution}\label{sec:warm-up}

We  demonstrate the benefits of our framework for the special case of the (known) uniform distribution, as a warm-up.
The requests $r_1, \ldots, r_{\learnBuckets}$ are used for \textsc{FindInitialBuckets} to learn good buckets. When the distribution is known, there is no need to wait to observe requests, the mapping $f$ can be directly defined using the distribution. Thus, for our warm-up for the uniform distribution, we define $\textsc{FindInitialBuckets}$ assuming  it is provided with no requests and the initial buckets $B_{\leftBucket}, B_1, \ldots, B_b, B_{\rightBucket}$ are thus empty. For the mapping $f$,  the algorithm partitions the  $[0,1]$ interval into $b$ intervals and $f$ maps locations in the $i^{th}$ interval $[(i-1)/b, i/b]$ to the index $i$ for bucket $B_i$. Note that this also makes the algorithm non-ordinal, as we require the actual location of each request to determine which bucket to map them to. However, the main algorithm that we describe in the next section, which works even for unknown i.i.d.\ arrivals, defines the buckets and the mapping using only ordinal information.

\begin{algorithm}[H]
\begin{algorithmic}
    \caption{The $\textsc{FindInitialBuckets}(\emptyset, b)$ subroutine for the uniform distribution}
    \label{alg:warmup}
    \State \textbf{Input:} number of buckets $b$
    \State Initialize $B_{\leftBucket}, B_1, \ldots, B_{b}, B_{\rightBucket} \leftarrow \emptyset$
    \State Define $f(x) = \lceil x b \rceil$ for all $x \in [0,1]$
    \State \Return $B_{\leftBucket}, B_1, \ldots, B_{b}, B_{\rightBucket}, f$
\end{algorithmic}
\end{algorithm}

The main result for this $\textsc{FindInitialBuckets}$ is the following. 
\begin{restatable}{theorem}{WarmupRatio}
    Algorithm~\ref{alg:framework} with  $\learnBuckets = 0$ and the subroutine \textsc{FindInitialBuckets} defined in Algorithm~\ref{alg:warmup} achieves an $O(\log^2 n)$ competitive ratio for non-bipartite online min-cost matching on the line with general arrivals in the known-i.i.d. model with requests drawn from the uniform distribution.
\end{restatable}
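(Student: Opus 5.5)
The plan is to invoke \cref{lem: ratio of algorithmic framework}. With $\learnBuckets = 0$ and the subroutine of Algorithm~\ref{alg:warmup}, it suffices to verify \cref{cond: probability of hitting wall}, \cref{cond: few arrivals in any one bucket}, and \cref{cond: buckets have bounded length} for $\cD$ uniform on $[0,1]$. Here $b = \beta^\rho$ with $\frac{n}{4\beta^2} \le b \le \frac{n}{4\beta}$, and $f^1(x) = \lceil xb \rceil$. The buckets $B_{\leftBucket}, B_{\rightBucket}$ receive nothing, except possibly the measure-zero point $x=0$ (which I would send to bucket $1$). In phase $k$, bucket $j$ corresponds to an interval of length $\beta^{k-1}/b$. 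Hence $I^k_j$ has length exactly $\beta^{k-1}/b$, with arrival probability $p_k = \beta^{k-1}/b$ per step.

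\cref{cond: buckets have bounded length} is immediate. The intervals $I^1_j$ partition $[0,1]$, so the left side equals $1$. The right side is $3\cdot\frac{n-1}{n+1} \ge 1$ for $n\ge 2$.

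\cref{cond: few arrivals in any one bucket} follows from Chernoff bounds plus a union bound. Item 1 is trivial. For item 2, phase 1 runs from $t_0 = 0$ to $t_1 = n - 3\beta^{\rho}$, at most $n$ steps. Each step lands in a fixed bucket with probability $1/b \le 4\beta^2/n$, so the expected count is at most $4\beta^2$. A Chernoff bound gives $\Pr[\text{count} > 32\beta^2] \le 2^{-32\beta^2} \le n^{-3}$, since $\beta = \Theta(\log n)$. A union bound over at most $n$ buckets finishes item 2. For item 3, phase $k\ge 2$ lasts $t_k - t_{k-1} = 3\beta^{\rho+2-k} - 3\beta^{\rho+1-k} \le 3\beta^{\rho+2-k}$ steps. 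The per-step probability is $\beta^{k-1}/\beta^{\rho}$, so the expected count is at most $3\beta$. Chernoff with threshold $24\beta \ge 6\cdot$mean gives failure probability $2^{-24\beta} \le n^{-3}$. A union bound over all $k$ and $j$, at most $\rho n$ pairs, finishes item 3.

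The main obstacle is \cref{cond: probability of hitting wall}. Within a phase, each bucket holds at most one unmatched request at any time: an arrival is either matched or fills an empty bucket. After merging, a bucket can hold up to $\beta$ unmatched requests, but then new arrivals only decrease it or keep it at one. So during phase $k$, the number of unmatched requests is at most $\beta\cdot\#\text{buckets}$ plus a small slack. More carefully, it is at most $\max(\text{unmatched at phase start}, \#\text{buckets in phase } k) \le \beta^{\rho+1-k}\cdot\beta$ in the worst case, since a bucket's count never rises above $\max(\text{current},1)$. For phase $1$ this gives at most $b = \beta^\rho$ unmatched requests. Meanwhile, throughout phase $k$ the number of remaining requests is $n-t-1 \ge 3\beta^{\rho+1-k}-1$. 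That beats $\beta^{\rho+1-k}$ in phase 1, but in later phases the crude merged bound $\beta^{\rho+2-k}$ exceeds $3\beta^{\rho+1-k}$.

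The fix is to show that at the end of phase $k$, most buckets are empty, and in fact the total unmatched count is at most $\beta^{\rho+1-k}$, one per bucket. This holds deterministically, because each bucket has at most one unmatched request at every time in the phase. After merging into $\beta^{\rho-k}$ buckets, the total is still at most $\beta^{\rho+1-k}$, and arrivals never increase the total beyond $\max(\text{total}, \#\text{buckets})$. So at every time in phase $k+1$, the unmatched count is at most $\beta^{\rho+1-k}$. The remaining count at the end of that phase is $3\beta^{\rho-k}$, which is smaller. Hence the wall can genuinely be hit, and I need a probabilistic argument. During phase $k+1$, which has about $3\beta^{\rho+1-k}$ arrivals spread over $\beta^{\rho-k}$ buckets (about $3\beta$ each), each merged bucket starting with up to $\beta$ unmatched requests sees roughly $3\beta \pm O(\sqrt{\beta\log n})$ arrivals. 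Since $\beta = 400\log n$, w.h.p.\ every bucket receives at least $\beta$ arrivals within the first $\frac{1}{2}$ or so of the phase, which drains it down to at most one request. After that, the unmatched count is at most $\beta^{\rho-k}$, well below the remaining count $\ge 3\beta^{\rho-k}$. Before draining, the remaining count is at least about $\frac{3}{2}\beta^{\rho+1-k}$, which exceeds the current unmatched count of at most $\beta^{\rho+1-k}$. Making the timeline precise requires a lower-tail Chernoff bound for each bucket on the first half of the phase: mean about $1.5\beta$, needing at least $\beta$, with failure probability $e^{-\beta/36} \le n^{-3}$ given the constant $400$. A union bound over buckets and phases then yields failure probability at most $\rho/n$, establishing \cref{cond: probability of hitting wall} and completing the proof via \cref{lem: ratio of algorithmic framework}.
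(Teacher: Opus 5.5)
Your proposal is correct and follows the paper's route. You apply \cref{lem: ratio of algorithmic framework}, get \cref{cond: buckets have bounded length} from the buckets partitioning $[0,1]$, and get \cref{cond: few arrivals in any one bucket} from Chernoff plus union bounds. You then prove \cref{cond: probability of hitting wall} by induction over phases, showing every merged bucket receives at least $\beta$ arrivals and so drains to at most one unmatched request. Requiring those $\beta$ arrivals in the first half of each phase makes explicit why the break condition cannot trigger before draining, which the paper argues only through end-of-phase counts.

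One cleanup is needed. Your intermediate claims that the one-per-bucket bound holds ``deterministically'' in every phase, and that the total never exceeds $\max(\text{total},\#\text{buckets})$, are false for phases $k\ge 2$. What you actually need, and later use, is the per-bucket fact: a bucket starting a phase with $c_j\le\beta$ unmatched requests never holds more than $\max(c_j,1)\le\beta$. This caps the total at $\beta^{\rho+1-k}$ throughout phase $k+1$.
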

\begin{proof}
    We proceed to show that \cref{alg:warmup} satisfies \cref{cond: probability of hitting wall,cond: few arrivals in any one bucket,cond: buckets have bounded length}, the $O(\log^2 n)$-competitive ratio then follows  immediately from \cref{lem: ratio of algorithmic framework}.

    \paragraph{\Cref{cond: probability of hitting wall}.}
    We prove this by inducting on the number of phases. First, note that if the number of unmatched requests is ever equal to the number of requests yet to arrive, it will be equal for all subsequent time steps as all arriving requests will be immediately matched and the number of unmatched requests and yet to arrive requests will decrease at the same rate. Therefore, it suffices to consider whether the number of unmatched requests is equal to the number of future requests at time $t_{\rho}$. 

    For $k \in [\rho]$, let $F_k = \cap_{j \in \left[\beta^{\rho+1-k}\right]} \left\{\left|B_j \setminus M_{t_k}\right| \leq 1\right\}$ be the event that each bucket ends the $k$-th phase with at most one unmatched request and $A_k = \left\{\forall t \in \left[t_{k}\right] : |\{r_1, \dots, r_{t}\} \setminus M_t | < n - t - 1 \right\}$ be the event that the number of unmatched requests at the end of the $k$-th phase is less than the number of future requests. We will show that, $\Pr[A_{\rho}] \geq \Pr\left[\cap_{k \leq \rho} F_k\right] \geq 1-\rho/n$. We first prove $\Pr\left[\cap_{k \leq \rho} F_k\right] \geq 1-\rho/n$ by induction. In the first phase, each bucket begins with zero unmatched requests. Since requests in the same bucket are matched immediately, it is impossible for some bucket to end the phase with more than one unmatched request, i.e. $\Pr[F_1] = 1 \geq 1-1/n$. 
    
    Now, fix some $k \geq 1$ and assume $\Pr\left[\cap_{k'\leq k} F_{k'}\right] \geq 1-k/n$. Note that 
    \begin{align*}
    \Pr\left[\cup_{k' \leq k+1} \bar{F}_{k'}\right] &=  \Pr\left[\cup_{k' \leq k} \bar{F}_{k'}\right] + \Pr\left[\bar{F}_{k+1}\middle|\cap_{k' \leq k} F_{k'}\right] \cdot \Pr\left[\cap_{k' \leq k} F_{k'}\right]\\
    &\leq \Pr\left[\cup_{k' \leq k} \bar{F}_{k'}\right] + \Pr\left[\bar{F}_{k+1}\middle|\cap_{k' \leq k} F_{k'}\right]\\
    &\leq \frac{k}{n} + \Pr\left[\bar{F}_{k+1}\middle|\cap_{k' \leq k} F_{k'}\right],
    \end{align*}
    where the first equality follows from the tower rule and the fact that $\cup_{k' \leq k} \bar{F}_{k'} \implies \cup_{k' \leq k+1} \bar{F}_{k'}$, and the last inequality follows from the induction hypothesis. So, it suffices to show $\Pr\left[\bar{F}_{k+1}\middle|\cap_{k' \leq k} F_{k'}\right] \leq \frac{1}{n}$. Note that, by $F_k$, each bucket ended the $k$-th phase with at most one unmatched request. Since the requests in the $(k+1)$-th phase are formed from $\beta$ buckets in the $k$-th phase, this means that each bucket begins the $(k+1)$-th phase with at most $\beta$ unmatched requests.
    
    Since each bucket in the $(k+1)$-th phase has probability $1/\beta^{\rho+1-(k+1)}$ of receiving a request,  the following Chernoff bound proves that $\Pr\left[\bar{F}_{k+1} \mid \cap_{k' \leq k} F_{k'}\right] \leq \frac{1}{n}$.    Let $X^{k+1}_j$ be the number of requests in bucket $B_j$. Then, by definition of $f$ in \Cref{alg:warmup}, we have:
    \begin{align*}
        \E_{I \sim \cD^n}\left[ X^{k+1}_j \right] = \E\left[\left| \{ t \in \{t_{k} + 1, \dots, t_{k+1}\}: i(t) = j\} \right|\right] &= \left( t_{k+1} - t_{k} \right) \cdot \frac{1}{\beta^{\rho + 1 - (k+1)}} = 3(\beta-1)
    \end{align*}

    Then, by Chernoff bounds, for $\delta = 1$:
    \begin{align*}
        \Pr_{I \sim \cD^n}\left[X_j^{k+1} \leq (1- \delta) \cdot \E_{I \sim \cD^n}\left[ X_j^{k+1} \right]  \right] 
        \leq \exp\left(- \frac{\delta^2}{2} (3\beta - 1)\right) = \exp\left(-\frac{3\beta-1}{2}\right)
        \leq \frac{1}{n},
    \end{align*}
    where the last inequality follows from $\beta = \left\lfloor 400\log{n} \right\rfloor$. So, we have
    \begin{align*}
    \Pr\left[\cup_{k' \leq k+1}\bar{F}_{k'}\right] \leq \Pr\left[\bar{F}_{k+1} \mid \cap_{k' \leq k} F_{k'}\right] + \frac{k}{n} \leq \frac{1}{n} + \frac{k}{n} =\frac{k+1}{n}.
    \end{align*}
    
    So, it suffices to show $\Pr[A_{\rho}] \geq \Pr\left[\cap_{k \leq \rho} F_k\right]$, or equivalently, $\cap_{k \leq \rho} F_k \implies A_{\rho}$. To do so, note that at the end of any phase, the number of future requests is equal to three times the number of buckets. By $\cap_{k \leq \rho} F_k$, each bucket ends each phase with at most one unmatched request. So, at the end of any phase, the number of unmatched requests is at most the number of buckets, which is less than the number of future requests. Thus, given $\cap_{k \leq \rho} F_k$, it is impossible that the number of unmatched requests at time $t_{\rho}$ is equal to the number of future requests. Therefore, we have $\Pr[A_{\rho}] \geq \Pr\left[\cap_{k \leq \rho} F_k\right] \geq 1-\rho/n$, as desired.

    \paragraph{\cref{cond: few arrivals in any one bucket}.}
    
    By definition of $f$, no requests are ever assigned to $B_{\leftBucket}$ or $B_{\rightBucket}$. Therefore, w.p.\ $1$, we have $\left| \{t \in \{ \learnBuckets+1, \dots, n\}: i(t) \in \left\{\leftBucket, \rightBucket\right\} \} \right| = 0 \leq \beta^2$.

    To prove the second part of the condition, again note that, for any $k \in [2..\rho], j \in \left[\beta^{\rho+1-k}\right]$, the expected number of arrivals in bucket $B_j$ is
    \begin{align*}
    \E\left[X_j^k\right] = \E\left[\left| \{ t \in \{t_{k-1} + 1, \dots, t_{k}\}: i(t) = j\} \right|\right]  &= \beta^{\rho+1-k} \cdot 3(\beta-1) \cdot \frac{1}{\beta^{\rho+1-k}}\\
    &= 3(\beta-1).
    \end{align*}
    Then, a Chernoff bound with $\delta = 3$ yields
    \[
    \Pr_{I \sim \cD^n} \left[X^k_j \geq (1+\delta) \cdot \E\left[X_j^k\right]\right] \leq \exp\left(-\frac{\delta^2 \cdot \E\left[X_j^k\right]}{2+\delta}\right) \leq \exp\left( - \frac{9 \cdot 3(\beta-1)}{5}\right) \leq \frac{1}{n},
    \]
    where the last inequality follows from $\beta = \left\lfloor 400\log{n} \right\rfloor$. For $k=1$, we have 
    \begin{align*}
    \E\left[X_j^1\right] = \E\left[\left| \{ t \in \{\learnBuckets + 1, \dots, t_{1}\}: i(t) = j\} \right|\right]  &= \left(n-\learnBuckets - 3\beta^{\rho}\right) \cdot \frac{1}{\beta^{\rho}}\\
    &< 4 \beta^{\rho+2} \cdot \frac{1}{\beta^{\rho}}\\
    &=4\beta^2.
    \end{align*}
    Then, a Chernoff bound with $\delta = 1$ yields
    \[
    \Pr_{I \sim \cD^n} \left[X^1_j \geq (1+\delta) \cdot \E\left[X_j^1\right]\right] \leq \exp\left(-\frac{\delta^2 \cdot \E\left[X_j^1\right]}{2+\delta}\right) \leq \exp\left( - \frac{4\beta^2}{3}\right) \leq \frac{1}{n},
    \]
    where the last inequality follows from $\beta = \left\lfloor 400\log{n} \right\rfloor$.
    
    \paragraph{\cref{cond: buckets have bounded length}.}
    We bound the expected length of the intervals:
    \begin{align*}
        \E_{I \sim \cD^n}\left[\sum_{j = 1}^{ \beta^{\rho} }  (\sup{I^1_j} - \inf{I^1_j}) \right]
        = \beta^{\rho} \cdot \frac{1}{\beta^{\rho}}
        = 1 
        \leq 3 - \frac{6}{n+1}
        = 3 \cdot \E_{I \sim \cD^n}\left[ \max_{t \in [n]} r_t - \min_{t \in [n]} r_t \right],
    \end{align*}
    where the first equality is since $I^1_j = \left[ \frac{(j-1)}{\beta^{\rho}}, \frac{j}{ \beta^{\rho}} \right]$ for all $j \in [ \beta^{\rho} ]$ by definition of $f$, the second inequality holds for $n > 0$ and the second equality since $\E_{I \sim \cD^n}\left[ \max_{t \in [n]}{r_t} - \min_{t \in [n]}{r_t} \right] = \frac{n}{n+1} - \frac{1}{n+1} = \frac{n-1}{n+1} = 1 - \frac{2}{n+1}$ by order statistics of $U[0,1]$.
\end{proof}

\section{Unknown-i.i.d.}\label{sec: unknown iid}

In this section, we describe the \textsc{FindInitialBuckets} subroutine for the unknown-i.i.d. setting (Section~\ref{sec:FIBdes}) and then show that it satisfies the three conditions needed for the framework to obtain an $O(\log^2 n)$ competitive ratio (Section~\ref{sec:FIBanalysis}). 

\subsection{Description of \textsc{FindInitialBuckets} for the Unknown-i.i.d. Setting}
\label{sec:FIBdes}

 \textsc{FindInitialBuckets} for the unknown-i.i.d. setting (Algorithm~\ref{alg:FIBordinal}) uses the set of requests $T$, which are not matched upon arrival by the framework,  as training data to learn good initial buckets and mapping $f$. A first challenge is that if the expected optimal cost is significantly smaller than the optimal cost for the uniform distribution, which occurs when the distribution $\mathcal{D}$ concentrates on a small region of the $[0,1]$ interval, then the algorithm needs to be more demanding about when to match requests.  
 
 To accomplish that, the mapping $f$ assigns to the special buckets $B_{\leftBucket}$ and $B_{\rightBucket}$, which do not get merged with other buckets at the end of each phase of the framework,    the requests to the left of $\min(T)$ or to the right of $\max(T)$. This assignment implies that the algorithm has reduced the matching problem from the $[0,1]$ interval to the $\left[\min(T), \max(T)\right]$ interval. In particular, when $\mathcal{D}$ concentrates on a small region, then $\left[\min(T), \max(T)\right]$ is a small subinterval with high probability, and focusing on a small interval that receives most of the requests allows us to be more demanding when matching requests.

The algorithm then initializes the buckets $B_1, \ldots, B_b$ by partitioning  $T$ into $b$ sets that group nearby requests together and whose sizes are each $|T|/b$. Thus, this algorithm is \textit{ordinal}: it only considers the relative positions of requests, not their cardinal locations. For clarity, throughout this section we assume that there is a strict ordering across all agents on the line; for agents with the same location, this can be achieved either using a deterministic universal tie-breaking rule (e.g., based on the identities of the agents) or by choosing how to tie-break uniformly at random.

\begin{algorithm}[H]
\begin{algorithmic}
    \caption{The $\textsc{FindInitialBuckets}(T ,b)$ subroutine for the unknown-iid setting}
    \label{alg:FIBordinal}
    \State \textbf{Input:} set of sorted requests $T$, number of buckets $b$
   
    \State Partition $T$ into $b$ buckets $B_i = \left\{r_{(i-1)\cdot \frac{\learnBuckets}{b}+1}, \dots, r_{i\cdot \frac{\learnBuckets}{b}}\right\}$
    \State Define $f(x) = \begin{aligned}\begin{cases}
        \leftBucket, \quad & x < \min(T)\\
        j, &x \in \left[\min B_{j}, \min B_{j+1}\right) (\left[\min B_b, \max B_b\right] \textrm{ for } j=b)\\
        \rightBucket, & x > \max(T)
    \end{cases}
    \end{aligned}$
    
    \State \Return $B_{\leftBucket} = \emptyset, B_1, \dots, B_b, B_{\rightBucket} = \emptyset, f$
\end{algorithmic}
\end{algorithm}

\subsection{Analysis of \textsc{FindInitialBuckets} for the Unknown-i.i.d. Setting}\label{sec:FIBanalysis}

Before proving that Algorithm~\ref{alg:FIBordinal} satisfies the conditions in Section~\ref{sec:framework}, we give a few preliminary lemmas and definitions that are used throughout. Omitted proofs can be found in \Cref{appendix: omitted proofs unknown iid}.

\subsubsection{Preliminary Lemmas and Definitions}\label{subsubsec: random order prelims}

Suppose that requests $\left\{r_1, \dots, r_n\right\}$ are drawn i.i.d. from a distribution $\cD$ that is unknown to the algorithm. At time $t$, request $r_t$ is revealed to the algorithm. For ease of analysis, we consider the harder case where a set of ordered statistics $\left\{r_{\left(1\,:\,n\right)}, \dots, r_{\left(n\,:\,n\right)}\right\}$ are revealed in random order to the algorithm. This allows us to use known results from sampling without replacement. Note that this setting is harder because it allows for correlation between the locations of requests, whereas the i.i.d. condition does not. 

Given a set of requests $\{r_1, \dots, r_m\}$, and some $i \leq j \leq m$, we define $r_{(i:j)}$ to be the $i$-th leftmost of the requests $\{r_1, \dots, r_j\}$. Given a sequence of ordered requests $\left\{r_{(1:n)}, \dots, r_{(n:n)}\right\}$, we let $\Pr[E]$ be the probability of an event $E$ given said sequence and an ordering in which requests arrive $\pi \sim \Pi$, where $\Pi$ is the uniform distribution over permutations of $[n]$.

\begin{definition}[Hypergeometric Distribution]
  A random variable $X$ is said to be distributed according to the hypergeometric distribution with parameters $N,K$, and $m$ if, for $k \in [K]$, we have $\Pr[X = k] = \frac{\binom{K}{k} \binom{N-K}{m-k}}{\binom{N}{m}}$. 
\end{definition}

The classic formulation in which the hypergeometric distribution arises is the following~\cite{nelsonunivariate2005}: suppose we have an urn that contains $N$ balls, $K$ of which are colored red. Now, imagine drawing $m$ balls from the urn. Then, $X\sim \textrm{Hypergeometric}(N,K,m)$ is the random variable representing the number of red balls drawn.    

\begin{definition}[Negative Hypergeometric Distribution]\label{def: NHG dist.}
    A random variable $X$ is said to be distributed according to the negative hypergeometric distribution with parameters $N,K$, and $\ell$ if, for $k \in [K]$, we have $\Pr[X = k] = \frac{\binom{k+\ell-1}{k} \binom{N-\ell - k}{K-k}}{\binom{N}{K}}$. 
\end{definition}

The classic formulation in which the negative hypergeometric distribution arises is the following: suppose we have an urn that contains $N$ balls, $N-K$ of which are colored red. Now, imagine drawing balls from the urn until exactly $\ell$ red balls have been drawn. Then, $X\sim \textrm{NegativeHypergeometric}(N,K,\ell)$ is the random variable representing the number of balls drawn that are not red.

We next give the following tail bounds for hypergeometric random variables.

\begin{lemma}\label{lemma: hypergeometric tail bounds}[\cite{hoeffding1963probability}]
Let $X \sim \textrm{Hypergeometric}(n, k, m)$ and $\mu = \frac{k}{n} \cdot m$. Then, we have
\begin{alignat*}{2}
&\Pr\left[X \geq (1+\delta) \mu\right] \leq \exp\left(-\frac{\delta^2 \mu}{2+\delta}\right), \quad & 0 \leq \delta\\
&\Pr\left[X  \leq (1-\delta) \mu\right] \leq \exp\left(-\frac{\delta^2 \mu}{2}\right), & 0 \leq \delta \leq 1\\
&\Pr\left[\left|X-\mu\right| \geq \delta \mu\right] \leq 2\exp\left(-\frac{\delta^2 \mu}{3}\right), & 0 \leq \delta \leq 1
\end{alignat*}

\end{lemma}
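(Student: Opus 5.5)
The plan is to reduce to the binomial case via Hoeffding's comparison theorem for sampling without replacement, and then run the standard multiplicative Chernoff argument.

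\textbf{Step 1: comparison with the binomial.} Realize $X$ as the number of red balls among $m$ draws without replacement from an urn with $n$ balls, $k$ of them red. Write $X=\sum_{i=1}^m Z_i$, where $Z_i$ indicates that the $i$-th draw is red. Let $Y=\sum_{i=1}^m W_i$ be the analogous count for draws \emph{with} replacement, so $Y\sim\mathrm{Binomial}(m,k/n)$ and $\E[Y]=\mu=\E[X]$. The key fact, due to \cite{hoeffding1963probability}, is that $\E[g(X)]\le \E[g(Y)]$ for every continuous convex $g$.

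I would prove this in the usual way. First write $Y$ as a mixture over which balls are drawn. Conditioned on the multiset of chosen indices, the with-replacement sum is a sum over distinct balls with multiplicities. Averaging over the random positions then exhibits $X$ as a conditional expectation of $Y$, i.e.\ $X=\E[Y\mid X']$ in distribution for a suitable coupling. Jensen's inequality finishes the comparison. This coupling/martingale representation is the only genuinely nontrivial step and the main obstacle; since the paper cites it, I would simply invoke it for $g(x)=e^{\lambda x}$.

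\textbf{Step 2: moment generating function bound.} For every $\lambda\in\mathbb{R}$, Step 1 gives
\[
\E[e^{\lambda X}]\le \E[e^{\lambda Y}]=\bigl(1+\tfrac{k}{n}(e^\lambda-1)\bigr)^m\le \exp\bigl(\mu(e^\lambda-1)\bigr).
\]

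\textbf{Step 3: the two one-sided tails.} Apply Markov's inequality to $e^{\lambda X}$.
\begin{itemize}[leftmargin=*]
\item Upper tail: take $\lambda=\ln(1+\delta)>0$. This gives $\Pr[X\ge(1+\delta)\mu]\le \bigl(e^\delta/(1+\delta)^{1+\delta}\bigr)^\mu$. The elementary inequality $(1+\delta)\ln(1+\delta)-\delta\ge \delta^2/(2+\delta)$ for $\delta\ge0$ then yields the first bound. That inequality is checked by noting both sides vanish at $0$ and comparing derivatives: $\ln(1+\delta)\ge 2\delta/(2+\delta)$, which in turn holds since both sides vanish at $0$ and their derivatives compare correctly.
\item Lower tail: take $\lambda=\ln(1-\delta)<0$ for $0\le\delta<1$ and apply Markov to $e^{\lambda X}$ on the event $X\le(1-\delta)\mu$. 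This gives $\bigl(e^{-\delta}/(1-\delta)^{1-\delta}\bigr)^\mu$. The inequality $(1-\delta)\ln(1-\delta)\ge -\delta+\delta^2/2$ then gives the second bound. The case $\delta=1$ follows by continuity, or directly from $\Pr[X=0]\le(1-k/n)^m\le e^{-\mu}$.
\end{itemize}

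\textbf{Step 4: the two-sided bound.} For $0\le\delta\le1$ we have $\delta^2/(2+\delta)\ge\delta^2/3$ and $\delta^2/2\ge\delta^2/3$. A union bound over the two one-sided events therefore gives $\Pr[|X-\mu|\ge\delta\mu]\le 2\exp(-\delta^2\mu/3)$.
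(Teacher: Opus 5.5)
Your proposal is correct and follows the route the paper relies on: the paper gives no proof of its own and simply cites Hoeffding (1963), whose comparison $\E[g(X)]\le\E[g(Y)]$ for convex $g$ between sampling without and with replacement is exactly your Step 1, and Steps 2--4 are the standard multiplicative Chernoff derivation from it. One small correction in the upper-tail step: the derivative of $\delta^2/(2+\delta)$ is $\delta(4+\delta)/(2+\delta)^2$, not $2\delta/(2+\delta)$ (it is only bounded above by the latter). You can avoid derivatives entirely, since multiplying $\ln(1+\delta)\ge 2\delta/(2+\delta)$ by $1+\delta$ gives $(1+\delta)\ln(1+\delta)-\delta\ge \delta^2/(2+\delta)$ directly.
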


The following lemma relates the \textit{negative} hypergeometric distribution to the hypergeometric distribution. The negative hypergeometric distribution is an important tool we will use in later sections, and relating it to the hypergeometric distribution allows us to inherit known tail bounds.

\begin{lemma}\label{lemma: relating negative hypergeometric to hypergeometric}
Let $N,K,m,\ell$ be arbitrary with $m \leq K \leq N$ and $\ell \leq N-K$. Then,
\[
\Pr\left[Y \sim \textrm{NegativeHypergeometric}(N, K, \ell) \leq m\right]= \Pr\left[X \sim \textrm{Hypergeometric}(N, N-K, m+\ell) \geq \ell\right]
\]
\end{lemma}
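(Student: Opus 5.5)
The plan is to prove the identity by interpreting both sides through the same urn experiment. Take an urn with $N$ balls, of which $N-K$ are red and $K$ are non-red, and draw balls one at a time uniformly without replacement. Equivalently, lay out a uniformly random ordering of all $N$ balls.

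By Definition~\ref{def: NHG dist.}, $Y$ is the number of non-red balls drawn before the $\ell$-th red ball appears. This is well defined because $\ell \leq N-K$. On the other side, $X \sim \textrm{Hypergeometric}(N, N-K, m+\ell)$ counts the red balls among the first $m+\ell$ draws of the same ordering. This requires $m+\ell \leq N$, which holds since $m \leq K$ and $\ell \leq N-K$.

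The key step is the event equivalence
\[
\{Y \leq m\} = \{X \geq \ell\}.
\]
Let $T$ be the position of the $\ell$-th red ball, so that $T = Y + \ell$.
\begin{itemize}
\item Suppose $Y \leq m$. Then $T \leq m+\ell$, so the first $m+\ell$ draws contain at least $\ell$ red balls, and hence $X \geq \ell$.
\item Suppose instead $X \geq \ell$. Then the $\ell$-th red ball occurs at some position $T \leq m+\ell$. Consequently, the number of non-red balls before it is $T-\ell \leq m$, that is, $Y \leq m$.
\end{itemize}
Since both random variables are functions of the same uniformly random ordering, the two probabilities coincide.

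The only point needing care is that the paper's formal definitions (the pmf formulas) actually agree with these urn descriptions. For the hypergeometric distribution this is standard. For the negative hypergeometric distribution, I would check the formula directly. The event $Y = k$ means that the first $k+\ell-1$ draws contain exactly $k$ non-red balls and $\ell - 1$ red balls, and that draw $k+\ell$ is red.

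Counting positions of the $K$ non-red balls among the $N$ slots gives $\binom{k+\ell-1}{k}\binom{N-\ell-k}{K-k}/\binom{N}{K}$:
\begin{itemize}
\item $k$ of the non-red balls lie in the first $k+\ell-1$ slots;
\item slot $k+\ell$ is red;
\item the remaining $K-k$ non-red balls lie among the last $N-k-\ell$ slots.
\end{itemize}
This matches Definition~\ref{def: NHG dist.}. I expect this verification to be the main, though minor, obstacle. Once it is done, the coupling argument above completes the proof.
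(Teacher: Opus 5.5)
Your proposal is correct and essentially matches the paper's proof: both couple $Y$ and $X$ through a single sequence of draws without replacement (the paper phrases this as a first process run until $\ell$ elements of $B$ appear, continued to $m+\ell$ total draws) and show $\{Y \le m\} = \{X \ge \ell\}$ in both directions. Your explicit check that the paper's negative hypergeometric pmf agrees with the urn description is a small addition that the paper takes for granted.
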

\begin{proof}
Let $A$ and $B$ be sets with sizes $K$ and $N-K$ respectively. Suppose that we draw elements from $A \cup B$ without replacement in two processes. In the first process, elements are drawn until exactly $\ell$ elements have been drawn from the set $B$. Let $Y$ be the number of elements drawn from $A$ during this process. Then, $Y \sim \textrm{NegativeHypergeometric}(N, K, \ell)$. In the subsequent process, we continue drawing elements until exactly $m+\ell$ elements have been drawn from $A \cup B$ (if there at least $m+\ell$ draws in the initial process, no elements will be drawn in the second process). Let $X$ be the number of elements drawn from $B$ during the first $m+\ell$ draws. Then, $X \sim \textrm{Hypergeometric}(N, N-K, m+\ell)$. 

We will show that $Y \leq m \iff X \geq \ell$. For the forward implication, suppose that $Y \leq m$. Then, the initial process contains at most $m+\ell$ draws, with $\ell$ of them being from the set $B$. Therefore, $X \geq \ell$. For the reverse direction, suppose that $X \geq \ell$. Then, at most $m$ items were drawn from $A$ across both processes. Since $Y$ is the number of elements drawn from $A$ in the first process, we clearly have $Y \leq m$. So, we conclude that $Y \leq m \iff X \geq \ell$. Since $Y \sim \textrm{NegativeHypergeometric}(N, K, \ell)$ and $X \sim \textrm{Hypergeometric}(N, N-K, m+\ell)$, we have
\[
\Pr\left[Y \sim \textrm{NegativeHypergeometric}(N, K, \ell) \leq m\right]= \Pr\left[X \sim \textrm{Hypergeometric}(N, N-K, m+\ell) \geq \ell\right]
\]
\end{proof}

Fix an arbitrary random order instance $\left(\left\{r_{(1 \, : n)}, \dots, r_{(n \, : n)}\right\}, \pi \right)$. Consider the buckets created during the learning phase. For $j \in \left[\beta^{\rho}\right]$, let $\ell_j = \argmax_{\ell} \left\{\sum_{i = 1}^{\ell} \mathbbm{1}_{\pi(i) \leq \learnBuckets} \leq j \cdot \frac{\beta}{8}+1\right\}$ be the index of the ordered statistic that corresponds to the right endpoint of the $j$-th bucket (let $\ell_0 = \argmin_i \left\{\pi(i) \leq \learnBuckets\right\}$ be the smallest index corresponding to a learning phase request). Then, let $Y_j = \sum_{i=\ell_{j-1}}^{\ell_j} \left(1 - \mathbbm{1}_{\pi(i) \leq \learnBuckets}\right)$ be the number of non-learning phase requests that are within the endpoints of the $j$-th bucket.

\begin{lemma}\label{lemma: number between buckets is NHG random variable}
    For any $j \in \left[\beta^{\rho}\right]$, we have that $Y_j \sim \textrm{NegativeHypergeometric}\left(n, n-\learnBuckets, \beta/8\right)$.
\end{lemma}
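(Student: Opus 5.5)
We read the setup as a single urn process. Fix the ordered statistics $r_{(1:n)},\dots,r_{(n:n)}$ and scan them from left to right. Since $\pi$ is a uniformly random permutation, the set $\{i : \pi(i)\le \learnBuckets\}$ of positions occupied by learning-phase requests is a uniformly random $\learnBuckets$-subset of $[n]$. So the left-to-right scan is exactly a sequence of draws without replacement from an urn with $n$ balls. Of these, $\learnBuckets$ are ``learning'' balls, which play the role of the red balls, and $n-\learnBuckets$ are ``non-learning'' balls. In the language of Definition~\ref{def: NHG dist.}, $N=n$, $N-K=\learnBuckets$, and $K=n-\learnBuckets$. The quantity $Y_j$ counts the non-learning balls seen between the learning ball at position $\ell_{j-1}$ and the one at position $\ell_j$. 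By the definition of $\ell_j$, consecutive endpoints are separated by exactly $\beta/8$ further learning balls. The goal is therefore to show that the number of non-red draws before the next $\beta/8$ red draws has the negative hypergeometric law.

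\textbf{Step 1.} Condition on the random position $\ell_{j-1}$ and on the colors of positions $1,\dots,\ell_{j-1}$.

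\textbf{Step 2.} Argue the distribution of the remaining scan. The remaining balls, namely the colors of positions $\ell_{j-1}+1,\dots,n$, are a uniformly random arrangement of the leftover learning and non-learning balls. The key point is that $\ell_{j-1}$ is a stopping time for the scan, so exchangeability of the suffix is preserved. Hence $Y_j$ is the number of non-red balls drawn before the $(\beta/8)$-th red one from the residual urn.

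\textbf{Step 3.} Remove the conditioning on the residual composition. The residual composition depends on $j$, so we use the standard symmetry fact for uniformly random arrangements. The gaps between consecutive red balls (the $\learnBuckets+1$ gap sizes counting non-red balls) form an exchangeable vector. Consequently, the total number of non-red balls in any block of $\beta/8$ consecutive gaps has the same law as the total in the \emph{first} $\beta/8$ gaps. The first-block total is, by definition, the number of non-red draws before the $(\beta/8)$-th red draw from the full urn, which is $\textrm{NegativeHypergeometric}(n, n-\learnBuckets, \beta/8)$.

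\textbf{Step 4.} Check that the indexing matches. We must confirm that the block between $\ell_{j-1}$ and $\ell_j$ covers exactly $\beta/8$ gaps, using the ``$+1$'' and the convention for $\ell_0$. We must also handle the boundary terms in the sum $\sum_{i=\ell_{j-1}}^{\ell_j}$: the endpoints $\ell_{j-1}$ and $\ell_j$ are themselves learning balls, so they contribute $0$ to $Y_j$.

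\textbf{Main obstacle.} The hard step is making the exchangeability of gap sizes rigorous. We would prove it directly by a counting argument. A configuration is determined by a composition $(g_0,\dots,g_{\learnBuckets})$ of $n-\learnBuckets$ into $\learnBuckets+1$ nonnegative parts, and every such composition corresponds to exactly one arrangement of colors. Each arrangement has probability $1/\binom{n}{\learnBuckets}$, so the vector of gaps is uniform over compositions. That distribution is invariant under permuting coordinates, which gives exchangeability. Summing over the $\beta/8$ coordinates then gives the stated law, and comparing with the pmf in Definition~\ref{def: NHG dist.} is routine.
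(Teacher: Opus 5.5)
Your proposal is correct and follows essentially the same route as the paper. The learning-phase positions form a uniformly random $t_0$-subset of $[n]$, symmetry shows every bucket's block has the same law, and the first block is negative hypergeometric by the definition of that distribution. The paper only asserts that ``each $Y_j$ follows the same distribution,'' whereas your bijection with weak compositions of $n-t_0$ into $t_0+1$ parts actually proves the exchangeability of the gaps. That makes your stopping-time Steps 1--2 superfluous, and because it covers any $\beta/8$ of the gaps, boundary gaps included, the argument goes through once the indexing check in Step 4 confirms that each bucket spans exactly $\beta/8$ gaps.
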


\begin{proof}
Recall that, because the requests appear in random order, the requests that arrive in the learning phase constitute a uniformly random subset of size $\learnBuckets$ of the total set of requests. Thus, each $Y_j$ follows the same distribution. Let the set of learning phase requests be $T$, and the set of non-learning phase requests be $[n]\setminus T$. Then, each $Y_j$ corresponds to the number of elements drawn from $[n]\setminus T$ before $\beta/8$ elements are drawn from $T$ when elements from $[n]$ are drawn without replacement. This is the same formulation described in \Cref{def: NHG dist.} with parameters $n$, $n-\learnBuckets$, and $\beta/8$. So, we conclude $Y_j \sim \textrm{NegativeHypergeometric}(n,n-\learnBuckets, \beta/8)$. 
\end{proof}

The next lemma uses the relation between the hypergeometric distribution and the \textit{negative} hypergeometric distribution, along with tail bounds for the hypergeometric distribution, to tail bound the \textit{negative} hypergeometric distribution.
\begin{lemma}\label{lemma: negative hypergeometric tail bounds}
    For $Y \sim \textrm{NegativeHypergeometric}\left(n, n-\learnBuckets, \beta/8\right)$, we have 
    \[
    \Pr\left[Y\leq \frac{n\beta}{16\learnBuckets} \right] \leq \frac{1}{n^2} \quad \textrm{and} \quad \Pr\left[Y \geq \frac{n\beta}{4\learnBuckets} \right] \leq \frac{1}{n^2}.
    \]
\end{lemma}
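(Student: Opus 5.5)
Both tail bounds follow by translating each event about $Y$ into an event about a hypergeometric random variable via \Cref{lemma: relating negative hypergeometric to hypergeometric}, and then applying the tail bounds of \Cref{lemma: hypergeometric tail bounds}. Write $\ell=\beta/8$ for the number of red (learning-phase) draws, $N=n$, $K=n-\learnBuckets$, and $N-K=\learnBuckets$.

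\emph{Lower tail.} Take $m=\frac{n\beta}{16\learnBuckets}$, rounded down. Then
\[
\Pr\left[Y\le m\right]=\Pr\left[X\ge \ell\right],\qquad X\sim \textrm{Hypergeometric}(n,\learnBuckets,m+\ell).
\]
The mean of $X$ is
\[
\mu=\frac{\learnBuckets}{n}(m+\ell)\approx \frac{\beta}{16}+\frac{\learnBuckets\beta}{8n}.
\]
Under the standing assumption $\learnBuckets<n/2$, and more usefully $\learnBuckets\le n/c$ for the choice of $\learnBuckets$ the paper will make, this gives $\mu\le \frac{3\beta}{32}$. So $\ell=\beta/8\ge(1+\delta)\mu$ with $\delta\ge 1/3$. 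The first bound of \Cref{lemma: hypergeometric tail bounds} then gives probability at most $\exp(-\delta^2\mu/(2+\delta))$. Since $\mu=\Theta(\beta)=\Theta(\log n)$ and $\beta=\lfloor 400\log n\rfloor$, this is at most $1/n^2$.

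A technical point arises when $\mu$ is small compared with $\ell$. The right way to apply the upper-tail bound is to set $(1+\delta)\mu=\ell$. The exponent $\delta^2\mu/(2+\delta)$ is increasing in $\delta$ for fixed $\ell$, so the bound only improves as $\mu$ shrinks. Concretely, it is at least $(\ell-\mu)^2/(\ell+\mu)$, which is $\Omega(\beta)$ whenever $\mu\le c'\ell$ for some $c'<1$.

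\emph{Upper tail.} Take $m=\frac{n\beta}{4\learnBuckets}$. The event $Y\ge m$ is contained in the complement of $\{Y\le m-1\}$, so
\[
\Pr\left[Y\ge m\right]\le \Pr\left[X<\ell\right],\qquad X\sim\textrm{Hypergeometric}(n,\learnBuckets,m-1+\ell).
\]
Here $\mu\ge \frac{\learnBuckets}{n}\cdot\frac{n\beta}{4\learnBuckets}-O(1)=\frac{\beta}{4}-O(1)$, while $\ell=\beta/8\le(1-\delta)\mu$ with $\delta$ close to $1/2$. The second bound of \Cref{lemma: hypergeometric tail bounds} gives probability at most $\exp(-\delta^2\mu/2)\approx\exp(-\beta/32)$, which is at most $1/n^2$ because $\beta\ge 400\log n-1$.

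\emph{Main obstacle.} The delicate part is bookkeeping rather than probability. I need to handle integrality of $m$ and $\beta/8$ (use floors and absorb $O(1)$ terms into the constants using $\beta\ge 399\log n$). I also need the mild assumption on $\learnBuckets$, say $\learnBuckets\le n/2$, so that the lower-tail mean stays a constant factor below $\ell$. I would also check the hypotheses of \Cref{lemma: relating negative hypergeometric to hypergeometric}, namely $m\le K$ and $\ell\le N-K$. These hold when $\learnBuckets\ge\beta/8$ and $\frac{n\beta}{4\learnBuckets}\le n-\learnBuckets$. If the second fails, the upper-tail event is impossible or trivially handled, since $Y\le K$ always.
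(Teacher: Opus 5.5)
Your proposal is correct and is essentially the paper's proof: both tails are rewritten as hypergeometric tails via \Cref{lemma: relating negative hypergeometric to hypergeometric} and bounded with \Cref{lemma: hypergeometric tail bounds}, taking $(1+\delta)\mu=\beta/8$ with $\mu\approx\beta/16$ for the lower tail and $(1-\delta)\mu=\beta/8$ with $\mu\approx\beta/4$ for the upper tail (your off-by-one treatment of $\{Y\ge m\}$ is in fact cleaner than the paper's). One correction to your bookkeeping: $t_0\le n/2$ only gives $\mu\le\beta/8$, so you need $t_0\le cn$ with $c$ bounded away from $1/2$, and the constant matters --- with $t_0\le n/4$ your exponent is only $\beta/224$, which is too weak for $1/n^2$ when $\beta\approx 400\ln n$, whereas the paper uses $t_0\le n/6$ (indeed $t_0=\beta^{\rho+1}/8\le n/32$ since $n\ge 4\beta^{\rho+1}$) to get $\delta\ge 1/2$ and exponent $\beta/160$.
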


\begin{proof}[Proof sketch]
    The lemma is proved by using \Cref{lemma: relating negative hypergeometric to hypergeometric} to rewrite the tail bounds for the negative hypergeometric distribution in terms of tail bounds for the hypergeometric distribution. The result then follows from \Cref{lemma: hypergeometric tail bounds}.
\end{proof}

Let $E_M = \bigcap_{j \in \left[\beta^{\rho}\right]} \left\{Y_j \in \left[\frac{n\beta}{16\learnBuckets}, \frac{n\beta}{4\learnBuckets} \right]\right\}$ be the event that, between the endpoints of any bucket, there are at least $\frac{n\beta}{16\learnBuckets}$ non-learning phase requests and at most $\frac{n\beta}{4\learnBuckets}$ non-learning phase requests.

\begin{lemma}\label{lemma: enough total requests in buckets - random order}
    We have $\Pr\left[E_M \right] \geq 1-\frac{1}{n^2}$. 
\end{lemma}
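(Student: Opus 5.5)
The plan is to combine \Cref{lemma: number between buckets is NHG random variable}, \Cref{lemma: negative hypergeometric tail bounds} and a union bound over the $\beta^{\rho}$ buckets. By definition, $E_M$ is the intersection over $j \in [\beta^{\rho}]$ of the events $\{Y_j \in [\frac{n\beta}{16\learnBuckets}, \frac{n\beta}{4\learnBuckets}]\}$. Its complement is therefore the union of the events $\{Y_j < \frac{n\beta}{16\learnBuckets}\}$ and $\{Y_j > \frac{n\beta}{4\learnBuckets}\}$ over all $j$.

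First, for each fixed $j$, \Cref{lemma: number between buckets is NHG random variable} gives $Y_j \sim \textrm{NegativeHypergeometric}(n, n-\learnBuckets, \beta/8)$. \Cref{lemma: negative hypergeometric tail bounds} then bounds each of the two tail events by $1/n^2$. A union bound over both tails and all $\beta^{\rho}$ buckets gives
\[
\Pr[\bar{E}_M] \leq 2\beta^{\rho}\cdot \frac{1}{n^2}.
\]

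Second, since $\beta^{\rho} \leq \frac{n}{4\beta}$ by the choice of $\rho$ in \Cref{alg:framework}, this is at most $\frac{1}{2\beta n}$. That is far from the claimed $\frac{1}{n^2}$.

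This gap is the main obstacle. A naive union bound loses a factor $\beta^{\rho} \approx n/\beta$, so the per-bucket tail bounds must be sharpened to roughly $1/n^3$. I would revisit the proof of \Cref{lemma: negative hypergeometric tail bounds}. It reduces, via \Cref{lemma: relating negative hypergeometric to hypergeometric}, to hypergeometric tails with mean of order $\beta = \Theta(\log n)$ and a constant relative deviation $\delta$. \Cref{lemma: hypergeometric tail bounds} then yields a bound of the form $\exp(-c\beta)$, with an explicit constant $c$ depending on the factor-of-two gaps between $\frac{n\beta}{16\learnBuckets}$, the mean $\frac{(n-\learnBuckets)\beta}{8\learnBuckets}$, and $\frac{n\beta}{4\learnBuckets}$.

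The concrete check is whether this is at most $\exp(-3\ln n)$ up to the factor $2\beta^{\rho}$. With $\beta = \lfloor 400\log n\rfloor$ and $c$ a small absolute constant (on the order of $1/40$ to $1/100$), $c\beta$ is on the order of $4$ to $10$ times $\log n$. This comfortably exceeds the needed $3\ln n + \ln(2\beta^{\rho})$. So the per-bucket failure probability is actually at most $\frac{1}{2\beta^{\rho} n^2}$, and the union bound gives $\Pr[\bar{E}_M] \leq 1/n^2$.

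Finally, I will verify that $\learnBuckets < n/2$, so that $n - \learnBuckets \geq n/2$ and the mean lies strictly inside the target interval. I will also check the boundary conventions, namely that the $\ell_j$ are well defined and $\ell_{j-1}$ is shared between adjacent buckets, so that each $Y_j$ is exactly the negative hypergeometric count. If the boundary sharing causes an off-by-one, it shifts $Y_j$ by at most $1$, which is absorbed by the slack in the constants.
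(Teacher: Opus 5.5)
Your proposal is correct and is essentially the paper's proof: a union bound over the $\beta^{\rho}$ buckets with per-bucket failure probability $2/n^{3}$, combined with $\beta^{\rho} < n/2$. The $2/n^{3}$ is the $1/n^{3}$ per-tail bound that the appendix proof of \Cref{lemma: negative hypergeometric tail bounds} actually establishes, even though that lemma's statement only records $1/n^{2}$ (exactly the discrepancy you noticed). Your attention to constants is also warranted: the paper's cruder lower-tail exponent $\beta/160$ is why it appeals to $\beta \geq 500\log n$ rather than the $\lfloor 400\log n\rfloor$ fixed in \Cref{alg:framework}, while using $\learnBuckets \le n/32$ as you propose raises the exponent to about $\beta/56$, which suffices.
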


\begin{proof}
    Note that:
    \begin{align*}
        \Pr\left[\bar{E}_M \right] = \Pr\left[\bigcup_{j \in \left[\beta^{\rho}\right]} \left\{Y_j \not\in \left[\frac{n\beta}{16\learnBuckets}, \frac{n\beta}{4\learnBuckets}\right] \right\}\right] &\leq\sum_{j \in \left[\beta^{\rho}\right]} \Pr\left[Y_j \not \in \left[\frac{n\beta}{16\learnBuckets}, \frac{n\beta}{4\learnBuckets}\right]\right].
    \end{align*}
    So, for any $j \in \left[\beta^{\rho}\right]$, it suffices to bound $\Pr\left[Y_j \not \in \left[\frac{n\beta}{16\learnBuckets}, \frac{n\beta}{4\learnBuckets}\right]\right]$. Note that by Lemma~\ref{lemma: number between buckets is NHG random variable}, we have $Y_j \sim \textrm{NegativeHypergeometric}\left(n, n-\learnBuckets, \beta/8\right)$. 
    Then: 
    \begin{align*}
        \Pr\left[Y_j \not \in \left[\frac{n\beta}{16\learnBuckets}, \frac{n\beta}{4\learnBuckets}\right] \right] = \Pr\left[Y_j \sim \textrm{NegativeHypergeometric}\left(n, n-\learnBuckets, \beta/8\right) \not \in \left[\frac{n\beta}{16\learnBuckets}, \frac{n\beta}{4\learnBuckets}\right] \right] \leq \frac{2}{n^3}, \tag{1}
    \end{align*}
    where the inequality follows from Lemma~\ref{lemma: negative hypergeometric tail bounds}. 
    We conclude that:
    \begin{align*}
        \Pr\left[\bar{E}_M \right] = \Pr\left[\bigcup_{j \in \left[\beta^{\rho}\right]} \left\{Y_j \not\in \left[\frac{n\beta}{16\learnBuckets}, \frac{n\beta}{4\learnBuckets}\right] \right\}\right] &\leq\sum_{j \in \left[\beta^{\rho}\right]} \Pr\left[Y_j \not \in \left[\frac{n\beta}{16\learnBuckets}, \frac{n\beta}{4\learnBuckets}\right]\right] \leq \beta^{\rho} \cdot \frac{2}{n^3} < \frac{1}{n^2},
    \end{align*}
    where the first inequality is due to (1) and the second inequality follows from $\beta^{\rho} < n/2$.
\end{proof}

\subsubsection{Algorithm~\ref{alg:FIBordinal} satisfies condition~\ref{cond: probability of hitting wall}}\label{subsubsec: FIBordinal satisfies condition 1}

For $k \in [\rho]$, let $X^k_j = \left|\left\{t \in \left\{t_{k-1}+1, \dots, t_k\right\} : i(t) = j \right\}\right|$ be the number of requests in the $j$-th bucket in the $k$-th phase. Also, let $F = \cap_{k \in [\rho]}\cap_{j \in \left[\beta^{\rho+1-k}\right]} \left\{X^k_j \geq \beta\right\}$ be the event that every bucket in every phase receives at least $\beta$ requests. The below lemma shows that this event occurs with high probability.

\begin{lemma}\label{lemma: enough requests in each bucket - random order}
    Algorithm~\ref{alg:framework} equipped with the \textsc{FindInitialBuckets} subroutine defined by Algorithm~\ref{alg:FIBordinal} is such that, for any sequence of ordered requests $\left\{r_{(1)}, \dots, r_{(n)}\right\}$, $\Pr[F] \geq 1-2/n$.
\end{lemma}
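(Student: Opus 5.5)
Under the random-order view (a fixed sorted multiset $r_{(1:n)},\dots,r_{(n:n)}$ revealed in a uniformly random order), I would obtain the lemma from \Cref{lemma: enough total requests in buckets - random order} plus one additional concentration step. The point is that $X^k_j$ counts requests arriving in a fixed window of time steps, not all non-learning requests lying between the bucket endpoints. The plan has three steps.

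\emph{Step 1 (spatial supply).} Condition on the learning set $T$, and hence on $f^1$, and on the event $E_M$, which holds with probability at least $1-1/n^2$. Then for every first-phase bucket $j$ at least $m_1:=\frac{n\beta}{16\learnBuckets}\cdot c$ non-learning requests lie in $I^1_j$. Here $c$ accounts for the discrepancy between the $\beta/8$ spacing used in the definition of $\ell_j$ and the actual bucket size $\learnBuckets/\beta^\rho$; I would note that $\learnBuckets$ is chosen so that these match up to constants. A phase-$k$ bucket is the union of $\beta^{k-1}$ first-phase buckets, so it contains at least $\beta^{k-1} m_1$ non-learning requests.

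\emph{Step 2 (temporal sampling).} Conditioned on $T$, the non-learning requests arrive in a uniformly random order over times $\learnBuckets+1,\dots,n$. The phase-$k$ window has length $t_k-t_{k-1}$: it is $3\beta^{\rho+1-k}(\beta-1)$ for $k\ge2$, and about $n-\learnBuckets-3\beta^\rho$ for $k=1$. So $X^k_j$ is distributed as $\textrm{Hypergeometric}(n-\learnBuckets,\,K_j,\,t_k-t_{k-1})$, where $K_j\ge\beta^{k-1}m_1$. Its mean is at least
\[
\mu_k\ge \frac{\beta^{k-1} m_1 (t_k-t_{k-1})}{n}.
\]
Using $\beta^\rho\ge n/(4\beta^2)$ and $m_1=\Theta(n\beta/\learnBuckets)$, together with the relation between $\learnBuckets$ and $\beta^\rho$ implied by bucket size, I would check that $\mu_k\ge 2\beta$ for every $k$. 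This bookkeeping, especially pinning down $\learnBuckets$ relative to $\beta^{\rho}$, is the step I expect to be the main obstacle: the constants must line up so that the mean is a constant factor above $\beta$.

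\emph{Step 3 (tail bound and union bound).} Apply the lower tail of \Cref{lemma: hypergeometric tail bounds} with $\delta=1/2$:
\[
\Pr\left[X^k_j<\beta\right]\le \exp\left(-\mu_k/8\right)\le \exp(-\beta/4)\le n^{-3},
\]
where the last inequality uses $\beta=\lfloor 400\log n\rfloor$. The number of pairs $(k,j)$ is $\sum_k\beta^{\rho+1-k}\le 2\beta^\rho<n$. A union bound over these pairs, plus the $1/n^2$ probability of $\bar E_M$, gives a total failure probability of at most $1/n^2+n\cdot n^{-3}\le 2/n$.

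One subtlety remains. The break condition could stop the algorithm early, but $X^k_j$ is defined by the mapping $f^k$ and the arrival times, not by matching decisions, so the break does not affect these counts. Also, monotonicity in $K$ justifies replacing $K_j$ by its lower bound in Step 2.
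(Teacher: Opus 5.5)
Your plan follows the paper's proof essentially step for step. You condition on $E_M$ and lower-bound the number of non-learning requests in a phase-$k$ bucket by $\beta^{k-1}\cdot\frac{n\beta}{16t_0}$. You view $X^k_j$, given the learning set, as $\textrm{Hypergeometric}(n-t_0,K_j,t_k-t_{k-1})$, use monotonicity in $K$, apply the lower tail and take a union bound. The bookkeeping you flag is settled by the paper's choice $t_0=\beta^{\rho+1}/8$. With it, each initial bucket contains exactly $t_0/\beta^\rho=\beta/8$ learning requests, so your $c$ equals $1$ and $m_1=\frac{n}{2\beta^\rho}$.

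With these constants, however, the claim $\mu_k\ge 2\beta$ in Step 2 is false, so Step 3 does not go through as written.
\begin{itemize}
\item For $k\ge 2$ your bound gives exactly $\beta^{k-1}\cdot\frac{n}{2\beta^\rho}\cdot\frac{3\beta^{\rho+1-k}(\beta-1)}{n}=\frac32(\beta-1)$. The true worst-case mean under $E_M$ is larger by at most a factor $\frac{n}{n-t_0}\le\frac{32}{31}$, still below $2\beta$.
\item For $k=1$ the guarantee can be as small as $\frac{31}{16}\beta-\frac32$, when $n=4\beta^{\rho+1}$.
\end{itemize}
So $\delta=\frac12$ only controls roughly $\Pr[X^k_j<\frac34(\beta-1)]$, not $\Pr[X^k_j<\beta]$.

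The repair is the one the paper uses. Since $\mu_k\ge\frac54\beta$, take $\delta=1-\beta/\mu_k\ge\frac15$. This gives $\Pr[X^k_j<\beta]\le\exp(-\delta^2\mu_k/2)\le\exp(-\mu_k/50)\le\exp(-\beta/40)\le n^{-3}$ for $\beta=\lfloor 400\log n\rfloor$. With that change the rest of your argument works. Your union bound pays $\Pr[\bar E_M]$ once rather than once per pair $(k,j)$ as the paper does, so it is slightly tighter.
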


\begin{proof}
Let $k \in [\rho], j \in \left[\beta^{\rho+1-k}\right]$ be arbitrary. By the law of total probability, we have
\begin{align*}
\Pr\left[X^k_j < \beta\right] &= \Pr\left[X^k_j < \beta \middle| E_M\right] \cdot \Pr\left[E_M\right] + \Pr\left[X^k_j < \beta \middle| \bar{E}_M\right] \cdot \Pr\left[\bar{E}_M\right]\\
&\leq \Pr\left[X^k_j < \beta \middle| E_M\right] + \Pr\left[\bar{E}_M\right]\\
&\leq \Pr\left[X^k_j < \beta \middle| E_M\right] + \frac{1}{n^2},
\end{align*}
where the last line follows from Lemma~\ref{lemma: enough total requests in buckets - random order}. Now, recall that the interval mapping to $B_j$ in the $k$-th phase is $\left(r_{\left((j-1)\cdot \frac{\beta^k}{8} \, : \, \learnBuckets\right)}, r_{\left(j\cdot \frac{\beta^k}{8} \, : \, \learnBuckets\right)}\right]$, since $B_j$ in the $k$-th phase is formed from $\beta^{k-1}$ buckets from the first phase, each of which received $\beta/8$ requests in the learning phase. Let $\cB_j = \left\{j' \cdot \frac{\beta}{8} : (j-1) \cdot \frac{\beta^k}{8} \leq j' \cdot \frac{\beta}{8}  \leq j \cdot \frac{\beta^k}{8}\right\}$ be the set of indices of initial buckets that are combined to form the $j$-th bucket in the $k$-th phase. Recall that for $j' \in \left[\beta^{\rho}\right]$, $Y_{j'}$ is the number of non-learning phase requests between the endpoints of the $j'$-th initial bucket. Therefore, the number of non-learning phase requests between the endpoints of $j$-th bucket in the $k$-th phase is $\sum_{j' \in \cB_j} Y_{j'}$. Note that these $Y_{j'}$ are fixed after the learning phase. Further, conditioned on $E_M$, each $Y_{j'}$ is at least $\frac{n \beta}{16\learnBuckets}$. Thus, conditioned on $E_M$, $\sum_{j' \in \cB_j} Y_{j'} \geq \frac{n \beta}{16\learnBuckets} \cdot \beta^{k-1} = \frac{n \beta^k}{16\learnBuckets}$. 

Now, given that the $Y_{j'}$ are fixed before the $k$-th phase, recall that the number of requests from the $k$-th phase that appear in the $j$-th bucket will be some fraction of $\sum_{j' \in \cB_j} Y_{j'}$. Since the requests arrive in random order, and there are $n-\learnBuckets$ \textit{total} non-learning phase requests and $t_k - t_{k-1}$ requests in the $k$-th phase, the fraction of the $\sum_{j' \in \cB_j} Y_{j'}$ requests that appear in the $k$-th phase is distributed according to the hypergeometric distribution with parameters $n-\learnBuckets$, $\sum_{j' \in \cB_j} Y_{j'}$, and $t_k - t_{k-1}$. So, conditioned on the $Y_{j'}$ being fixed before the $k$-th phase, $X^k_j \sim \textrm{Hypergeometric}\left(n-\learnBuckets, \sum_{j' \in \cB_j} Y_{j'}, t_k - t_{k-1}\right)$.

Therefore,
\begin{align*}    
\Pr\left[X^k_j < \beta \middle| E_M\right] &= \Pr\left[X^k_j \sim \textrm{Hypergeometric}\left(n-\learnBuckets, \sum_{j' \in \cB_j} Y_{j'}, t_k - t_{k-1}\right) < \beta\middle| E_M\right]\\
&\leq \Pr\left[X^k_j \sim \textrm{Hypergeometric}\left(n-\learnBuckets, \frac{n\beta^{k}}{16\learnBuckets}, t_k - t_{k-1}\right) < \beta\right],
\end{align*}
where the second line follows from the fact that, for $K \geq K'$, stochastic dominance gives 
\[
\Pr\left[X \sim \textrm{Hypergeometric}(N,K,m) \leq \ell\right] \leq \Pr\left[X \sim \textrm{Hypergeometric}(N,K',m) \leq \ell\right] .
\]

Then, hypergeometric tail bounds described in Lemma~\ref{lemma: hypergeometric tail bounds} allow us to bound this probability: let
\begin{align*}
\mu = \frac{t_k - t_{k-1}}{n-\learnBuckets} \cdot \frac{n\beta^k }{16\learnBuckets} > \frac{t_k - t_{k-1}}{16\learnBuckets} \ \cdot \beta^k \geq \frac{3\beta^{\rho+1-k}(\beta-1)}{16 \cdot \frac{\beta^{\rho+1}}{8}} \cdot \beta^k =\frac{3}{2}(\beta-1)\geq \frac{5}{4}\beta,
\end{align*}
where the second inequality follows from $n \geq 4\beta^{\rho+1}$. Additionally, let $\delta = 1-\frac{\beta}{\mu} \geq 1 - \frac{\beta}{\frac{5}{4}\beta} = \frac{1}{5}$. Then, we have
\allowdisplaybreaks
\begin{align*}
    &\Pr\left[X \sim \textrm{Hypergeometric}\left(n-\learnBuckets, \frac{\beta^k n}{16\learnBuckets}, t_k - t_{k-1}\right) < \beta\right]\\
    = \ &\Pr\left[X \sim \textrm{Hypergeometric}\left(n-\learnBuckets, \frac{\beta^k n}{16\learnBuckets}, t_k - t_{k-1} \right) < (1-\delta) \cdot \mu \right]\\
    \leq \ &\exp\left(-\frac{\delta^2 \mu}{2}\right)\\ 
    \leq \ &\exp\left(-\frac{\mu}{50}\right)\\
    \leq \ &\exp\left(-\frac{\beta}{40}\right)\\
    \leq \ &\frac{1}{n^2},
\end{align*}
where the third line follows from Lemma~\ref{lemma: hypergeometric tail bounds}, the fourth line follows from $\delta \geq \frac{1}{5}$, the fifth line follows from $\mu \geq \frac{5}{4}\beta$, and the last line follows from from $\beta \geq 500 \log n$. So, 
\begin{align*}
\Pr\left[X^k_j < \beta\right] &\leq \Pr\left[X^k_j \sim \textrm{Hypergeometric}\left(n-\learnBuckets, \frac{\beta^k n}{16\learnBuckets}, t_k - t_{k-1}\right) < \beta \middle| E_M \right] + \Pr\left[\bar{E}_M\right]\\
&\leq \frac{1}{n^2} + \frac{1}{n^2} = \frac{2}{n^2}.
\end{align*}

Therefore, we conclude
\begin{align*}
    \Pr[F] 
    &= 1 - \Pr\left[\cup_{k \in [\rho]}\cup_{j \in \left[\beta^{\rho+1-k}\right]} \left\{X^k_j < \beta\right\}\right] \\
    &\geq 1-\sum_{k \in [\rho]} \sum_{j \in \left[\beta^{\rho+1-k}\right]} \Pr\left[X^k_j < \beta\right] \\ 
    &\geq 1 - \frac{2}{n^2} \sum_{k \in [\rho]} \beta^{\rho+1-k}\\
    &\geq 1 - \frac{2}{n^2} \beta^{\rho+1}\\
    &\geq 1-\frac{2}{n}.
\end{align*}
\end{proof}

Combining the above, we conclude that we are not forced to make arbitrary assignments.
\begin{lemma}\label{lemma: high probability of not hitting wall - random order}
    Algorithm~\ref{alg:framework} equipped with the \textsc{FindInitialBuckets} subroutine defined by Algorithm~\ref{alg:FIBordinal} is such that $\Pr[A_{\rho}] \geq 1-2/n$. 
\end{lemma}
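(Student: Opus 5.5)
We prove the lemma by showing $F \subseteq A_\rho$ and then invoking Lemma~\ref{lemma: enough requests in each bucket - random order}, which gives $\Pr[F]\ge 1-2/n$. The deterministic part follows the argument used for the uniform warm-up. As observed there, once the number of unmatched requests equals the number of requests yet to come, it stays equal. So it suffices to control the number of unmatched requests at the phase boundaries $t_k$, and within a phase this count changes by at most one per step, in a way we can bound.

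\textbf{Invariant.} We show by induction on $k$ that, on $F$, every bucket $B_j$ ends phase $k$ with at most one unmatched request. Within a phase, an arriving request mapped to a nonempty bucket is matched, and otherwise it is added to that bucket. Hence a bucket that receives at least as many arrivals as it held unmatched requests at the start of the phase ends the phase with at most one unmatched request: each arrival either decreases its count by one, or (only when the bucket is empty) raises it from $0$ to $1$.

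\textbf{Base case and inductive step.} At the start of phase $1$, bucket $B_j$ contains the $\learnBuckets/\beta^{\rho}$ learning requests assigned to it. By the choice of parameters implicit in the definition of $\ell_j$, this number is $\beta/8 \le \beta$. For $k\ge 2$, each new bucket is a union of $\beta$ old buckets, each holding at most one unmatched request by the inductive hypothesis, so it starts with at most $\beta$ unmatched requests. On $F$, every bucket receives $X^k_j\ge\beta$ arrivals in phase $k$, which is enough to bring it down to at most one unmatched request. The special buckets $B_{\leftBucket},B_{\rightBucket}$ start empty and never merge, so each holds at most one unmatched request at all times by the same alternation argument.

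\textbf{Conclusion.} At time $t_k$, the number of unmatched requests is therefore at most $\beta^{\rho+1-k}+2$, while the number of requests yet to arrive is $3\beta^{\rho+1-k}$, so the break condition has not been triggered by $t_k$. The remaining obstacle is the times strictly inside a phase. Between $t_{k-1}$ and $t_k$, the unmatched count grows only when a request lands in an empty bucket. It is therefore bounded by the number of buckets plus the initial backlog, namely $\beta^{\rho+1-k}\beta+2$ for phase $k$ (using $\beta^{\rho+2-k}$ buckets' worth of leftovers), and by $\learnBuckets+\beta^\rho+2$ in phase $1$. Meanwhile the future count is at least $n-t_k=3\beta^{\rho+1-k}$ throughout phase $k$, and earlier in the phase it is much larger: at time $t$ it equals $n-t$.

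A cleaner bound comes from the fact that the unmatched count never exceeds its value at the start of the phase plus the number of currently empty buckets. This gives at most $\beta^{\rho+2-k}+\beta^{\rho+1-k}+2$ unmatched requests. The future count is at least $n-t_k = 3\beta^{\rho+1-k}$, and at the start of the phase it is $3\beta^{\rho+2-k}$. The count is at most one per bucket plus the backlog, and the backlog only decreases toward the end of the phase, so a monotonicity comparison of the two counts within the phase closes the argument. For phase $1$ the comparison uses $\learnBuckets+\beta^\rho < n-t_1$, which follows from $\learnBuckets<n/2$ and $\beta^\rho\le n/(4\beta)$. Hence $F\subseteq A_\rho$ and $\Pr[A_\rho]\ge 1-2/n$.
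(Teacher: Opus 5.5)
Your overall route is the paper's. You show $F\subseteq A_\rho$ through the invariant that, on $F$, every bucket (including $B_L,B_R$) ends each phase with at most one unmatched request, and then invoke Lemma~\ref{lemma: enough requests in each bucket - random order}. The invariant and the phase-end count (at most $\beta^{\rho+1-k}+2$ unmatched versus $3\beta^{\rho+1-k}$ requests still to come) are correct.

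The gap is your treatment of times strictly inside a phase, which you yourself flag as the remaining obstacle. Three things go wrong there:
\begin{itemize}
\item Your uniform bound, backlog plus number of buckets, is $\beta^{\rho+2-k}+\beta^{\rho+1-k}+2$. This exceeds $n-t_k=3\beta^{\rho+1-k}$ because $\beta>3$, so it cannot be compared with the future count near the end of the phase.
\item ``The backlog only decreases'' gives no rate. The event $F$ says nothing about \emph{when} during the phase a bucket receives its $\ge\beta$ arrivals.
\item The explicit phase-1 inequality $t_0+\beta^\rho<n-t_1=3\beta^\rho$ is false, since $t_0=\beta^{\rho+1}/8>2\beta^\rho$ once $\beta>16$.
\end{itemize}
Also, the warm-up remark that equality persists once reached describes the run \emph{after} the break has fired. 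By itself it does not reduce the problem to phase boundaries, because your invariant is only established for a run that has not broken.

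The missing idea is a one-line monotonicity. In the break-free run, let $D(t)=(n-t-1)-|\{r_1,\dots,r_t\}\setminus M_t|$. Each arrival either gets matched, so both counts drop by one and $D$ is unchanged, or is left unmatched, so $D$ drops by two. Merges change neither count. Hence $D$ is non-increasing on $[t_\rho]$, learning phase included. On $F$ your invariant gives $D(t_\rho)\ge 3\beta-1-(\beta+2)=2\beta-3>0$. Therefore $D(t)>0$ for every $t\le t_\rho$, so the break never fires, the actual run coincides with the break-free run, and $F\subseteq A_\rho$.

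This replaces your last two paragraphs entirely; only the check at $t_\rho$ is needed. It is also what implicitly justifies the paper's proof, which performs only the phase-end check.
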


\begin{proof}
First, note that $\Pr[A_{\rho}] \geq \Pr\left[F\right]$. This follows from the fact that the number of future requests at the end of a phase is equal to three times the number of buckets, the fact that there are at most $2$ requests outside of the buckets, and $F$ which guarantees that, in each phase, each bucket receives at least as many requests as it began the phase with, so no bucket has more than one unmatched request. Therefore, we have 
\begin{align*}
    \Pr\left[A_{\rho}\right] \geq \Pr\left[F\right] \geq 1-2/n,
\end{align*}
where the last inequality follows from Lemma~\ref{lemma: enough requests in each bucket - random order}.
\end{proof}

\subsubsection{Algorithm~\ref{alg:FIBordinal} satisfies condition~\ref{cond: few arrivals in any one bucket}}\label{subsubsec: FIBordinal satisfies condition 2}

The next two lemmas prove the first part of the condition, which bounds the probability that more than $\beta^2$ requests arrive outside of the buckets. 

\begin{lemma}
\label{lemma: hypergeometric fact}
    Let $r_{(1:n)} \leq r_{(2:n)} \leq \dots \leq r_{(n:n)}$ be the order statistics of instance $I = (r_1, \dots, r_n)$. 
    Then $\Pr_{I \sim \cD^n}\left[ \min_{t \in [k]}r_t > r^{(m)} \right] = \Pr_{I \sim \cD^n}\left[ \max_{t \in [k]}r_t < r^{(n - m)} \right] \leq e^{- km/n}$.
\end{lemma}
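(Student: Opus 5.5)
The plan is to work in the random-order view adopted in this section. We condition on the multiset of order statistics $r_{(1:n)} \leq \dots \leq r_{(n:n)}$ of $I$; the arrival order is then a uniformly random permutation. Here $r^{(m)}$ denotes $r_{(m:n)}$, the $m$-th smallest of all $n$ requests, with ties broken by the fixed strict ordering assumed earlier. Under this view, the first $k$ arrivals $\{r_1,\dots,r_k\}$ form a uniformly random $k$-subset of the $n$ positions. The bound will hold for every fixed set of order statistics, so it also holds after averaging over $I \sim \cD^n$.

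\emph{The event for the minimum.} We have $\min_{t\in[k]} r_t > r^{(m)}$ exactly when none of the $m$ smallest positions $\{1,\dots,m\}$ lies among the first $k$ draws. Let $X \sim \textrm{Hypergeometric}(n,m,k)$ count how many of these $m$ ``red'' positions are drawn. Then the event is $\{X=0\}$. Its probability is
\[
\frac{\binom{n-m}{k}}{\binom{n}{k}} \;=\; \prod_{i=0}^{k-1}\frac{n-m-i}{n-i} \;\leq\; \left(1-\frac{m}{n}\right)^{k} \;\leq\; e^{-km/n}.
\]
The first inequality holds because $\frac{n-m-i}{n-i} = 1-\frac{m}{n-i} \leq 1-\frac{m}{n}$. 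The second uses $1-x\leq e^{-x}$. Alternatively, one can write the probability as $\binom{n-k}{m}/\binom{n}{m}$ and bound it the same way; either route also works directly in the i.i.d.\ model with distinct ranks.

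\emph{The event for the maximum.} The statement $\max_{t\in[k]} r_t < r^{(n-m)}$ is the mirror image. Reflect the line ($x\mapsto 1-x$), which reverses ranks. The event then says that all first $k$ draws avoid the top ranks. By exchangeability of ranks under a uniform permutation, it has the same probability as the left-side event. One small point needs care: strictly below rank $n-m$ excludes $m+1$ positions, not $m$. So the reflected event is contained in the event of avoiding the top $m$ positions, which gives probability at most $e^{-km/n}$.

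The stated equality should be read with the symmetric indexing, with $r^{(n-m+1)}$ on the right. With the literal indexing it holds only as an inequality of this kind. This indexing check is the only real obstacle; the rest is the routine computation above.
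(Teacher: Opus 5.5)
Your proposal is correct and follows essentially the same route as the paper: you write the minimum event as $\{X=0\}$ for $X\sim\textrm{Hypergeometric}(n,m,k)$ and bound it by $\prod_{i=0}^{k-1}\frac{n-m-i}{n-i}\le(1-m/n)^k\le e^{-km/n}$. Your off-by-one remark is also right: the paper's proof asserts $\sum_{t\le k}\mathbbm{1}\{r_t\ge r_{(n-m:n)}\}\sim\textrm{Hypergeometric}(n,m,k)$, but that set covers $m+1$ ranks, so the stated equality holds only with $r_{(n-m+1:n)}$, while the inequality (the only part used later) survives by the containment you give.
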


\begin{lemma}
\label{lemmma: few requests outside of buckets - random order} We have that $\Pr\left[\left| \{t \in \{ \learnBuckets+1, \dots, n\}: i(t) \in \left\{\leftBucket, \rightBucket\right\} \} \right| \leq 32\beta^2\right] \geq 1 - \frac{1}{n}$.
\end{lemma}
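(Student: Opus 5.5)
The plan is to show that, with high probability, the learning sample $T=\{r_1,\dots,r_{\learnBuckets}\}$ contains a request among the $m$ leftmost and one among the $m$ rightmost of all $n$ requests, for $m = 16\beta^2$. Only requests strictly left of $\min(T)$ or strictly right of $\max(T)$ are mapped by $f$ to $B_{\leftBucket}$ or $B_{\rightBucket}$, and this mapping is never changed by the merging step. So on this event the number of requests $t>\learnBuckets$ with $i(t)\in\{\leftBucket,\rightBucket\}$ is at most $(m-1)+(m-1) < 32\beta^2$. This works deterministically once we condition on the order statistics $r^{(1)}\le\dots\le r^{(n)}$ of the instance, since ties are broken by a strict order.

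The key step is to apply \cref{lemma: hypergeometric fact} with $k=\learnBuckets$ and $m=16\beta^2$. It gives
\[
\Pr\Bigl[\min_{t\in[\learnBuckets]} r_t > r^{(m)}\Bigr]\le e^{-\learnBuckets m/n}
\quad\text{and}\quad
\Pr\Bigl[\max_{t\in[\learnBuckets]} r_t < r^{(n-m)}\Bigr]\le e^{-\learnBuckets m/n}.
\]
In the random-order formulation used in this section, the event $\min(T) > r^{(m)}$ says that none of the $m$ leftmost requests is among the first $\learnBuckets$ arrivals. Its probability is $\binom{n-m}{\learnBuckets}/\binom{n}{\learnBuckets}\le(1-m/n)^{\learnBuckets}$, which matches the lemma. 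A union bound over the two sides then bounds the failure probability by $2e^{-\learnBuckets m/n}$.

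It remains to check that $2e^{-16\beta^2\learnBuckets/n}\le 1/n$. This needs the relation between $\learnBuckets$ and $n$ used by \cref{alg:FIBordinal}. The buckets have size $\learnBuckets/\beta^{\rho} = \beta/8$, so $\learnBuckets=\beta^{\rho+1}/8$. Combined with $\beta^{\rho}\ge n/(4\beta^2)$ from \cref{alg:framework}, this gives $\learnBuckets \ge n/(32\beta)$. Hence $\learnBuckets m/n \ge 16\beta^2/(32\beta)=\beta/2 \ge 200\log n - 1$, and therefore $2e^{-\beta/2}\le 1/n$ for $n$ beyond a small constant.

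The main obstacle is bookkeeping rather than any new idea. We need the exact value of $\learnBuckets$ relative to $\beta^\rho$, for which the earlier proofs implicitly use $\learnBuckets = \beta^{\rho+1}/8$. We also need the count to be at most $m-1$ on each side, with the boundary handled by strict inequalities and tie-breaking. If the intended $\learnBuckets$ were smaller, the choice of $m$ can be adjusted upward to any value up to $16\beta^2$, and the exponent bound still only needs $\learnBuckets m/n=\Omega(\log n)$.
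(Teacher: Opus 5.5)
Your proposal is correct and is essentially the paper's proof: both apply \cref{lemma: hypergeometric fact} with $k=t_0=\beta^{\rho+1}/8$ at both ends and use $n/t_0\le 32\beta$ (from $\beta^{\rho}\ge n/(4\beta^2)$) to bound the failure probability by $2e^{-\beta/2}$. The only difference is cosmetic: you fix $m=16\beta^2$, while the paper takes $m=n\beta/(2t_0)\le 16\beta^2$. One small bookkeeping point: with the lemma's right-hand threshold $r^{(n-m)}$, up to $m$ (not $m-1$) requests can lie strictly right of $\max(T)$, but the total $2m-1<32\beta^2$ is still fine, and your final estimate $2e^{-\beta/2}\le 1/n$ matches the stated $1-1/n$ more precisely than the paper's written $1-2/n$.
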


\begin{proof}
    Let $r_{(1\,: \, n)} \leq \dots \leq r_{(n\,: \, n)}$  be the order statistics of the $n$ samples. Then:
    \begin{align*}
        &\Pr\left[ |\{t \in \{ \learnBuckets+1, \dots, n\}: i(t) \in \{\leftBucket, \rightBucket\} \}| \leq 32\beta^2 \right]  \\
        \geq_{(1)} \ &\Pr\left[ |\{t \in \{ \learnBuckets+1, \dots, n\}: i(t) \in \{\leftBucket, \rightBucket\} \}| \leq \frac{n\beta}{\learnBuckets} \right]  \\
        \geq \ &\Pr\left[ \min_{t \in [\learnBuckets]}r_t \leq r_{\left(\frac{n\beta}{2\learnBuckets}\,: \, n\right)}, \max_{t \in [\learnBuckets]}r_t \geq r_{\left(n - \frac{n\beta}{2\learnBuckets}\,: \, n\right)} \right] \\ 
        \geq \ &1 - \Pr\left[ \min_{t \in [\learnBuckets]}r_t > r_{\left(\frac{n\beta}{2\learnBuckets}\,: \, n\right)}\right] - \Pr\left[ \max_{t \in [\learnBuckets]}r_t < r_{\left(n - \frac{n\beta}{2\learnBuckets}\,: \, n\right)}\right] \\
        \geq_{(2)} \ &1 - 2\exp\left(- \frac{\learnBuckets \cdot \frac{n\beta}{2\learnBuckets}}{n} \right)\\
        = \ &1- 2\exp\left(-\beta/2\right)\\
        \geq_{(3)} \ &1  - \frac{2}{n},
    \end{align*}
    where (1) follows from $n \leq 4\beta^{\rho+2}$ and $\learnBuckets = \frac{\beta^{\rho+1}}{8}$ implies $\frac{n}{\learnBuckets} \leq 32\beta$, (2) is due to \cref{lemma: hypergeometric fact} for $k = \learnBuckets$ and $m = \frac{n \beta}{2\learnBuckets}$, and (3) from $\beta = \left\lfloor 500 \log n\right\rfloor$ and $n \geq 2$.
\end{proof}

The second and third parts of the condition are proved in the below lemma.
\begin{lemma}\label{lemma: bound on expected number any bucket receives - random order}
    For buckets in the first phase, we have  
    \[
    \Pr\left[ \forall j \in \left[\beta^{\rho}\right]: \left| \{t \in \{\learnBuckets+1, \dots, t_{1}\}: i(t) = j \} \right| \leq 32\beta^2 \right] \geq 1-1/n,
    \]
    and for buckets in subsequent phases, we have 
    \[
    \Pr[ \forall k \in [2..\rho], j \in \left[\beta^{\rho + 1 - k}\right]: \left| \{t \in \{ t_{k-1}+1, \dots, t_{k}\}: i(t) = j \} \right| \leq 24\beta ] \geq 1-1/n.
    \]
\end{lemma}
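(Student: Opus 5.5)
We mirror the proof of \cref{lemma: enough requests in each bucket - random order}, now bounding the bucket loads from above rather than from below. Throughout we work in the random-order view of \cref{subsubsec: random order prelims}: the order statistics are fixed and the arrival permutation is uniform. We condition on the event $E_M$ from \cref{lemma: enough total requests in buckets - random order}, which fails with probability at most $1/n^2$ and will be added to the failure probability at the end. Under $E_M$, every initial bucket $j'$ satisfies $Y_{j'} \le \frac{n\beta}{4\learnBuckets}$ non-learning requests between its endpoints. The $j$-th bucket of phase $k$ is the union of the $\beta^{k-1}$ initial buckets in $\cB_j$, so the number of non-learning requests that can be mapped to it is $\sum_{j'\in\cB_j} Y_{j'} \le \frac{n\beta^k}{4\learnBuckets}$. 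These counts are fixed after the learning phase.

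Given the $Y_{j'}$, the arrivals in phase $k$ form a uniformly random subset of the $n-\learnBuckets$ non-learning requests. Hence, exactly as before,
\[
X^k_j \sim \textrm{Hypergeometric}\Bigl(n-\learnBuckets, \textstyle\sum_{j'\in\cB_j}Y_{j'}, t_k-t_{k-1}\Bigr).
\]
By stochastic monotonicity in the number of red balls, we may replace the second parameter by the upper bound $K=\frac{n\beta^k}{4\learnBuckets}$.

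For phases $k\ge 2$ we have $t_k-t_{k-1} = 3\beta^{\rho+1-k}(\beta-1)$. Using $\learnBuckets=\beta^{\rho+1}/8$ and $n-\learnBuckets\ge n/2$, the mean is at most
\[
\mu=\frac{t_k-t_{k-1}}{n-\learnBuckets}\cdot\frac{n\beta^k}{4\learnBuckets}\le \frac{2\cdot 3\beta^{\rho+2-k}\cdot\beta^k}{4\beta^{\rho+1}/8}=12\beta .
\]
The upper tail in \cref{lemma: hypergeometric tail bounds} with $\delta=1$ then gives $\Pr[X^k_j>24\beta]\le \exp(-\mu/3)$. If the true $\mu$ is smaller than $12\beta$, we apply the bound with $\delta$ chosen so that $(1+\delta)\mu = 24\beta$; equivalently, we dominate by a hypergeometric with mean exactly $12\beta$. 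This gives a bound of $\exp(-4\beta)\le 1/n^3$ since $\beta=\Theta(\log n)$. A union bound over at most $\sum_k\beta^{\rho+1-k}\le n$ pairs $(k,j)$, together with $\Pr[\bar E_M]\le 1/n^2$, gives total failure probability at most $1/n$.

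For the first phase, the number of arrivals is $t_1-\learnBuckets \le n$. Since $n\le 4\beta^{\rho+2}$, we get $\frac{n}{\learnBuckets}\le 32\beta$, and the mean is
\[
\mu\le\frac{n}{n-\learnBuckets}\cdot\frac{n\beta}{4\learnBuckets}\le 2\cdot 8\beta^2=16\beta^2 .
\]
The same upper-tail bound with $\delta=1$ gives $\Pr[X^1_j>32\beta^2]\le\exp(-16\beta^2/3)$, again after padding $\mu$ up to $16\beta^2$. A union over the $\beta^\rho<n$ buckets finishes this case.

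The main obstacle is bookkeeping the constants, so that the mean bounds really land at $12\beta$ and $16\beta^2$ given the ranges $\frac{n}{4\beta^2}\le\beta^\rho\le\frac{n}{4\beta}$ and $\learnBuckets=\beta^{\rho+1}/8$. A second point needs care: conditioning on $E_M$ must not disturb the hypergeometric law of the phase-$k$ arrivals. This holds because $E_M$ depends only on which requests fall in the learning set, and conditioned on that set the remaining order is still uniform.
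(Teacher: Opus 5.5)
Your proposal is correct and follows essentially the same route as the paper: condition on $E_M$, use that $X^k_j$ is hypergeometric with at most $\frac{n\beta^k}{4\learnBuckets}$ red balls, bound the mean by $12\beta$ in later phases and by $16\beta^2$ in phase one (the paper's $4n/\beta^{\rho}$ is at most this), then apply the $\delta=1$ upper tail and take a union bound together with $\Pr[\bar{E}_M]$. Your handling is slightly more careful than the paper's in two respects: you apply the tail bound conditionally on the learning set, whereas the paper applies it to the unconditional mixture, and you pad the mean upward, so you do not need the paper's lower bound $\mu \geq \frac{5}{4}\beta$.
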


\begin{proof}[Proof sketch] As in the proof of \Cref{lemma: enough requests in each bucket - random order}, we use \Cref{lemma: number between buckets is NHG random variable} to give a bound on the expected number of requests in each bucket. Specifically, we show, $E\left[X^k_j\right] \leq \frac{1}{n} + 
\begin{aligned}
    \begin{cases}
        \frac{4n}{\beta^{\rho}} , \quad &\textrm{ if } k=1\\
        12\beta , & \textrm{else}
    \end{cases}
\end{aligned}$. We then apply a Chernoff bound to get the high probability result.
    
\end{proof}

\subsubsection{Algorithm~\ref{alg:FIBordinal} satisfies condition~\ref{cond: buckets have bounded length}}\label{subsubsec: FIBordinal satisfies condition 3}

\begin{lemma}
\label{lemma: bounded bucket length - random order}
We have that $\E_{I \sim \cD^n}\left[\sum_{j \in \left[\beta^{\rho}\right]} (\sup{I^1_j} - \inf{I^1_j})\right] 
    \leq \E_{I \sim \cD^n}\left[ \max_{t \in [n]}r_{t} - \min_{t \in [n]}r_{t} \right]$.
\end{lemma}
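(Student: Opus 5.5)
The plan is to show that, for every realization of the learning-phase sample $T=\{r_1,\dots,r_{\learnBuckets}\}$, the intervals $I^1_1,\dots,I^1_{\beta^\rho}$ assigned by Algorithm~\ref{alg:FIBordinal} are essentially pairwise disjoint and all lie inside $[\min(T),\max(T)]$. Summing their lengths then telescopes, and since $T\subseteq\{r_1,\dots,r_n\}$ the resulting quantity is dominated pointwise by $\max_{t\in[n]}r_t-\min_{t\in[n]}r_t$. Taking expectations gives the lemma; it is in fact stronger than \cref{cond: buckets have bounded length}, since no factor $3$ is needed.

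\textbf{Step 1: the intervals are determined by $T$.} Once $T$ is fixed, $f=f^1$ is a deterministic function. Reading $I^1_j$ as the set of locations that $f^1$ maps to $j$ (i.e., conditioning on the learning phase, or equivalently bounding the quantity pointwise in $T$), the definition of $f$ gives
\[
I^1_j=\left[\min B_j,\min B_{j+1}\right) \text{ for } j<\beta^\rho, \qquad I^1_{\beta^\rho}=\left[\min B_{\beta^\rho},\max B_{\beta^\rho}\right].
\]
Hence $\sup I^1_j-\inf I^1_j=\min B_{j+1}-\min B_j$ for $j<\beta^\rho$, and $\sup I^1_{\beta^\rho}-\inf I^1_{\beta^\rho}=\max B_{\beta^\rho}-\min B_{\beta^\rho}$.

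\textbf{Step 2: telescope.} Summing over $j$ gives
\[
\sum_{j\in[\beta^\rho]}\left(\sup I^1_j-\inf I^1_j\right)=\max B_{\beta^\rho}-\min B_1=\max(T)-\min(T).
\]
Because $T\subseteq\{r_1,\dots,r_n\}$, this is at most $\max_{t\in[n]}r_t-\min_{t\in[n]}r_t$ for every instance. Taking $\E_{I\sim\cD^n}$ of both sides finishes the argument.

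\textbf{Main obstacle.} The one delicate point is how to read the definition of $I^1_j$ in \cref{cond: buckets have bounded length}. Taken literally, $I^1_j$ is the set of locations with nonzero probability of being assigned to $j$. With that unconditional reading, $I^1_j$ could cover the whole support of $\cD$, and the bound would fail. So I would state explicitly that $I^1_j$, and hence the summand, is a function of the learning-phase sample $T$. This is the reading under which the expectation over $I$ in \cref{cond: buckets have bounded length} is meaningful, and it is how \cref{lem: total cost for good event} uses the condition: the cost of matches inside a first-phase bucket is bounded by that bucket's realized span. Ties between requests at the same location are harmless, since they only produce zero-length intervals under the fixed tie-breaking rule.
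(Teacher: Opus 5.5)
Your proposal is correct and is essentially the paper's proof: for every instance, the paper sums the first-phase interval lengths to $r_{(t_0:t_0)} - r_{(1:t_0)} = \max(T)-\min(T)$, bounds this pointwise by $\max_{t}r_t-\min_{t}r_t$, and then takes expectations. You point out that $I^1_j$ must be read as the realized interval determined by the learning sample $T$, not the literal ``nonzero probability'' set; the paper's one-line proof relies on this same reading without stating it.
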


\begin{proof}
    For any instance $I = \left(\left\{r_{(1)}, \dots, r_{(n)}\right\}, \pi\right)$, 
    \begin{align*}
        \sum_{j \in \left[\beta^{\rho}\right]} (\sup{I^1_j} - \inf{I^1_j}) = r_{(\learnBuckets: \learnBuckets)} - r_{(1: \learnBuckets)} \leq r_{(n:n)} - r_{(1:n)} = \max_t r_t - \min_t r_t
    \end{align*}
\end{proof}

\subsubsection{Putting everything together}

By combining the results from this section with the main lemma for the framework, we obtain our main result.
\UnknownIIDRatio*

\begin{proof}
  By Lemma~\ref{lem: ratio of algorithmic framework}, it suffices to show that Algorithm~\ref{alg:framework}, with the subroutine \textsc{FindIntialBuckets} defined by Algorithm~\ref{alg:FIBordinal} and with learning parameter $\learnBuckets = \frac{\beta^{\rho+1}}{8}$, satisfies \Cref{cond: probability of hitting wall,cond: few arrivals in any one bucket,cond: buckets have bounded length}. \Cref{cond: probability of hitting wall} is shown by \cref{lemma: high probability of not hitting wall - random order}, \Cref{cond: few arrivals in any one bucket} by \cref{lemmma: few requests outside of buckets - random order,lemma: bound on expected number any bucket receives - random order}, and \Cref{cond: buckets have bounded length} by \cref{lemma: bounded bucket length - random order}.
\end{proof}

\section{Random Order Impossibility Result}\label{sec:random_order}

This section focuses on the impossibility result for online min-cost matching on the line in the random order model.

\RandomOrderImpossibility*

The proof follows by constructing two instances $I_1, I_2$ that differ in only two locations of requests, such that the cost of any matching on them varies widely based on the decisions made for matching these two locations.
Then we argue that any algorithm, during its online runtime, is incapable of distinguishing between $I_1, I_2$, and as a result cannot achieve a cost bounded by the optimal cost on both.
For the purposes of the proof, we denote by $x$-request, a request that is located in $x \in [0,1]$.
\begin{proof}
    For any $n \geq 4, \epsilon \in (0,1)$, we define the following multi-sets, both of cardinality $n$:
    \begin{align*}
        I_1 = \{ \underbrace{0, \dots, 0}_{n-4}, \frac{1}{2}, \frac{1}{2}, \frac{1}{2} + \epsilon, \frac{1}{2} + \epsilon \}, \quad
        I_2 = \{ \underbrace{0, \dots, 0}_{n-4}, \frac{1}{2}, \frac{1}{2} + \epsilon, 1-\epsilon, 1 \}.
    \end{align*}
    We show the following two points, from which the statement follows:
    \begin{enumerate}
        \item On instance $I_1$, we show that any randomized algorithm with bounded competitive ratio must match $\left(\frac{1}{2} + \epsilon \right)$-requests to $\frac{1}{2}$-requests with probability $0$,
        \item Any algorithm that satisfies (1) has an unbounded competitive ratio on instance $I_2$.
    \end{enumerate}
    For any random permutation $\pi(I) = (r_1, r_2, \dots, r_n)$, we define as $M_t(\pi(I))$ the matching after time step $t$ of the algorithm. 
    We define the event that two requests in locations $\frac{1}{2}, \frac{1}{2} + \epsilon$ arrive before other non $0$-requests have arrived: 
    \begin{align*}
        E_{order}(\pi(I)) = \{ \exists t_1, t_2 \in [n]: r_{t_1} = \frac{1}{2}, r_{t_2} = \frac{1}{2} + \epsilon, r_{t'} = 0 \ \forall t' \in [\max(t_1, t_2)] \setminus \{t_1, t_2\} \}.
    \end{align*}
    We also define the event that these two requests are matched to each other:
    \begin{align*}
        E_{match}(\pi(I)) = \{\exists t_1, t_2 \in [n]:  r_{t_1} = \frac{1}{2}, r_{t_2} = \frac{1}{2} + \epsilon, r_{t'} = 0 \forall t' \in [\max(t_1, t_2)] \setminus \{t_1, t_2\}, \\ \{r_{t_1}, r_{t_2}\} \in M_{\max(t_1, t_2)}(\pi(I))\}.
    \end{align*}
    We start with (1).
    Note that by \cref{lemma: OPT structure} the optimal matching matches all the $0$-requests to each other (since $n-4$ is even), the $\frac{1}{2}$-requests to each other, and the $\left( \frac{1}{2} + \epsilon \right)$-requests to each other, therefore $\E_{\pi}\left[ \opt(\pi(I_1)) \right] = 0$.
    
    Let $p = \Pr_{\pi, \alg}\left[ E_{match}(\pi(I_1)) \middle| E_{order}(\pi(I_1)) \right]$, then the algorithm finds a matching with the following cost:
    \begin{align*}
        &\E_{\pi, \alg}\left[ \alg(\pi(I_1)) \right]  \\
       \geq \ &\E_{\pi, \alg}\left[ \alg(\pi(I_1)) \middle| E_{order}(\pi(I_1)) \right] \cdot \Pr_{\pi, \alg}\left[ E_{order}(\pi(I_1)) \right] \\
        \geq \ &\E_{\pi, \alg}\left[ \alg(\pi(I_1)) \middle| E_{order}(\pi(I_1)) \right] \cdot \frac{1}{6} \\
        \geq \ &\E_{\pi, \alg}\left[ \alg(\pi(I_1)) \middle| E_{order}(\pi(I_1)), E_{match}(\pi(I_1)) \right] \cdot \Pr_{\pi, \alg}\left[ E_{match}(\pi(I_1)) \middle| E_{order}(\pi(I_1))\right] \cdot \frac{1}{6} \\
        \geq \ &\frac{\epsilon \cdot p}{6},
    \end{align*}
    where the first line is by tower rule, the second line is because by definition of event $E_{order}(\pi(I_1))$, it only happens one $\frac{1}{2}$-request and one $\left(\frac{1}{2} + \epsilon\right)$-request arrive before the other two such requests which happens with probability at least $\frac{1}{6}$ by considering all possible orders of the non $0$-requests, the third line is by tower rule and the fourth line is because if a $\frac{1}{2}$-request is matched to a $\left( \frac{1}{2} + \epsilon \right)$-request, then $\alg$ pays at least the distance $\epsilon$. 
    Assume $p > 0$, then $\E_{\pi, \alg}\left[ \alg(\pi(I_1)) \right] > 0$, therefore the competitive ratio is unbounded. 
    
    We proceed to show (2) by assuming $p = 0$.
    Note that by \cref{lemma: OPT structure}, the optimal matching matches all $0$-requests to each other (since $n-4$ is even), the $\frac{1}{2}$-request is matched to the $\left( \frac{1}{2} + \epsilon \right)$-request and the $(1-\epsilon)$-request is matched to the $1$-request, which gives $\E_{\pi}\left[ \opt(\pi(I_2)) \right] = 2 \epsilon$.
    We have that:
    \begin{align*}
        &\E_{\pi, \alg}\left[ \alg(\pi(I_2)) \right] \\
        \geq \ &\E_{\pi, \alg}\left[ \alg(\pi(I_2)) \middle| E_{order}(\pi(I_2)) \right] \cdot \Pr_{\pi, \alg}\left[ E_{order}(\pi(I_2)) \right] \\
        \geq \ &\E_{\pi, \alg}\left[ \alg(\pi(I_2)) \middle| E_{order}(\pi(I_2)) \right] \cdot \frac{1}{6} \\
        \geq \ &\E_{\pi, \alg}\left[ \alg(\pi(I_2)) \middle| E_{order}(\pi(I_2)), \bar{E}_{match}(\pi(I_2)) \right] \cdot \Pr_{\pi, \alg}\left[ \bar{E}_{match}(\pi(I_2)) \middle| E_{order}(\pi(I_2)) \right] \cdot \frac{1}{6} \\
        = \ &\E_{\pi, \alg}\left[ \alg(\pi(I_2)) \middle| E_{order}(\pi(I_2)), \bar{E}_{match}(\pi(I_2)) \right] \cdot \frac{1}{6} \\
        = \ &\frac{1/2-\epsilon}{6} \\
        = \ &\frac{1/2-\epsilon}{12 \epsilon} \cdot \E_{\pi}\left[ \opt(\pi(I_2)) \right],
    \end{align*}
    where the first line is by tower rule, the second line is because the $\frac{1}{2}$, $\left(\frac{1}{2} + \epsilon\right)$-requests arrive before $(1-\epsilon), 1$-requests in $\pi(I_2)$ with probability $\frac{1}{6}$ by reordering, the third line is by tower rule, the fourth line is because $\Pr_{\pi, \alg}\left[ \bar{E}_{match}(\pi(I_2)) \middle| E_{order}(\pi(I_2)) \right] = \Pr_{\pi, \alg}\left[ \bar{E}_{match}(\pi(I_1)) \middle| E_{order}(\pi(I_1)) \right] = 1 - p = 1$ because at the arrival of the $\frac{1}{2}$, $\left( \frac{1}{2} + \epsilon \right)$-requests in $\pi(I_2)$ we have no information that distinguishes it from $\pi(I_1)$, therefore the algorithm matches $\frac{1}{2}$ to $\left( \frac{1}{2} + \epsilon \right)$ with the same probability and by assumption $p = 0$, the fifth line is because if the $\frac{1}{2}$-request is not matched to the $\left( \frac{1}{2} + \epsilon \right)$-request, then it will be matched to either the $1-\epsilon$ or the $1$-request and the cost will be at least $\frac{1}{2} - \epsilon$, and the sixth line due to $\E_{\pi}\left[ \opt(\pi(I_2)) \right] = 2\epsilon$.
    
    Then the ratio is unbounded as $\epsilon \rightarrow 0$.
\end{proof}

\section{Bipartite Case with Two-sided Arrivals}\label{sec:bipartite}

In this section, we focus on bipartite online matching with general arrivals under known-i.i.d. arrivals.
We refer to the two sides as left requests and right requests, and both are sampled i.i.d. from distribution $\cD$.
Our main result for this section is \cref{thm: ratio for two sided bipartite arrivals}.
\TwoSidedRatio*

We define the reduction in \cref{alg:reduction from bipartite two-sided arrivals to bipartite one-sided}.

\subsection{The Algorithm}
\Cref{alg:reduction from bipartite two-sided arrivals to bipartite one-sided} does not match any right request to any left request, until the number of total requests that have arrived equals $n$. 
Then, it uses $\cA$ to match the incoming set of right requests with the already arrived left requests and the incoming set of left requests with the already arrived right requests.
\begin{algorithm}
\begin{algorithmic}
\caption{OneSidedReduction$(n, \cD, \cA)$} 
\label{alg:reduction from bipartite two-sided arrivals to bipartite one-sided}
    \State \textbf{Input:} number of requests $n$, distribution $\cD$ and online algorithm $\cA$
    \State $M_{S}, M_{R} \gets \emptyset$
    \State $S_1, R_1 \gets \emptyset$
    \While{$|S_1| + |R_1| < n$}
        \If{request $r$ arrived} 
            \State $R_1 \gets R_1 \cup \{r\}$
        \ElsIf{server $s$ arrived}
            \State $S_1 \gets S_1 \cup \{s\}$
        \EndIf
    \EndWhile
    \State $M_S \gets \cA$ for arrivals of right requests from distribution $\cD$ with servers $S_1$
    \State $M_R \gets \cA$ for arrivals of left requests from distribution $\cD$ with servers $R_1$
    \State \Return $M_S \cup M_R$
\end{algorithmic}
\end{algorithm}

\subsection{Analysis}
Before we state \cref{lem:reduction from one sided bipartite to two sided with different distributions}, we introduce notation for the cost of the algorithm and the optimal solution.
We refer to the cost incurred by \cref{alg:reduction from bipartite two-sided arrivals to bipartite one-sided} on ordered left requests $S$, ordered right requests $R$, and $\pi$, as $\osr(S, R, \pi) = \sum_{\{s, r\} \in M_S \cup M_R}d(s, r)$,
to the cost incurred by algorithm $\cA$ given left requests $S$ and ordered right requests $R$ as $\cA(S, R)$,
and to the cost of the optimal solution for any left requests $S$ and right requests $R$ as $\opt(S, R)$.

\begin{lemma}
\label{lem:reduction from one sided bipartite to two sided with different distributions}
    Assume that there exists an online algorithm $\cA$ such that given a set of servers $S$ and a distribution $\cD$ from which $|S|$ requests will be drawn sequentially, gives a perfect matching with cost $\E_{R \sim \cD^{|S|}}\left[ \cA(S, R) \right] \leq a \cdot \E_{R \sim \cD^{|S|}}\left[ \opt(S, R) \right]$.
    Then for all $\pi$, $\E_{R \sim \cD^n, S \sim \cD^n}\left[ \osr(S, R, \pi) \right] \leq 2a \cdot \E_{R \sim \cD^n, S \sim \cD^n}\left[ \opt(S, R) \right]$, where $\osr$ is \cref{alg:reduction from bipartite two-sided arrivals to bipartite one-sided}.
\end{lemma}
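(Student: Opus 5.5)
Since $\osr$ runs in two stages, I bound each stage separately against the optimal bipartite matching. Let $S_1,R_1$ be the left and right requests among the first $n$ arrivals, and $S_2,R_2$ those among the last $n$ arrivals. Because $\pi$ is fixed and all $2n$ requests are i.i.d.\ from $\cD$, the sizes $|S_1|=m$ and $|R_1|=n-m$ are deterministic given $\pi$. Perfectness forces $|R_2|=n-|R_1|=m$ and $|S_2|=n-m$. Hence $M_S$ is an instance of one-sided arrivals with $m$ servers $S_1$ and $m$ online requests $R_2$ drawn i.i.d.\ from $\cD$. Likewise $M_R$ has $n-m$ servers $R_1$ and $n-m$ requests $S_2$. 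By the hypothesis on $\cA$, conditioning on the first-stage requests and then taking expectations,
\[
\E[\osr(S,R,\pi)] \le a\cdot \E[\opt(S_1,R_2)] + a\cdot \E[\opt(R_1,S_2)].
\]

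The main step is to show that each of these two terms is at most $\E_{R,S\sim\cD^n}[\opt(S,R)]$. This gives the factor $2a$. I would do this by symmetry and monotonicity. First, since all points are i.i.d., $\opt(S_1,R_2)$ has the same distribution as the optimal bipartite matching cost between two independent i.i.d.\ samples of size $m$. Call this quantity $g(m)$. Similarly $\E[\opt(R_1,S_2)]=g(n-m)$.

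It then suffices to show that $g$ is nondecreasing, i.e.\ $g(m)\le g(n)$ for $m\le n$. The target is $\E[\opt(S,R)]=g(n)$. For monotonicity, take an optimal matching on $n$ left and $n$ right i.i.d.\ points. Choose a uniformly random subset of $m$ of its matched pairs. The left endpoints of these pairs form an i.i.d.\ sample of size $m$, and so do the right endpoints, and the two samples are independent: selecting pairs by their matching may correlate the points, but selecting them uniformly at random should not. The restricted matching costs at most the full cost, which would give $g(m)\le g(n)$.

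I expect this independence claim to be the main obstacle, because the pairs are chosen by the matching, which depends on the points. If it fails, I would use the simpler route of picking an arbitrary fixed $m$-subset of indices of $S$ and of $R$. These subsets are independent i.i.d.\ samples of size $m$, and I would compare $\opt$ on them with $\opt(S,R)$. In one dimension, Lemma~\ref{lemma: OPT structure}-type sorted matching makes this comparison work. Specifically, the bipartite optimum on the line is $\int |F_S(x)-F_R(x)|\,dx$, and $\E\int|F_S-F_R|$ for empirical CDFs of sample size $m$ is nonincreasing in $m$. This holds because the empirical CDF at size $m$ is the conditional average of sub-sample CDFs of size $m$ drawn from size $m+1$, combined with Jensen's inequality. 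This handles the line case. For general metrics, the same Jensen/averaging argument applies to the transport cost, since $\opt$ on equal-size sets is the (scaled) Wasserstein distance of the empirical measures, which is jointly convex. Combining the two stages gives the $2a$ bound.
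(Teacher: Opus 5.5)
\textbf{Overall.} Your two-stage decomposition, and the reduction to showing that $g(m)$ (the expected optimal bipartite cost between two independent i.i.d.\ samples of size $m$) is nondecreasing, match the paper exactly. The gap is in the monotonicity step, where neither of your routes works.

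\textbf{Route 1 fails.} A uniformly random set of $m$ pairs of the optimal $n$-matching does give uniform $m$-subsets of $S$ and of $R$. However, the two subsets are coupled through the matching, so they are not independent samples. For $\cD$ uniform on $[0,1]$ and $m=1$, the chosen pair has expected length $g(n)/n=\Theta(1/\sqrt n)$, whereas $g(1)=\E|s-r|=1/3$.

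\textbf{Route 2 proves the wrong inequality.} The subsample-averaging plus Jensen argument is valid, but it expresses the larger-sample empirical CDF as an average of smaller-sample ones. That can only \emph{upper} bound the larger-sample quantity. What it shows is that the \emph{normalized} distance $w(m)=\E\int|F^{(m)}_S-F^{(m)}_R|\,dx$ is nonincreasing. The identity $\opt=\int|F_S-F_R|\,dx$ holds only for unnormalized counting functions, so the matching cost is $g(m)=m\,w(m)$. Knowing that $w$ is nonincreasing says nothing about $m\,w(m)\le n\,w(n)$: for the uniform distribution, $w(m)=\Theta(1/\sqrt m)$ decreases while $g(m)=\Theta(\sqrt m)$ increases. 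Your general-metric remark has the same defect, since joint convexity of $W_1$ again only shows that normalized $W_1$ decreases.

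\textbf{The missing idea.} Argue forward in time rather than by subsampling. Let $Z_m(x)=\sum_{i\le m}\bigl(\mathbbm{1}\{s_i\le x\}-\mathbbm{1}\{r_i\le x\}\bigr)$ and let $\mathcal F_m$ be generated by $(s_i,r_i)_{i\le m}$. The increments are independent with mean zero, so $Z_m(x)=\E[Z_{m+1}(x)\mid \mathcal F_m]$. Conditional Jensen then gives $\E|Z_m(x)|\le\E|Z_{m+1}(x)|$. Integrating over $x$ and using $\opt_m=\int_0^1|Z_m(x)|\,dx$ yields $g(m)\le g(m+1)$.

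This is the paper's monotonicity lemma. Note that the paper's written chain applies Jensen unconditionally, which only yields the bound $|\E Z_n(x)|=0$; the conditional form is what is actually needed.

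\textbf{General metrics.} The $\int|Z|$ identity is unavailable there, and the paper's lemma covers only the line. The same forward argument still works via duality. If $f$ is an optimal $1$-Lipschitz potential for the first $m$ pairs, it is feasible for $m+1$ pairs, so
$\E[\opt_{m+1}\mid\mathcal F_m]\ge\opt_m+\E[f(s_{m+1})-f(r_{m+1})\mid\mathcal F_m]=\opt_m$.
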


Before showing the proof of \cref{lem:reduction from one sided bipartite to two sided with different distributions}, we show that the expected value of the optimal solution is increasing in the number of requests $n$.
\begin{lemma}
\label{lem: monotonicity of bipartite optimal solution}
    Let $n \in \mathbbm{N}$. Then $\E_{S \sim \cD^n, R \sim \cD^n}\left[ \opt(S, R) \right] \leq \E_{S \sim \cD^{n+1}, R \sim \cD^{n+1}}\left[ \opt(S, R) \right]$.
\end{lemma}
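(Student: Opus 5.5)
We prove the monotonicity statement by a coupling argument: extend an instance of size $n$ to one of size $n+1$ by adding one extra left request and one extra right request, and show that deleting them from an optimal matching never increases cost. Draw $S' = (s_1,\dots,s_{n+1}) \sim \cD^{n+1}$ and $R' = (r_1,\dots,r_{n+1}) \sim \cD^{n+1}$. Let $S = (s_1,\dots,s_n)$ and $R = (r_1,\dots,r_n)$ be their first $n$ coordinates, so that $S \sim \cD^n$ and $R \sim \cD^n$ independently. It then suffices to show the pointwise inequality $\opt(S,R) \le \opt(S',R')$ for every realization and take expectations.

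To prove the pointwise inequality, fix an optimal perfect bipartite matching $M'$ of $(S',R')$. If $M'$ matches $s_{n+1}$ to $r_{n+1}$, we delete this pair. What remains is a perfect matching of $(S,R)$ of cost at most $\opt(S',R')$, since distances are nonnegative.

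Otherwise $M'$ contains pairs $\{s_{n+1}, r_a\}$ and $\{s_b, r_{n+1}\}$ with $a,b \le n$. We delete both pairs and add $\{s_b, r_a\}$. By the triangle inequality,
\[
d(s_b,r_a) \le d(s_b,r_{n+1}) + d(r_{n+1}, s_{n+1}) + d(s_{n+1}, r_a).
\]
This route includes the extra term $d(r_{n+1},s_{n+1})$, which is not paid by $M'$, so this bound alone does not suffice. The main obstacle is therefore to obtain a cost-nonincreasing surgery in this case.

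The fix is to exploit the freedom in the coupling: instead of always deleting $s_{n+1}$ and $r_{n+1}$, choose which coordinate to drop on each side. Since $S'$ and $R'$ are i.i.d., we may delete the left request $s_b$ (the partner of $r_{n+1}$) together with $r_{n+1}$ itself. This removes a matched pair $\{s_b,r_{n+1}\}$ of $M'$ and leaves a perfect matching on the remaining $n$ left and $n$ right requests, with cost at most $\opt(S',R')$. The subtlety is that the chosen index $b$ depends on the sample, so the remaining $n$ left points are no longer obviously distributed as $\cD^n$. To address this, we use exchangeability: conditional on the multiset of $S'$, any fixed-size sub-multiset selected as a function of $R'$ and of this multiset does not in general have the $\cD^n$ law, so the argument must be made with care.

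A cleaner route, which we will adopt, is to delete the pair of $M'$ containing $r_{n+1}$ and then average over which right index plays the role of ``$n+1$''. By exchangeability of $R'$, $\opt$ of $S'$ minus the partner of $r_j$ and $R'$ minus $r_j$ has the same distribution for every $j$. Averaging over $j$ gives
\[
\E\bigl[\tfrac{1}{n+1}\textstyle\sum_j \opt(S'\setminus\{M'(r_j)\}, R'\setminus\{r_j\})\bigr] \le \E[\opt(S',R')].
\]
We then still need the law of the left remainder $S'\setminus\{M'(r_j)\}$ to be $\cD^n$. This is the step we expect to be hardest.

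If that fails, the fallback is on the line, where the optimal matching structure makes the inequality easier. There we would argue directly via the CDF formula $\opt = \int |F_S(x) - F_R(x)|\,dx$ (counting functions) and the sub-martingale property of $\E|\text{random walk}|$ under adding one step to each side, which gives monotonicity for the line. For general metrics we would accept the factor-loss version $\opt(S,R) \le \opt(S',R') + d(s_{n+1},r_{n+1})$ if an exact statement is not needed downstream.
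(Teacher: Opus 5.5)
As written, your proposal does not prove the statement, because its main reduction is false. Under the ``first $n$ coordinates'' coupling, the inequality $\opt(S,R)\le\opt(S',R')$ fails pointwise. On the line with $n=1$, $s_1=0$, $r_1=1$, $s_2=1$, $r_2=0$, you get $\opt(S,R)=1>0=\opt(S',R')$. So the obstruction you met when $\{s_{n+1},r_a\},\{s_b,r_{n+1}\}\in M'$ is intrinsic, and no deletion surgery can work.

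The exchangeability repair is vacuous. Each $r_j$ lies in exactly one pair of $M'$, and a sub-matching of an optimal matching is optimal on its vertex set. Hence your averaged left-hand side equals exactly $\frac{n}{n+1}\E[\opt(S',R')]$, which says nothing about $\E[\opt(S,R)]$. The step you flagged is also not the real problem. For uniform $j$, the deleted left index is uniform and independent of $(S',R')$, so the left remainder does have law $\cD^n$. What fails is that the two remainders are correlated, so their joint law is not $\cD^n\otimes\cD^n$. For $\cD$ uniform on $\{0,1\}$ and $n=1$, the remainder has expected cost $\frac12\cdot\frac34=\frac38$, whereas $\E[\opt(S,R)]=\frac12$.

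The missing idea is that the inequality holds only in expectation. The proof must use that the added pair contributes a mean-zero increment independent of the past. Your line fallback is exactly the paper's route: $\opt(S,R)=\int_0^1|Z_n(x)|\,dx$ with $Z_n(x)=\sum_{i\le n}(\mathbbm{1}\{s_i\le x\}-\mathbbm{1}\{r_i\le x\})$, which is a martingale in $n$ for each fixed $x$. Make it rigorous with conditional Jensen, $\E[|Z_{n+1}(x)|\mid Z_n(x)]\ge|Z_n(x)|$, then take expectations and integrate via Tonelli. Note that the paper's display applies unconditional Jensen, which only yields $\E|Z_{n+1}(x)|\ge|\E Z_n(x)|=0$. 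Your submartingale formulation is the version that actually closes the argument.

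For general metrics, your additive-loss bound does not prove the lemma. The paper's reduction does invoke the lemma for general metrics, although its own proof is line-specific. The same idea works in any metric through Kantorovich--Rubinstein duality: $\opt(S,R)=\max_f\sum_{i\le n}(f(s_i)-f(r_i))$ over $1$-Lipschitz $f$. With $f^*$ a maximizer for the first $n$ pairs, $\opt(S',R')\ge\opt(S,R)+f^*(s_{n+1})-f^*(r_{n+1})$, and the last difference has conditional mean zero given $(S,R)$. This is your original coupling, with the nonnegative error $d(s_{n+1},r_{n+1})$ replaced by a signed error of mean zero.
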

\begin{proof}
    Let $S_i = \{s_1, s_2, \dots, s_i\}$ and $R_i = \{r_1, r_2, \dots, r_i\}$ for any $i \in [n+1]$.
    For any $x \in [0,1]$ we define the following random variable: 
    $$Z_n(x) = \sum_{i \in [n]}\mathbbm{1}\{s_i \leq x\} - \sum_{i \in [n]}\mathbbm{1}\{r_i \leq x\}.$$
    Notice that:
    \begin{align*}
        &\E_{s_{n+1} \sim \cD, r_{n+1} \sim \cD}\left[ Z_{n+1}(x) \middle| Z_n(x) \right] \\
        =\ &\E_{s_{n+1} \sim \cD, r_{n+1} \sim \cD}\left[ Z_{n}(x) + \mathbbm{1}\{s_i \leq x\} - \mathbbm{1}\{r_i \leq x\} \middle| Z_n(x) \right] \\
        =\ &Z_n(x) + \E_{s_{n+1} \sim \cD, r_{n+1} \sim \cD}\left[ \mathbbm{1}\{s_{n+1} \leq x\} - \mathbbm{1}\{r_{n+1} \leq x\} \middle| Z_n(x) \right] \\
        =\ &Z_n(x) + \E_{s_{n+1} \sim \cD, r_{n+1} \sim \cD}\left[ \mathbbm{1}\{s_{n+1} \leq x\} - \mathbbm{1}\{r_{n+1} \leq x\} \right] \\
        =\ &Z_n(x), \tag{1}
    \end{align*}
    where the first line is by definition of $Z_{n+1}(x)$, the second line id due to conditioning on $Z_n(x)$, the third line is because $s_{n+1}, r_{n+1}$ are independent of $S_n, R_n$ and therefore independent of $Z_n(x)$ and the fourth line is because $\E_{s_{n+1} \sim \cD}\left[ \mathbbm{1}\{s_{n+1} \leq x\} \right] = \Pr_{s_{n+1} \sim \cD}\left[ s_{n+1} \leq x \right] = \Pr_{r_{n+1} \sim \cD}\left[ r_{n+1} \leq x \right] = \E_{r_{n+1} \sim \cD}\left[ \mathbbm{1}\{r_{n+1} \leq x\} \right]$.
    This implies:
    \begin{align*}
        &\E_{S_{n+1} \sim \cD^{n+1}, R_{n+1} \sim \cD^{n+1}}\left[ |Z_{n+1}(x)| \right] \\
        \geq \ &| \E_{S_{n+1} \sim \cD^{n+1}, R_{n+1} \sim \cD^{n+1}}\left[ Z_{n+1}(x) \right] | \\
        =\ &| \E_{S_{n} \sim \cD^{n}, R_{n} \sim \cD^{n}}\left[ \E_{s_{n+1} \sim \cD, r_{n+1} \sim \cD}\left[ Z_{n+1}(x) \middle| Z_n(x) \right] \right] | \\
        =\ &| \E_{S_{n} \sim \cD^{n}, R_{n} \sim \cD^{n}}\left[ Z_n(x) \right] |, \tag{2}
    \end{align*}
    where the first line is by Jensen since $\phi(x) = |x|$ is convex, the second line is by tower rule and the third line is due to (1).
    Then we have:
    \begin{align*}
        &\E_{S_{n+1} \sim \cD^{n+1}, R_{n+1} \sim \cD^{n+1}}\left[ \opt(S_{n+1}, R_{n+1})\right] \\ 
        =\ &\E_{S_{n+1} \sim \cD^{n+1}, R_{n+1} \sim \cD^{n+1}}\left[ \int_0^1{|Z_{n+1}(x)|}\,dx \right] \\
        =\ &\int_0^1{\E_{S_{n+1} \sim \cD^{n+1}, R_{n+1} \sim \cD^{n+1}}\left[|Z_{n+1}(x)| \right]}\,dx \\
        \geq \ &\int_0^1{\E_{S_{n} \sim \cD^{n}, R_{n} \sim \cD^{n}}\left[|Z_{n}(x)| \right]}\,dx \\
        =\ &\E_{S_{n} \sim \cD^{n}, R_{n} \sim \cD^{n}}\left[ \opt(S_n, R_n)\right], 
    \end{align*}
    where the first line is because $\opt(S_{n+1}, R_{n+1})$ matches the requests of $S_{n+1}$ to those of $R_{n+1}$ greedily from left to right, and for every $x \in [0,1]$, we pay it as many times as $x$ is between a matched pair, or equivalently as large as the imbalance of points from $S_{n+1}$ are to the left of $x$ compared the points of $R_{n+1}$, the second line is by Fubini-Tonelli, the third line is by (2).
\end{proof}

\begin{proof}[Proof of \cref{lem:reduction from one sided bipartite to two sided with different distributions}]
    Let $R_2 = R \setminus R_1$ and $S_2 = S \setminus S_1$.
    Note that $|R_2| = n - |R_1| = |S_1|$ and $|S_2| = n - |S_1| = |R_1|$. 
    Let $n_S = |S_1|$ and $n_R = |R_1| = n - n_S$.
    Then we have:
    \begin{align*}
        &\E_{S \sim \cD^n, R \sim \cD^n,}\left[ \osr(S, R, \pi) \right] \\
        = \ &\E_{S_1 \sim \cD^{n_S}}[ \E_{R_2 \sim \cD^{n_S}}\left[ \cA(S_1, R_2) \right] ] + \E_{R_1 \sim \cD^{n_R}}[ \E_{S_2 \sim \cD^{n_R}}\left[ \cA(R_1, S_2) \right] ] \\
        \leq \ &a \cdot \E_{S_1 \sim \cD^{n_S}}[ \E_{R_2 \sim \cD^{n_S}}\left[ \opt(S_1, R_2) \right] ] + a \cdot \E_{R_1 \sim \cD^{n_R}}[ \E_{S_2 \sim \cD^{n_R}}\left[ \opt(R_1, S_2) \right] ] \\
        = \ &a \cdot \E_{S_1 \sim \cD^{n_S}, R_2 \sim \cD^{n_S}}[ \opt(S_1, R_2) ] + a \cdot \E_{S_2 \sim \cD^{n_R}, R_1 \sim \cD^{n_R}}[ \opt(R_1, S_2) ] \\
        \leq \ &a \cdot \E_{S_1 \sim \cD^{n}, R_2 \sim \cD^{n}}[ \opt(S_1, R_2) ] + a \cdot \E_{S_2 \sim \cD^{n}, R_1 \sim \cD^{n}}[ \opt(R_1, S_2) ] \\
        = \ &2a \cdot \E_{S \sim \cD^{n}, R \sim \cD^{n}}[ \opt(S, R) ],
    \end{align*}
    where the first line is by definition of \cref{alg:reduction from bipartite two-sided arrivals to bipartite one-sided}, the second line is due to the assumption for $\cA$, the fourth line is due to \cref{lem: monotonicity of bipartite optimal solution}, and the fifth line is by renaming the sets of requests.
\end{proof}

\subsection{Proof of \cref{thm: ratio for two sided bipartite arrivals}}

The first result we use from \cite{gupta2019stochasticonlinemetricmatching} concerns general metrics.
\begin{lemma}[Theorem 1.1 \cite{gupta2019stochasticonlinemetricmatching}]
\label{lem: gupta theorem for adversarial servers and iid requests general metrics}
    There exists an algorithm $\cA_1$ such that $\E_{R \sim \cD^n}\left[ \cA_1(S, R) \right] \leq O( \left( \log{\log{\log{n}}} \right)^2 ) \cdot \E_{R \sim \cD^n}\left[ \opt(S, R) \right]$ for any distribution $\cD$ and server set $S$ for general metrics.
\end{lemma}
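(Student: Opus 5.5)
This lemma is imported directly from \cite{gupta2019stochasticonlinemetricmatching} (their Theorem~1.1), so in the paper we would simply invoke it. For completeness, here is the plan I would follow to re-derive it; it is specific to one-sided arrivals with an adversarial server set $S$ and i.i.d.\ requests from a known $\cD$.

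\textbf{Step 1: an offline benchmark.} I would lower bound $\E_{R}[\opt(S,R)]$ by the fractional transportation cost $T(S,\cD)$. This is the min-cost way to ship the $n$ unit-mass servers onto the scaled distribution $n\cD$. By convexity of transportation cost, $T(S,\cD) \le \E_R[\opt(S,R)]$: the expected empirical matching gives a feasible fractional transport. A second lower bound, which is needed to pay for the sampling fluctuations of $R$ around $n\cD$, is the expected cost of matching $R$ against an independent copy $R' \sim \cD^n$. I would obtain it through a tree embedding: on an HST, the cost of matching $R$ to $R'$ is a sum over edges of imbalance times edge weight, and the imbalance is controlled by a Jensen-type argument like the one in \cref{lem: monotonicity of bipartite optimal solution}.

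\textbf{Step 2: the algorithm.} Fix an optimal transport plan $\mu$ between $S$ and $n\cD$. When a request $r$ arrives, propose the server $s$ with probability $\mu(s,r)/(n\Pr_{\cD}[r])$. If $s$ is still free, match to it. Otherwise, fall back to a recovery procedure that routes $r$ to a nearby free server, using the HST structure to look up the tree for the closest subtree that has spare capacity.

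\textbf{Step 3: the analysis.} Proposals cost exactly $T(S,\cD)$ in expectation. The fallback cost is governed by the deviation of request counts per subtree from their means. A naive single-shot analysis pays $O(\log n)$ from embedding distortion. The key idea is to balance over scales: restart or rescale the plan over $O(\log\log n)$ geometrically growing epochs, and bound the fallback cost by a concentration argument level by level. Doing this recursively would bring the overhead down to $O((\log\log\log n)^2)$.

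\textbf{Main obstacle.} I expect Step 3 to be the crux: keeping the conflict (fallback) cost within the stated factor of $\E[\opt]$ rather than within $O(\log n)$. This is where I would follow the authors' multi-scale argument most closely rather than reinvent it. Since our use of the lemma is as a black box inside \cref{lem:reduction from one sided bipartite to two sided with different distributions}, citing it suffices.
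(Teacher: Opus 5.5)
Your proposal takes the same approach as the paper: the paper gives no proof of this lemma and imports Theorem~1.1 of \cite{gupta2019stochasticonlinemetricmatching} as a black box (known $\cD$, arbitrary server set $S$), which is exactly your plan. Your re-derivation sketch is optional and should not be read as a proof. Step~3's epoch/recursion heuristic is not an argument, and evaluating your HST-based lower bound on $\E[\opt(S,R)]$ loses the $O(\log n)$ embedding distortion in the wrong direction, but nothing in the paper depends on it.
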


The second result we use from \cite{gupta2019stochasticonlinemetricmatching} concerns tree metrics, which include the line.
\begin{lemma}[Theorem 1.2 \cite{gupta2019stochasticonlinemetricmatching}]
\label{lem: gupta theorem for adversarial servers and iid requests on the line}
    There exists an algorithm $\cA_2$ such that $\E_{R \sim \cD^n}\left[ \cA_2(S, R) \right] \leq O(1) \cdot \E_{R \sim \cD^n}\left[ \opt(S, R) \right]$ for any distribution $\cD$ and server set $S$ for tree metrics like the line.
\end{lemma}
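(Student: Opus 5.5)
Since this lemma is quoted from \cite{gupta2019stochasticonlinemetricmatching}, the paper can use it as a black box. A self-contained proof would go through the standard edge-decomposition of matching cost on trees. Fix a tree $T$ with edge lengths $\ell_e$. Removing edge $e$ splits $T$ into a side $T_e$ and its complement. Let $s_e = |S \cap T_e|$, let $p_e = \Pr_{r\sim\cD}[r \in T_e]$, and let $R_e = |R \cap T_e| \sim \mathrm{Bin}(n,p_e)$.

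\textbf{Step 1: lower bound on the optimum.} For any instance on a tree, the offline optimum satisfies $\opt(S,R) = \sum_e \ell_e\,|s_e - R_e|$. This is the tree analogue of the identity $\opt=\int|Z(x)|\,dx$ used in the proof of \cref{lem: monotonicity of bipartite optimal solution}. Taking expectations,
\[
\E_R[\opt(S,R)] = \sum_e \ell_e\, \E|s_e - R_e| .
\]
I would then lower bound each term by $c\,\bigl(|s_e - np_e| + \min(\sqrt{np_e(1-p_e)},\, np_e(1-p_e))\bigr)$ for an absolute constant $c$. This follows from Jensen's inequality, which gives $\E|s_e-R_e|\ge |s_e-np_e|$, together with a standard anti-concentration bound on the binomial.

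\textbf{Step 2: the algorithm.} It suffices to give an online algorithm whose expected number of traversals of each edge $e$ is $O$ of this per-edge quantity. I would first compute, offline, a min-cost transportation $y$ from the uniform measure on $S$ to the measure $n\cD$; on a tree, $y$ routes exactly $|s_e - np_e|$ units of mass across edge $e$. When request $r$ arrives, the algorithm samples a server $s$ with probability proportional to $y(s,r)$. If $s$ is free, match $r$ to $s$. If $s$ is already taken, fall back: match $r$ to the nearest free server in the smallest subtree around $s$ that still contains a free server.

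\textbf{Step 3: charging, and the main obstacle.} The sampling step costs in expectation at most $\sum_e \ell_e |s_e - np_e|$, since each arrival's sampled server is a draw from $y$. The hard part is the fallback cost. I would bound it edge by edge: a fallback path crosses $e$ only if, at that moment, one side of $e$ has run out of free servers while demand for that side keeps arriving. Because requests are i.i.d., the demand that the transportation plan routes into $T_e$ evolves like a random walk with drift matched to the supply inside $T_e$. The number of such overflow events is then bounded by the fluctuation of that walk, which is $O(\min(\sqrt{np_e},\, np_e))$, plus the deterministic imbalance $|s_e - np_e|$. Making this rigorous is the step I expect to be the main obstacle. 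Fallbacks can cascade, since one fallback takes a server that later sampled requests expected to find. The plan is to control this with a ballot- or queueing-type argument: process edges bottom-up and show that the free servers consumed by fallbacks inside $T_e$ are stochastically dominated by the excess demand in $T_e$.

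Summing the per-edge bounds against Step 1 yields $\E_R[\cA_2(S,R)] \le O(1)\cdot \E_R[\opt(S,R)]$. The line is a special case of a tree metric, so the bound holds there as well.
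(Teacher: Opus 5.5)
Your first sentence matches the paper exactly. The lemma is imported as Theorem~1.2 of \cite{gupta2019stochasticonlinemetricmatching} and is not proved in this paper, so citing it is all that is needed.

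The self-contained sketch, however, does not work as written. The issue is not only that Step~3 is left as a plan: the per-edge claim it rests on is false for your fallback rule. A request whose sampled server is taken crosses $e$ into $T_e$ as soon as its own branch is exhausted and $T_e$ holds the nearest free server. The other side of $e$ may still have many free servers at that moment. Here is an instance.
\begin{itemize}
\item Take a vertex $w$, and a vertex $u$ joined to $w$ by an edge $e$ of length $\varepsilon$ and carrying $s$ servers.
\item Add $s$ leaves joined to $w$ by edges of length $L>\varepsilon$, each carrying one server.
\item Let $\mathcal{D}$ be uniform over these $n=2s$ servers, so $y$ is trivial. Root the tree at $w$; reading your rule as plain ``nearest free server'' gives the same outcome.
\end{itemize}
A leaf that receives a second request sends it to $u$, at distance $L+\varepsilon<2L$, while $u$ still has free servers, even though many other leaves are still free. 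A first-moment count is enough: by time $t$ each leaf has received about $\mathrm{Poisson}(t/n)$ requests and $u$ about $t/2$. It shows that about $0.23s$ of $u$'s servers are taken by leaf overflow before $u$ runs out, at time about $0.77n$. Roughly as many of $u$'s own later requests then cross $e$ outward. So $e$ is crossed $\Theta(s)$ times, while $|s_e-np_e|=0$ and $\mathbb{E}|s_e-R_e|=\Theta(\sqrt{s})$. Two of your claims therefore fail:
\begin{itemize}
\item that the crossings of $e$ are $O(|s_e-np_e|)$ plus the fluctuation of the walk on $e$;
\item your planned domination of the servers consumed by fallbacks in $T_e$ by the excess demand of $T_e$.
\end{itemize}

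In this instance the total cost is still within a constant factor of the optimum. The excess crossings of $e$ can be charged to the leaf edges, using $\varepsilon<L$, i.e., the nearest-server choice. But that charging is non-local, and a subtree robbed this way later overflows and can rob yet another subtree. So the cascades you flag must be controlled by a global argument, not edge by edge. That argument is the real content of the cited theorem, and your sketch does not supply it.

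Step~1, and the bound of the sampling cost by the transport cost $\sum_e \ell_e|s_e-np_e|$, are fine. One smaller, fixable point: after sampling from $y$, a virtual request lands in $T_e$ with probability $s_e/n$, not $p_e$. The relevant fluctuation is therefore of order $\sqrt{s_e(n-s_e)/n}$. It has to be compared with the Step~1 bound through $|s_e-np_e|$, with each edge oriented toward its lighter side so that the $(1-p_e)$ factor is not lost.
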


\begin{proof}[Proof of \cref{thm: ratio for two sided bipartite arrivals}]
    For general metrics we use \cref{alg:reduction from bipartite two-sided arrivals to bipartite one-sided} with $\cA_1$, which has ratio $a = O(\log{\log{\log{n}}})^2$ by \cref{lem: gupta theorem for adversarial servers and iid requests general metrics}.
    For tree metrics we use \cref{alg:reduction from bipartite two-sided arrivals to bipartite one-sided} with $\cA_2$, which has ratio $a = 9$ by \cref{lem: gupta theorem for adversarial servers and iid requests on the line}.
\end{proof}
\newpage

\bibliographystyle{plainnat}
\bibliography{main.bib}
\appendix

\newpage
\section{Omitted Proofs from Section 3}
\label{sec:appendixframework}

\subsection{Upper Bound Cost of Framework}
\label{sec: upper bound cost of framework omitted proofs}

\UpperBoundFramework*

\ChargeInsideJBadEvent*
\begin{proof}
    We have:
    \begin{align*}
        \E_{I \sim \cD^n}\left[ \sum_{t \in [n]: r_t \in J} \charge{t} \middle| E_{bad} \right]
        &= \E_{I \sim \cD^n}\left[ \sum_{t \in [n]: r_t, r_{\tau(t)} \in J} \charge{t} \middle| E_{bad} \right] \\
        &\leq \E_{I \sim \cD^n}\left[ \sum_{t \in [n]: r_t, r_{\tau(t)} \in J} \frac{\sup{J} - \inf{J}}{2} \middle| E_{bad} \right] \\
        &= \frac{n}{2} \cdot (\sup{J} - \inf{J}),
    \end{align*}
    where the first line is because if $\tau(t) \notin J$, then $\charge{t} = 0$ by definition, the second line because if both $r_t, r_{\tau(t)} \in J$, then $\charge{t} = \frac{|r_t - r_{\tau(t)}|}{2} \leq \frac{\sup{J} - \inf{J}}{2}$ by definition, and the third line is by summing $n$ times. The statement follows since $n/2 \leq n$.
\end{proof}

\CostGoodEvent*
\begin{proof}
    We have:
    \begin{align*}
        \E&_{I \sim \cD^n}\left[ \sum_{t \in [n]} \cost{t} \middle| \bar{E}_{bad} \right] = \\
        &= \E_{I \sim \cD^n}\left[ \sum_{t \in [t_{\rho}]} \cost{t} +  \sum_{t \in [n] \setminus [t_{\rho}]} \cost{t} \middle| \bar{E}_{bad} \right] \\
        &= \E_{I \sim \cD^n}\left[ \sum_{t \in [t_{\rho}]: i(t) \in \{\leftBucket,\rightBucket\}} \cost{t} + \sum_{t \in [t_{\rho}]: i(t) \notin \{\leftBucket,\rightBucket\}} \cost{t} + \sum_{t \in [n] \setminus [t_{\rho}]} \cost{t} \middle| \bar{E}_{bad} \right]. \tag{2}
    \end{align*}
    For the first term we have:
    \begin{align*}
        \E&_{I \sim \cD^n}\left[ \sum_{t \in [t_{\rho}]: i(t) \in \{\leftBucket,\rightBucket\}} \cost{t} \middle| \bar{E}_{bad} \right] \leq \\
        &\leq \sum_{k=1}^{\rho} \E\left[ \sum_{j = 1}^{\beta^{\rho + 1 - k}} | \{  t \in \{t_{k-1} + 1, \dots, t_k \}: i(t) = j\} | \cdot ( \sup{I^k_j} - \inf{I^k_j} ) \middle| \bar{E}_{bad} \right] \\
        &\leq \E\left[ \sum_{j = 1}^{\beta^{\rho}} | \{  t \in \{t_{\learnBuckets} + 1, \dots, t_1 \}: i(t) = j\} | \cdot ( \sup{I^1_j} - \inf{I^1_j} ) \middle| \bar{E}_{bad} \right] \\
        & \qquad +  \sum_{k=2}^{\rho} \E\left[ \sum_{j = 1}^{\beta^{\rho + 1 - k}} | \{  t \in \{t_{k-1} + 1, \dots, t_k \}: i(t) = j\} | \cdot ( \sup{I^k_j} - \inf{I^k_j} ) \middle| \bar{E}_{bad} \right] \\
        &\leq \E\left[32 \beta^2 \cdot ( \sup{I^1_j} - \inf{I^1_j} ) \middle| \bar{E}_{bad} \right] + \sum_{k=2}^{\rho} \E\left[ \sum_{j = 1}^{\beta^{\rho + 1 - k}} 24 \beta \cdot ( \sup{I^k_j} - \inf{I^k_j} ) \middle| \bar{E}_{bad} \right] \\
        &= 32 \beta^2 \cdot \E\left[ \sup{I^1_j} - \inf{I^1_j}  \middle| \bar{E}_{bad} \right] + \sum_{p=1}^{\rho}  12 \beta \cdot \E\left[ \sum_{j = 1}^{\beta^{\rho + 1 - k}} ( \sup{I^k_j} - \inf{I^k_j} ) \middle| \bar{E}_{bad} \right] \\
        &\leq 96\beta^2 \cdot \E\left[ \max_{t \in [n]}r_{t} - \min_{t \in [n]}r_{t} \middle| \bar{E}_{bad} \right] + \sum_{k=2}^{\rho}  36 \beta \cdot \E\left[ \max_{t \in [n]}r_{t} - \min_{t \in [n]}r_{t} \middle| \bar{E}_{bad} \right] \\
        &= (96\beta^2 +36 \rho \beta) \cdot \E\left[ \max_{t \in [n]}r_{t} - \min_{t \in [n]}r_{t} \middle| \bar{E}_{bad} \right],  \tag{2a}
    \end{align*}
    where the first line is due to $\bar{E}_{bad} \subseteq \bar{A}_{\rho}$, which implies our algorithm is not forced to make arbitrary assignments in the first $\rho$ phases, and each request that arrives in bucket $B_j$ pays at most the length of $I^k_j$ as matching cost. The second line is due to splitting the sum into the cost of the first phase and the cost of the subequent phases. The third line is due  $\bar{E}$ implies $\bar{E}_a$, i.e. no bucket in the first phase receives more than $32\beta^2$ requests, and no bucket in a later phase receives more than $24\beta$ requests. The third line is by linearity of expectation, the fourth line is due to \cref{cond: buckets have bounded length}. 

    For the second term we have:
    \begin{align*}
        \E_{I \sim \cD^n}\left[ \sum_{t \in [t_{\rho}]: i(t) \in \{L, R\}} \cost{t} \middle| \bar{E}_{bad} \right]
        &\leq \E_{I \sim \cD^n}\left[ \sum_{t \in [t_{\rho}]: i(t) \in \{L, R\}} \left( \max_{t' \in [n]}r_{t'} - \min_{t' \in [n]}r_{t'} \right) \middle| \bar{E}_{bad} \right] \\
        &\leq \beta^2 \cdot \E_{I \sim \cD^n}\left[ \max_{t \in [n]}r_{t} - \min_{t \in [n]}r_{t} \middle| \bar{E}_{bad} \right], \tag{2b}
    \end{align*}
    where the first line is because $\cost{t} \leq |r_t - r_{\tau{t}}| \leq \max_{t' \in [n]}r_{t'} - \min_{t' \in [n]}r_{t'}$, the second line is due to that fact $\bar{E}_{bad} \subseteq \bar{E}_o$, i.e., at most $\beta^2$ requests arrive outside of the buckets.

    Similarly, for the third term we have: 
    \begin{align*}
        \E_{I \sim \cD^n}\left[ \sum_{t \in [n] \setminus [t_{\rho}]} \cost{t} \middle| \bar{E}_{bad} \right]
        &\leq \E_{I \sim \cD^n}\left[ \sum_{t \in [n] \setminus [t_{\rho}]} \left( \max_{t' \in [n]}r_{t'} - \min_{t' \in [n]}r_{t'} \right) \middle| \bar{E}_{bad} \right] \\
        &= \left( n - t_{\rho} \right) \cdot \E_{I \sim \cD^n}\left[ \max_{t \in [n]}r_{t} - \min_{t \in [n]}r_{t} \middle| \bar{E}_{bad} \right] \\
        &= 3 \beta \cdot  \E_{I \sim \cD^n}\left[ \max_{t \in [n]}r_{t} - \min_{t \in [n]}r_{t} \middle| \bar{E}_{bad} \right], \tag{2c}
    \end{align*}
    where $n-t_{\rho} = 3\beta$ follows from the definition of $t_{\rho}$.
    Also we have:
    \begin{align*}
        1 - \frac{\rho + 4}{n} 
        \geq 1 - \frac{\log n + 4}{n}  
        \geq \frac{1}{4}, \tag{2d}
    \end{align*}
    where the first inequality is because $\rho \leq \frac{\log{\frac{n}{\beta^\rho}}}{\log{\beta}} \leq \log{n}$, and the second inequality is by assumption $n \geq 2 \cdot 10^6$.

    By substituting Equations (2a), (2b), (2c) to (2) we have:
    \begin{align*}
        \E_{I \sim \cD^n}\left[ \alg(I) \middle| \bar{E}_{bad} \right]
        &\leq \left(96\beta^2 + 36\beta\rho + 3\beta + \beta^2 \right) \cdot \E_{I \sim \cD^n}\left[ \max_{t \in [n]}r_{t} - \min_{t \in [n]}r_{t} \middle| \bar{E}_{bad} \right] \\
        &\leq \left(96\beta^2 +36\beta\rho +  3\beta + \beta^2 \right) \cdot \frac{\E_{I \sim \cD^n}\left[ \max_{t \in [n]}r_{t} - \min_{t \in [n]}r_{t} \right]}{\Pr\left[ \bar{E}_{bad} \right]} \\
        &\leq \left(96\beta^2 +36\beta\rho +  3\beta + \beta^2 \right) \cdot \frac{\E_{I \sim \cD^n}\left[ \max_{t \in [n]}r_{t} - \min_{t \in [n]}r_{t} \right]}{1 - \frac{\rho + 4}{n}} \\
        &\leq \left(288\beta^2 + 144\beta\rho +  12\beta + 4\beta^2 \right) \cdot \E_{I \sim \cD^n}\left[ \max_{t \in [n]}r_{t} - \min_{t \in [n]}r_{t} \right], \tag{3}
    \end{align*}
    where the second equality is by tower rule, the third line is due to $\Pr\left[ \bar{E}_{bad} \right] \geq 1 - \frac{6\rho\beta + 2}{n}$ by \cref{cond: probability of hitting wall,cond: few arrivals in any one bucket,cond: buckets have bounded length} and union bound, and the fourth line is due to Equation (2d).
    Then the statement follows from $\beta, \rho = O(\log{n})$.
\end{proof}

\ChargeOutsideJ*
\begin{proof}
    We have:
    \begin{align*}
        \E&_{I \sim \cD^n}\left[ \sum_{t \in [n]: r_t \notin J}\charge{t} \right] = \\
        &= \E_{I \sim \cD^n}\left[ \sum_{t \in [n]: r_t < \inf{J}}\charge{t} \right] + \E_{I \sim \cD^n}\left[ \sum_{t \in [n]: r_t > \sup{J}}\charge{t} \right] \\
        &= \E_{I \sim \cD^n}\left[ \sum_{t \in [n]: r_t < \inf{J}} (\sup{J} - r_t) \right] + \E_{I \sim \cD^n}\left[ \sum_{t \in [n]: r_t > \sup{J}}( r_t - \inf{J}) \right], \tag{1}
    \end{align*}
    where the first line is by linearity of expectation, and the second line is by definition of $\charge{t}$.
    For the first term we have:
    \begin{align*}
        \E&_{I \sim \cD^n}\left[ \sum_{t \in [n]: r_t < \inf{J}} (\sup{J} - r_t) \right] = \\
        &= \E_{I \sim \cD^n}\left[ \sum_{t \in [n]: r_t < \inf{J}} (\inf{J} - r_t + \sup{J} - \inf{J} ) \right] \\
        &= \E_{I \sim \cD^n}\left[ \sum_{t \in [n]: r_t < \inf{J}} (\inf{J} - r_t) \right] + \E_{I \sim \cD^n}\left[ \sum_{t \in [n]}\mathbbm{1}\{r_t < \inf{J}\} \right] \cdot (\sup{J} - \inf{J}) \\
        &\leq \E_{I \sim \cD^n}\left[ \sum_{t \in [n]} (\inf{J} - r_t)^+ \right] + \E_{I \sim \cD^n}\left[ \sum_{t \in [n]} \frac{1}{n} \right] \cdot (\sup{J} - \inf{J}) \\
        &= n \cdot \E_{r \sim \cD}\left[ (\inf{J} - r)^+ \right] + (\sup{J} - \inf{J}), \tag{1a}
    \end{align*}
    where the first line is by adding and subtracting $\inf{J}$, the second line is due to linearity of expectation, the third line is because $\Pr_{r \sim \cD}\left[ r < \inf{J} \right] \leq 1/n$ by definition, and the fourth line is because we sum $n$ times.
    For the second term we have:
    \begin{align*}
        \E&_{I \sim \cD^n}\left[ \sum_{t \in [n]: r_t > \sup{J}}( r_t - \inf{J}) \right] = \\
        &= \E_{I \sim \cD^n}\left[ \sum_{t \in [n]: r_t > \sup{J}} (r_t - \sup{J} + \sup{J} - \inf{J} ) \right] \\
        &= \E_{I \sim \cD^n}\left[ \sum_{t \in [n]: r_t > \sup{J}} (r_t - \sup{J}) \right] + \E_{I \sim \cD^n}\left[ \sum_{t \in [n]}\mathbbm{1}\{r_t > \sup{J}\} \right] \cdot (\sup{J} - \inf{J}) \\
        &\leq \E_{I \sim \cD^n}\left[ \sum_{t \in [n]} (r_t - \sup{J})^+ \right] + \E_{I \sim \cD^n}\left[ \sum_{t \in [n]} \frac{1}{n} \right] \cdot (\sup{J} - \inf{J}) \\
        &= n \cdot \E_{r \sim \cD}\left[ (r - \sup{J})^+ \right] + (\sup{J} - \inf{J}), \tag{1b}
    \end{align*}
    where the first line is by adding and subtracting $\sup{J}$, the second line is due to linearity of expectation, the third line is because $\Pr_{r \sim \cD}\left[ r > \sup{J} \right] \leq 1/n$ by definition, and the fourth line is because we sum $n$ times. 
    We obtain the statement by substituting Equations (1a), (1b) to (1).
\end{proof}

\MaxMinRegions*

To prove this lemma, we first need two claims. Define $g(p):=1-p^n-(1-p)^n$.
\begin{claim}
\label{claim:np4}
If $0\le p\le 1/n$ and $n \geq 2$, then $g(p)\ge \frac{np}{4}$.
\end{claim}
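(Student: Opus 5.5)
Write $g(p)=1-p^n-(1-p)^n$. The plan is to use a binomial-type lower bound on $1-(1-p)^n$ and show that $p^n$ is negligible by comparison. For $p\in[0,1/n]$ we have $np\le 1$, and this is exactly the regime where $1-(1-p)^n$ is linear in $np$ up to constants.

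\textbf{Step 1: lower-bound $1-(1-p)^n$.} By Bonferroni (inclusion--exclusion truncated after the second term),
\[
1-(1-p)^n \;\ge\; np-\binom{n}{2}p^2 \;\ge\; np-\frac{(np)^2}{2}\;\ge\; \frac{np}{2},
\]
where the last inequality uses $np\le 1$. An alternative route gives a slightly stronger constant: $1-(1-p)^n\ge 1-e^{-np}$, and concavity of $x\mapsto 1-e^{-x}$ on $[0,1]$ gives $1-e^{-x}\ge (1-e^{-1})x\ge 0.63x$. The first route is enough.

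\textbf{Step 2: upper-bound $p^n$.} For $n\ge 2$ and $p\le 1/n\le 1/2$,
\[
p^n \;=\; p\cdot p^{n-1}\;\le\; p\cdot n^{-(n-1)}\;\le\; \frac{p}{2}.
\]
We want $p^n\le np/4$, which amounts to $p^{n-1}\le n/4$. For $n\ge 4$ this is trivial since the right side is at least $1$. For $n=2,3$ it needs the explicit bound $p^{n-1}\le n^{-(n-1)}$, which gives $1/2$ and $1/9$ respectively. Both are at most $n/4$, namely $1/2$ and $3/4$. So $p^n\le np/4$ in all cases.

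\textbf{Step 3: combine.} Together,
\[
g(p)\;\ge\; \frac{np}{2}-\frac{np}{4}\;=\;\frac{np}{4}.
\]
For $p=0$ both sides are zero, so that edge case is covered.

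The only delicate point is the small-$n$ case ($n=2,3$) in Step 2. It is handled by the explicit bound $p\le 1/n$ rather than a crude one. As a sanity check at $n=2$: $g(p)=2p-2p^2\ge p\cdot(2-1)=p\ge p/2$, consistent with the claim.
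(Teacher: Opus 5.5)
Your proof is correct and uses the same decomposition as the paper: show $1-(1-p)^n \ge np/2$ and $p^n \le np/4$, then subtract. The differences are minor: the paper obtains the first bound via $(1-p)^n \le e^{-np} \le 1-np/2$ (your stated alternative) rather than Bonferroni, and it handles the second bound uniformly via $p\cdot n^{-(n-1)} \le p/n \le np/4$ (using $n^2 \ge 4$), so no split into $n \ge 4$ and $n \in \{2,3\}$ is needed.
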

\begin{proof}
    First, we have, $(1-p)^n = e^{n \log(1-p)} \le e^{-np} \leq 1 - np / 2$ where the last inequality is since $np \leq 1$.
    Moreover,  $p^n \le p (1/n)^{n-1} \leq  \frac{p}{n} \le \frac{np}{4}$ where the first inequality is since  $0\le p\le 1/n$, the second is since $n\ge 2$, and the third since $n^2 \geq 4$. Combining these bounds gives $g(p) \geq 1 - np/4 - (1 - np/2) \ge np/4$.
\end{proof}

\begin{claim}
\label{claim:12e}
    If $p\in[1/n,\,1-1/n]$, then $g(p)\ge 1-2/e$.
\end{claim}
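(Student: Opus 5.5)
The plan is to exploit the symmetry $g(p)=g(1-p)$ together with concavity of $g$ on $[0,1]$, which reduces the claim to checking the endpoint $p=1/n$. First, note that $g(p)=1-p^n-(1-p)^n$ is invariant under $p\mapsto 1-p$. Moreover, $g''(p)=-n(n-1)\bigl(p^{n-2}+(1-p)^{n-2}\bigr)\le 0$, so $g$ is concave on $[0,1]$ for $n\ge 2$. A concave function on the interval $[1/n,1-1/n]$ attains its minimum at an endpoint. By the symmetry, both endpoints give the same value, so it suffices to show $g(1/n)\ge 1-2/e$. (If $n=2$ the interval degenerates to the single point $\{1/2\}$, and there $g(1/2)=1/2\ge 1-2/e$ directly.)

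For the endpoint, we have $g(1/n)=1-(1/n)^n-(1-1/n)^n$. The standard bound $(1-1/n)^n\le e^{-1}$, which follows from $\log(1-x)\le -x$, handles the second term. For the first term we need $(1/n)^n\le 1/e$. This holds for all $n\ge 2$, since $(1/n)^n\le 1/4<1/e$. Combining the two bounds gives $g(1/n)\ge 1-1/e-1/e=1-2/e$, as required.

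There is no real obstacle here. The only points needing care are verifying concavity for general $n\ge 2$, which is immediate from the second derivative above, and being careful that $(1/n)^n$ is crudely bounded by $1/e$ rather than something sharper; the crude bound is all the claim needs.
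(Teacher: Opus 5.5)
Your proof is correct, but it takes a different route from the paper. The paper never uses concavity. It observes that both $p$ and $1-p$ lie in $[0,1-1/n]$. Monotonicity of $x\mapsto x^n$ then bounds each term separately: $p^n\le(1-1/n)^n\le 1/e$ and $(1-p)^n\le(1-1/n)^n\le 1/e$. This gives $g(p)\ge 1-2/e$ in one line.

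Your argument instead uses concavity and the symmetry $g(p)=g(1-p)$ to reduce to the single point $p=1/n$, and bounds both terms there together. This costs a second-derivative computation and the fact that a concave function is minimized at an endpoint. In return you get a sharper bound on the whole interval, $g(p)\ge 1-n^{-n}-(1-1/n)^n$, which is close to $1-1/e$. The gain comes from using that $p^n$ and $(1-p)^n$ cannot both be large at the same $p$. The paper instead charges each term its own worst case, and these occur at opposite endpoints.

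For the claim as stated, the paper's crude termwise bound is all that is needed. Your separate treatment of $n=2$ is harmless but unnecessary, since the concavity argument already covers a degenerate interval.
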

\begin{proof}
    We have 
    $p^n\le(1-1/n)^n \leq 1/e$ and $(1-p)^n\le(1-1/n)^n \leq 1/e$, so
    $g(p)\ge 1-2/e$.
\end{proof}

We denote by $F(x)$ the CDF of distribution $\cD$.
\begin{proof}[Proof of \cref{lem:maxminregions}]
    Note that
    \begin{align*}
    \mathbb E[\max_{t \in [n]} r_t-\min_{t \in [n]} r_t]
     & =\int_0^1\Pr[\max_{t \in [n]} r_t>x]\,dx-\int_0^1\Pr[\min_{t \in [n]} r_t>x]\,dx \\
     & =\int_0^1(1 - F(x)^n)\,dx-\int_0^1(1-F(x))^n\,dx \\
     & =\int_0^1 \Big(1-F(x)^n-(1-F(x))^n\Big)\,dx.
\end{align*}

We now split the integral into three regions.

\smallskip
\noindent\emph{Left region $[0,\inf{J}]$.}
For $x<\inf{J}$, we have $F(x)\le 1/n$, so by Claim~\ref{claim:np4},
\[
    g(F(x))\ge \frac{nF(x)}{4}.
\]
Hence
\[
    \int_0^{\inf{J}} g(F(x))\,dx
    \ge \frac{n}{4}\int_0^{\inf{J}}F(x)\,dx
    = \frac{n}{4} \int_0^{\inf{J}}\Pr_{r \sim \cD}[r\le x]\,dx
    = \frac{n}{4} \E_{r \sim \cD}\left[ (\inf{J} - r)^+ \right].
\]

\smallskip
\noindent\emph{Middle region $[\inf{J},\sup{J}]$.}
For $x\in(\inf{J},\sup{J})$, we have $F(x)\in[1/n,\,1-1/n]$, so by Claim~\ref{claim:12e}, $g(F(x))\ge 1-\frac{2}{e}.$ Therefore
\[
\int_{\inf{J}}^{\sup{J}} g(F(x))\,dx
\ge \Bigl(1-\frac{2}{e}\Bigr)(\sup{J}-\inf{J}).
\]

\smallskip
\noindent\emph{Right region $[\sup{J},1]$.}
For $x>\sup{J}$, we have 
\[
    g(F(x)) = g(1 - F(x)) \ge \frac{n(1-F(x))}{4}
    = \frac{n}{4}\Pr[r>x].
\]
where the first equality is since $g(p) = g(1-p)$ for all $p$ and the inequality by Claim~\ref{claim:np4} and since $1-F(x)\le 1/n$. Thus
\[
    \int_{\sup{J}}^1 g(F(x))\,dx
    \ge \frac{n}{4}\int_{\sup{J}}^1\Pr[r>x]\,dx
    = \frac{n}{4} \E_{r \sim \cD}\left[ (r - \sup{J})^+ \right].
\]
By combining the three regions, we get the desired inequality.
\end{proof}

\begin{proof}[Proof of \cref{lem: upper bound cost of framework}]
    For $n \leq 2 \cdot 10^6$:
    \begin{align*}
        \E_{I \sim \cD^n}\left[ \alg(I) \right] 
        = \E_{I \sim \cD^n}\left[ \sum_{t \in [n]}\cost{t} \right]
        \leq n \cdot \E_{I \sim \cD^n}\left[ \max_{t \in [n]}r_t - \min_{t \in [n]}r_t \right]
        \leq 2\cdot10^6 \cdot \E_{I \sim \cD^n}\left[ \opt(I) \right],
    \end{align*}
    where the equality is by definition of $\cost{t}$, the first inequality is because $\cost{t} \leq \max_{t\in[n]}r_t - \min{t \in [n]}r_t$ for any $t \in [n]$ and the second inequality is by assumption on $n \geq 2 \cdot 10^6$ and \cref{lemma: opt pays expected rightmost minus expected leftmost}.
    The remaining proof assumes $n \geq 2 \cdot 10^6$:
    \begin{align*}
        \E_{I \sim \cD^n}\left[ \alg(I) \right] 
        = \E_{I \sim \cD^n}\left[ \sum_{t \in [n]}\cost{t} \right]
        \leq \E_{I \sim \cD^n}\left[ \sum_{t \in [n]}\charge{t} \right], \tag{1}
    \end{align*}
    where the equality is because we sum the cost paid in each time step $t$ over all $t \in [n]$ and the inequality is because for all $t \in [n]$, we have $\cost{t} + \cost{\tau(t)} \leq \charge{t} + \charge{\tau(t)}$ by definition.
    By linearity of expectation:
    \begin{align*}
        \E_{I \sim \cD^n}\left[ \sum_{t \in [n]}\charge{t} \right] 
        = \E_{I \sim \cD^n}\left[ \sum_{t \in [n]: r_t \in J}\charge{t} \right] + \E_{I \sim \cD^n}\left[ \sum_{t \in [n]: r_t \notin J}\charge{t} \right]. \tag{2}
    \end{align*}
    For the first term we have:
    \begin{align*}
        \E&_{I \sim \cD^n}\left[ \sum_{t \in [n]: r_t \in J}\charge{t} \right] \leq \\
        &\leq \E_{I \sim \cD^n}\left[ \sum_{t \in [n]: r_t \in J}\charge{t} \middle| E_{bad} \right] \cdot  \frac{6\rho \beta + 2}{n} + \E_{I \sim \cD^n}\left[ \sum_{t \in [n]: r_t \in J}\charge{t} \middle| \bar{E}_{bad} \right] \\
        &\leq \E_{I \sim \cD^n}\left[ \sum_{t \in [n]: r_t \in J}\charge{t} \middle| E_{bad} \right] \cdot  \frac{6\rho \beta + 2}{n} + \E_{I \sim \cD^n}\left[ \sum_{t \in [n]: r_t \in J}\cost{t} \middle| \bar{E}_{bad} \right] \\
        &\leq O(\log^2{n}) \cdot \left( \sup{J} - \inf{J} \right) + O\left( \log^2{n} \right) \cdot \E_{I \sim \cD^n}\left[ \max_{t \in [n]}r_t - \min_{t \in [n]}r_t \right] \\
        &\leq O\left( \log^2{n} \right) \cdot \E_{I \sim \cD^n}\left[ \max_{t \in [n]}r_t - \min_{t \in [n]}r_t \right], \tag{2a}
    \end{align*}.
    where the first line is due to $\Pr_{I \sim \cD^n}\left[ E_{bad} \right] \leq \Pr_{I \sim \cD^n}\left[ \bar{A}_{\rho} \right] + \Pr_{I \sim \cD^n}\left[ \bar{E}_{a} \right] + \Pr_{I \sim \cD^n}\left[ \bar{E}_{o} \right] \leq \frac{6\rho \beta + 2}{n}$ by \cref{cond: probability of hitting wall,cond: few arrivals in any one bucket} and union bound, the second line is because for all $t$ such that $r_t \in J$ we have $\charge{t} \leq \cost{t}$ by definition, the third line is due to \cref{lem: charge requests in J for bad event,lem: total cost for good event} which hold by assumption $n \geq 2 \cdot 10^6$ and $\rho, \beta = O(\log{n})$, and the fourth line is due to \cref{lem:maxminregions}.
    For the second term we have:
    \begin{align*}
        \E_{I \sim \cD^n}\left[ \sum_{t \in [n]: r_t \notin J}\charge{t} \right]
        &\leq 2(\sup{J} - \inf{J}) + n \cdot \E_{r \sim \cD}\left[\inf_{x \in J} |r - x|\right] \\
        &\leq 8 \cdot \E_{I \sim \cD^n}\left[ \max_{t \in [n]}r_t - \min_{t \in [n]}r_t \right], \tag{2b}
    \end{align*}
    where the first line is due to \cref{lem: charge requests outside J}, and the second is due to \cref{lem:maxminregions}.
    Then we obtain the result by substituting Equations (2a), (2b) to (2) and chaining with (1).
\end{proof}

\subsection{Lower Bound OPT}
\label{sec: opt lower bound omitted proofs}

\OPTMaxMin*
\begin{proof}
    Let random variables $r^{(1)}, \dots, r^{(n)}$ such that $r^{(i)}$ corresponds to the $i$-th leftmost request and $\cD^{(i)}$ be its distribution.
    We condition on $r^{(1)}, r^{(n)}$ and show that for any point $x \in \left( r^{(1)}, r^{(n)} \right)$, the probability that an odd number of requests arrive between $r^{(1)}$ and $x$ is at least $\frac{1}{2}$. This implies that with probability $\frac{1}{2}$, $x$ can be charged to $\opt$. 
    We calculate the expectation of the optimal solution as follows:
    \begin{align*}
        \E_{I \sim \cD^n}[\opt(I)] 
        = \E_{r' \sim \cD^{(1)}, r'' \sim \cD^{(n)}}\left[ \E_{I \sim \cD^n | r^{(1)}, r^{(n)}}\left[\opt(I) \mid r^{(1)} = r', r^{(n)} = r'' \right] \right]. \tag{*}
    \end{align*}
    We define the number of arrivals left of $x$ for all $x \in (r^{(1)}, r^{(n)})$ as $Y_x$:
    \begin{align*}
        Y_x &= | \{r_1, \dots, r_n\} \cap (r^{(1)}, x) | 
        = \sum_{i \in [n]} \mathbbm{1}\{ r_i \in (r^{(1)}, x) \} 
        = \sum_{i \in [n] \setminus \{\arg\min_{t \in [n]}r_t, \arg\max_{t \in [n]}r_t\}} \mathbbm{1}\{ r_i \in (r^{(1)}, x) \}.
    \end{align*}  
    Notice that for $p_x(r') = \Pr_{r \sim \cD}\left[r \in (r', x) \right]$:
    \begin{align*}
        \Pr_{I \sim \cD^n | r^{(1)}, r^{(n)}}\left[ Y_x = k \middle| r^{(1)} = r', r^{(n)} = r'' \right] 
        = \Pr\left[ Bin(n-2, p_x(r')) = k \right]
    \end{align*}
    because for $i,j \in [n] \setminus \{\arg\min_{t \in [n]}r_t, \arg\max_{t \in [n]}r_t\}$ and conditional on $r^{(1)}, r^{(n)}$, we know that indicators $\mathbbm{1}\{ r_i \in (r^{(1)}, x) \}, \mathbbm{1}\{ r_j \in (r^{(1)}, x) \}$ are independent.
    Then we have that:
    \allowdisplaybreaks
    \begin{align}
    \E&_{I \sim \cD^n | r^{(1)}, r^{(n)}} \left[\opt(I) \mid r^{(1)} = r', r^{(n)} = r'' \right]
    \nonumber \\
    &= \E_{I \sim \cD^n | r^{(1)}, r^{(n)}}\left[ \sum_{i=1}^{n/2} \left( r^{(2i)} - r^{(2i-1)} \right) \middle| r^{(1)} = r', r^{(n)} = r'' \right] \\
    &= \E_{I \sim \cD^n | r^{(1)}, r^{(n)}}\left[ \int_{r'}^{r''}{\mathbbm{1}\{ | \{r_1, \dots, r_n\} \cap [0, x) | \in \{1, 3, \dots, n-1\} \}} \ dx \middle| r^{(1)} = r', r^{(n)} = r'' \right] \\
    &= \E_{I \sim \cD^n | r^{(1)}, r^{(n)}}\left[ \int_{r'}^{r''}{\mathbbm{1}\{ | \{r_1, \dots, r_n\} \cap (r^{(1)}, x) | \in \{0, 2, \dots, n-2\} \}} \ dx \middle| r^{(1)} = r', r^{(n)} = r'' \right] \\
    &= \int_{r'}^{r''}{\E_{I \sim \cD^n | r^{(1)}, r^{(n)}}\left[ \mathbbm{1}\{ Y_x \in \{0, 2, \dots, n-2\} \}  \middle| r^{(1)} = r', r^{(n)} = r'' \right]} \ dx \\
    &= \int_{r'}^{r''}{\Pr_{I \sim \cD^n | r^{(1)}, r^{(n)}}\left[  Y_x \in \{0, 2, \dots, n-2\}  \middle| r^{(1)} = r', r^{(n)} = r'' \right]} \ dx \\
    &= \int_{r'}^{r''}{\Pr\left[  Bin(n-2, p_x(r')  \in \{0, 2, \dots, n-2\} \right]} \ dx \\
    &= \int_{r'}^{r''}{ \left( \frac{1}{2} + \frac{(2p_x(r') - 1)^{n-2}}{2} \right) } \ dx \\
    &= \int_{r'}^{r''}{ \left( \frac{1}{2} + \frac{\left( (2p_x(r') - 1)^{\frac{n-2}{2}} \right)^2}{2} \right) } \ dx \\
    &\geq \int_{r'}^{r''} \frac{1}{2} \;dx \\
    &= \frac{1}{2} \cdot \left(r''-r'\right),  \tag{**}
    \end{align} 
    where (1) is due to \cref{lemma: OPT structure}, (2) is because the sum is only paying for the segments between $r^{(2i-1)}$ and $r^{(2i)}$ so only for $x \in (r^{(1)}, r^{(n)})$ such that there is an odd number of requests that arrived in $[0, x]$, (3) is by excluding $r^{(1)}$, (4) is by definition of $Y_x$, (5) is due to the indicator, (6) is because the conditional distribution of $Y_x$ is $Bin(n-2, p_x(r'))$, (7) is due to \cref{lem:probability that binomial is even} since $n$ is even, (8) is because $n-2$ is even, (9) is because $\left( (2p_x(r') - 1)^{\frac{n-2}{2}} \right)^2 \geq 0$ and (**) is by integration.
    Substituting (**) in (*) we have:
    \begin{align*}
        \E_{I \sim \cD^n}[\opt(I)] 
        \geq \E_{r' \sim \cD^{(1)}, r'' \sim \cD^{(n)}}\left[ \frac{1}{2} \cdot \left( r'' - r' \right) \right]
        = \frac{1}{2} \cdot \E_{I \sim \cD^n}\left[ \max_{t \in [n]}r_t - \min_{t \in [n]}r_t \right].
    \end{align*}
\end{proof}

\OPTStructure*
\begin{proof}
    Let the cost of a matching $M$ be $\cost{M} = \sum_{\{r, r'\} \in M} |r - r'|$.
    We show that by iteratively exchanging matched pairs of a matching $M$, we can construct matching $\hat{M}^*$:
    \begin{align*}
        \hat{M}^* = \{ \{ r^{(2i)}, r^{(2i-1)} \}: i \in [n/2] \},
    \end{align*}
    with cost at most that that of $M$. Notice that $\cost{\hat{M}^*} = \sum_{i=1}^{n/2}\left( r^{(2i)} - r^{(2i - 1)} \right)$.
    Now we show the exchange. Let $k = \min\{ i \in [n/2]: \{ r^{(2i)}, r^{(2i - 1)} \} \notin M \}$ and $a,b$ such that $\{r^{(2k)}, r^{(a)}\}, \{ r^{(2k-1)}, r^{(b)} \} \in M$, then define:
    \begin{align*}
        M_1 = M \setminus \{ \{r^{(2k)}, r^{(a)}\}, \{ r^{(2k-1)}, r^{(b)} \} \} \cup \{ \{ r^{(2k)}, r^{(2k-1)} \}, \{ r^{(a)}, r^{(b)} \} \}.
    \end{align*}
    Then we have:
    \begin{align*}
        \cost{M} - \cost{M_1} 
        &= \left( |r^{(2k)} - r^{(a)}| + |r^{(2k-1)} - r^{(b)}| \right) - \left( |r^{(2k)} - r^{(2k-1)}| + |r^{(a)} - r^{(b)}| \right) \\
        &= \left( r^{(a)} - r^{(2k)} + r^{(b)} - r^{(2k-1)} \right) - \left( r^{(2k)} - r^{(2k-1)} + |r^{(a)} - r^{(b)}| \right) \\
        &= r^{(a)} + r^{(b)} - |r^{(a)} - r^{(b)}| - 2r^{(2k)} \\
        &= 2 \left( \min\left( r^{(a)}, r^{(b)} \right) - r^{(2k)} \right) \\
        &\geq 0,
    \end{align*}
    where the first line is because $M_1$ differs from $M$ in exactly two pairs, the second line is because by definition of $k$ we have $a, b > 2k$ so $r^{(a)}, r^{(b)} \geq r^{(2k)} \geq r^{(2k-1)}$, the third line is by cancelling $r^{(2k-1)}$, the fourth line is because $r^{(a)} + r^{(b)} - |r^{(a)} - r^{(b)}| = \min\left( r^{(a)}, r^{(b)} \right)$ and the fifth line is because $r^{(a)}, r^{(b)} \geq r^{(2k)}$. 
    By repeating the exchange we can obtain $M_2, M_3, \dots, \hat{M}^*$ so:
    \begin{align*}
        \cost{M} \geq \cost{M_1} \geq \cost{M_2} \geq \dots \geq \cost{\hat{M}} = \sum_{i=1}^{n/2}\left( r^{(2i)} - r^{(2i - 1)} \right).
    \end{align*}

    Let $M^*$ be a perfect matching of $r^{(1)}, \dots, r^{(n)}$ of minimum cost. i.e. $\cost{M^*} = \opt(I)$, then for $M = M^*$ we obtain the statement.
\end{proof}

\begin{lemma}
\label{lem:probability that binomial is even}
    Let $Y \sim Bin(m, p)$ for even $m$, then $\Pr\left[ Y \in \{0, 2, \dots, m\} \right] = \frac{1}{2} + \frac{\left( 2p - 1 \right)^m}{2}$.
\end{lemma}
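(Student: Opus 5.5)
The plan is to use the standard parity-filter generating-function trick. Since $Y \sim Bin(m,p)$, the binomial theorem gives
\[
(q+p)^m = \sum_{k=0}^m \binom{m}{k} p^k q^{m-k}, \qquad (q-p)^m = \sum_{k=0}^m \binom{m}{k} (-1)^k p^k q^{m-k},
\]
where $q = 1-p$. In the second sum the signs alternate with the parity of $k$.

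Adding the two identities cancels the odd-$k$ terms and doubles the even-$k$ terms:
\[
(q+p)^m + (q-p)^m = 2\sum_{k \text{ even}} \binom{m}{k} p^k q^{m-k} = 2\Pr[Y \in \{0,2,\dots,m\}].
\]
The last equality uses that $m$ is even, so the even values in $\{0,\dots,m\}$ are exactly $\{0,2,\dots,m\}$.

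Substituting $q+p = 1$ and $q-p = 1-2p$ then gives
\[
\Pr[Y \text{ even}] = \tfrac12 + \tfrac12 (1-2p)^m.
\]

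The final step is to rewrite this in the form stated. Because $m$ is even, $(1-2p)^m = (2p-1)^m$, which yields the claimed expression $\tfrac12 + \tfrac{(2p-1)^m}{2}$.

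There is no real obstacle here. The only places needing care are that the evenness of $m$ is used twice: once to identify the set of even outcomes as $\{0,2,\dots,m\}$, and once to flip the sign $(1-2p)^m = (2p-1)^m$. As a side check for later use in the paper, the result shows the probability is at least $1/2$, matching its application in the proof of \cref{lemma: opt pays expected rightmost minus expected leftmost}.
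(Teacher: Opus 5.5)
Your proof is correct and is essentially the paper's argument. The paper also adds the binomial expansions of $(a+b)^m$ and $(a-b)^m$ so that only the even-index terms survive. The only cosmetic difference is that it takes $a=p$, $b=1-p$, so $(2p-1)^m$ appears directly (evenness of $m$ turns $(-1)^{m-k}$ into $(-1)^k$), whereas you obtain $(1-2p)^m$ and flip the sign at the end.
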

\begin{proof}
    Let $a, b \in \mathbb{R}$, then:
    \begin{align*}
        (a + b)^m + (a - b)^m 
        = \sum_{k=0}^{m} {m \choose k} a^k (b)^{m-k} + \sum_{k=0}^{m} {m \choose k} (-1)^{m-k} a^k (b)^{m-k}
        = 2 \sum_{k \in \{0, 2, \dots, m\}} {m \choose k} a^k (b)^{m-k}. \tag{*}
    \end{align*}

    Then we have that:
    \begin{align*}
        \Pr\left[ Y \in \{0, 2, \dots, m\} \right] 
        = \sum_{k \in \{0,2, \dots, m\}} \Pr\left[ Y = k \right]
        = \sum_{k \in \{0,2, \dots, m\}} {m \choose k} p^k (1 - p)^{m-k}
        = \frac{1}{2} + \frac{(2p - 1)^m}{2},
    \end{align*}
    where the last equality is due to (*) for $a = p, b = 1 - p$.
\end{proof}

\section{Omitted Proofs from Section 4}\label{appendix: omitted proofs unknown iid}

\begin{proof}[Proof of \Cref{lemma: negative hypergeometric tail bounds}]
For the first inequality in the lemma statement, let 
\[
\mu = \frac{\learnBuckets}{n} \cdot \frac{\beta}{8} \cdot \left(\frac{n}{2\learnBuckets} +1\right) = \frac{\beta}{16} +\frac{\beta}{8}\cdot \frac{\learnBuckets}{n} \implies \frac{\beta}{16} \leq \mu \leq \frac{\beta}{12},
\] 
which follows from $\learnBuckets \leq \frac{n}{6}$. Then, let
\begin{align*}
    \delta = \frac{\frac{\beta}{8}}{\mu}-1 = \frac{\frac{\beta}{8}-\mu}{\mu} \geq \frac{\frac{\beta}{8}-\frac{\beta}{12}}{\frac{\beta}{12}} = \frac{1}{2}.
\end{align*}
So, we have
\begin{align*}
    &\Pr\left[Y \sim \textrm{NegativeHypergeometric}\left(n, n - \learnBuckets, \frac{\beta}{8}\right) \leq \frac{n}{2\learnBuckets} \cdot \frac{\beta}{8}\right]\\
    = \ &\Pr\left[X \sim \textrm{Hypergeometric}\left(n, \learnBuckets, \frac{\beta}{8} \cdot \left(\frac{n}{2\learnBuckets}+1\right)\right) \geq \frac{\beta}{8}\right]\\
    \leq \ &\Pr\left[X \geq (1+\delta) \mu\right]\\
    \leq \ &\exp\left(-\frac{\delta^2 \mu}{2+\delta}\right) \\
    = \ &\exp\left(-\frac{\mu}{10}\right)\\
    \leq \ &\exp\left(- \frac{\beta}{160}\right)\\
    \leq \ &\frac{1}{n^3},
\end{align*}
where the second line follows from Lemma~\ref{lemma: relating negative hypergeometric to hypergeometric}, the fourth line follows from Lemma~\ref{lemma: hypergeometric tail bounds}, the sixth line follows from $\mu \geq \frac{\beta}{16}$, and the last line follows from $\beta \geq 500 \log n$.

For the second inequality in the lemma statement, let $\mu = \frac{\learnBuckets}{n} \cdot \frac{\beta}{8} \cdot \left(\frac{2n}{\learnBuckets} +1\right) > \frac{\learnBuckets}{n} \cdot \frac{\beta}{8} \cdot \frac{2n}{\learnBuckets} = \frac{\beta}{4}$. Then, let 
\begin{align*}
    \delta = 1-\frac{\frac{\beta}{8}}{\mu} = \frac{\mu - \frac{\beta}{8}}{\mu} \geq \frac{\frac{\mu}{2}}{\mu} = \frac{1}{2},
\end{align*}
where the inequality follows from $\mu \geq \frac{\beta}{4}$ which implies $\frac{\mu}{2} \geq \frac{\beta}{8}$. So, we have 
\allowdisplaybreaks
\begin{align*}
    &\Pr\left[Y \sim \textrm{NegativeHypergeometric}\left(n, n - \learnBuckets, \frac{\beta}{8}\right) \geq \frac{2n}{\learnBuckets} \cdot \frac{\beta}{8}\right]\\
    = \ &1-\Pr\left[Y \sim \textrm{NegativeHypergeometric}\left(n, n - \learnBuckets, \frac{\beta}{8}\right) \leq \frac{2n}{\learnBuckets} \cdot \frac{\beta}{8}\right]\\
    = \ &1-\Pr\left[X \sim \textrm{Hypergeometric}\left(n, \learnBuckets, \frac{\beta}{8} \cdot \left(\frac{2n}{\learnBuckets}+1\right)\right) \geq \frac{\beta}{8}\right]\\
    = \ &\Pr\left[X \sim \textrm{Hypergeometric}\left(n, \learnBuckets, \frac{\beta}{8} \cdot \left(\frac{2n}{\learnBuckets}+1\right)\right) \leq \frac{\beta}{8}\right]\\
    = \ &\Pr\left[X \leq (1-\delta) \mu\right]\\
    \leq \ &\exp\left(-\frac{\delta^2 \mu}{2}\right) \\
    \leq \ &\exp\left(-\frac{\mu}{8}\right)\\
    = \ &\exp\left(-\frac{\beta}{32}\right)\\
    \leq \ &\frac{1}{n^3},
\end{align*}
where the first line follows the law of total probability, the second line from Lemma~\ref{lemma: relating negative hypergeometric to hypergeometric}, the third line from the law of total probability, the fourth line follows from Lemma~\ref{lemma: hypergeometric tail bounds}, the fifth line follows from $\delta \geq \frac{1}{2}$, the sixth line follows from $\mu \geq \frac{\beta}{4}$, and the last line follows from $\beta \geq 500 \log n$.
\end{proof}

\begin{proof}[Proof of \Cref{lemma: hypergeometric fact}]
    Let $X(I) = \sum_{t = 1}^{k} \mathbbm{1}\{ r_t \leq r_{(m:n)} \}$ be the number of requests among the first $k$ that arrive before the $m$-th largest request. 
    We now show that the distribution of $X(I)$ is $Hypergeometric(n, m, k)$. An equivalent way to interpret $X$, is that we sample $k$ indices (the indices correspond to the order statistics) out of $n$ without replacement and we count how many of the first $m$ indices we get, so:
    \begin{align*}
        \Pr_{I \sim \cD^n}\left[ \min_{t \in [k]}r_t > r_{(m:n)} \right] 
        = \Pr_{I \sim \cD^n}\left[ X(I) = 0  \right]
        = \frac{{n - m \choose k}}{{n \choose k}}
        = \prod_{i = 0}^{k-1} \frac{n - i - m}{n - i}
        \leq \left( 1 - \frac{m}{n} \right)^{k}
        \leq e^{- \frac{km}{n}}.
    \end{align*}
    For the second part of the statement, let $Y(I) = \sum_{t = 1}^{k} \mathbbm{1}\{ r_t \geq r_{(n - m:n)} \}$ and similarly $Y(I)$ is $Hypergeometric(n, m, k)$, therefore:
    \begin{align*}
        \Pr_{I \sim \cD^n}\left[ \max_{t \in [k]}r_t < r_{(n - m:n)} \right] 
        = \Pr_{I \sim \cD^n}\left[ Y(I) = 0  \right]
        = \Pr_{I \sim \cD^n}\left[ X(I) = 0  \right]
        = \Pr_{I \sim \cD^n}\left[ \min_{t \in [k]}r_t > r_{(m:)} \right].
    \end{align*}
\end{proof}

\begin{proof}[Proof of \Cref{lemma: bound on expected number any bucket receives - random order}]
    Fix an arbitrary sequence of ordered requests $\left\{r_{(1)}, \dots, r_{(n)}\right\}$. Fix $k \in [\rho], j \in \left[\beta^{\rho+1-k}\right]$. By the tower rule, we have
\begin{align*}
    \E\left[X^k_j < \beta\right] &= \E\left[X^k_j < \beta \middle| E_M\right] \cdot \Pr\left[E_M\right] + \E\left[X^k_j < \beta \middle| \bar{E}_M\right] \cdot \Pr\left[\bar{E}_M\right]\\
    &\leq \E\left[X^k_j < \beta \middle| E_M\right] + n \cdot \Pr\left[\bar{E}_M\right]\\
    &\leq \Pr\left[X^k_j < \beta \middle| E_M\right] + \frac{1}{n},
    \end{align*}
    where the second line follows from that fact that at most $n$ requests can arrive in a bucket, and the last line follows from Lemma~\ref{lemma: enough total requests in buckets - random order}. 
    
    Next, recall that, as shown in the proof of Lemma~\ref{lemma: enough requests in each bucket - random order}, $X^k_j \sim \textrm{Hypergeometric}\left(n-\learnBuckets, \sum_{j' \in \cB_j} Y_{j'}, t_k - t_{k-1}\right)$, where $Y_{j'}$ is the number of non-learning phase requests between the endpoints of the $j'$-th bucket from the first phase and $\cB_j = \left\{j' \cdot \frac{\beta}{8} : (j-1) \cdot \frac{\beta^k}{8} \leq j' \cdot \frac{\beta}{8}  \leq j \cdot \frac{\beta^k}{8}\right\}$ is the set of indices of buckets from the first phase that are combined to form the $j$-th bucket in the $k$-th phase. . Therefore, 
    \allowdisplaybreaks
    \begin{align*}
        E\left[X^k_j \middle| E_M \right] &= E\left[X^k_j \sim \textrm{Hypergeometric}\left(n-\learnBuckets, \sum_{j' \in \cB_j } Y_{j'}, t_k - t_{k-1}\right) \middle| E_M \right]\\
        &\leq E\left[X^k_j \sim \textrm{Hypergeometric}\left(n-\learnBuckets, \frac{n \beta^k}{4\learnBuckets}, t_k - t_{k-1}\right) \right]\\
        &= \frac{t_k - t_{k-1}}{n - \learnBuckets} \cdot \frac{\beta^k n}{4\learnBuckets}\\
        &\leq \frac{t_k - t_{k-1}}{2 \learnBuckets} \cdot \beta^k\\
        &=
        \begin{aligned}
            \begin{cases}
            \frac{n-3\beta^{\rho} - \beta^{\rho+1}/8}{2 \cdot \frac{\beta^{\rho+1}}{8}} \cdot \beta^k , \quad &\textrm{ if } k=1\\
            \\
            \frac{3\beta^{\rho+1-k} (\beta-1)}{2 \cdot \frac{\beta^{\rho+1}}{8}} \cdot \beta^k, & \textrm{else} 
            \end{cases}
        \end{aligned}\\
        &<
        \begin{aligned}
            \begin{cases}
            \frac{4n}{\beta^{\rho}} - 1, \quad &\textrm{ if } k=1\\
            12\beta -1 , & \textrm{else},
            \end{cases}
        \end{aligned}
    \end{align*}
    where the second line follows from stochastic dominance since $E_M$ implies $\sum_{j' \in \cB_j} Y_{j'} \leq \frac{n \beta}{4\learnBuckets} \cdot \beta^{k-1} = \frac{n\beta^{k}}{4}$ non-learning phase requests. The the third line follows from the expectation of hypergeometric random variables, and the remaining lines follow from algebra. Therefore, we have 
    \begin{align*}
    E\left[X^k_j\right] &\leq \frac{1}{n} + \begin{aligned}
            \begin{cases}
            \frac{4n}{\beta^{\rho}} - 1, \quad &\textrm{ if } k=1\\
            12\beta -1, & \textrm{else},
            \end{cases}
        \end{aligned}
        \quad\\
        &\leq \quad \begin{aligned}
            \begin{cases}
            \frac{4n}{\beta^{\rho}} , \quad &\textrm{ if } k=1\\
            12\beta , & \textrm{else}
            \end{cases}
        \end{aligned}
    \end{align*}
    We can now use the hypergeometric tail bounds from Lemma~\ref{lemma: hypergeometric tail bounds}. Let $\delta = 1$. For $j \in \left[\beta^{\rho}\right]$ let $\mu = \E\left[X^1_j\right]$. By the above calulation, $\mu \leq \frac{4n}{\beta^{\rho}}$, and by the proof of Lemma~\ref{lemma: enough requests in each bucket - random order}, $\mu \geq \frac{5}{4}\beta$. Then, we have
    \begin{align*}
        \Pr\left[X^1_j \geq \frac{8n}{\beta^{\rho}}\right] \leq \Pr\left[X^1_j \geq (1+\delta) \mu\right] \leq \exp\left(-\frac{\delta^2 \mu}{2+\delta}\right) = \exp\left(-\frac{\mu}{3}\right) \leq \exp\left(-\frac{\frac{5}{4}\beta}{3}\right) \leq \frac{1}{n^2},
    \end{align*}
    where the last inequality follows from $\beta \geq 500\log n$. 

    Similarly, for $k \in [2..\rho]$ and $j \in \left[\beta^{\rho+1-k}\right]$, the above computations and the proof of Lemma~\ref{lemma: enough requests in each bucket - random order} prove $\frac{5}{4}\beta \leq \mu \leq 12\beta$, where $\mu = \E\left[X^k_j\right]$. Then, we have
    \begin{align*}
        \Pr\left[X^k_j \geq 24\beta\right] \leq \Pr\left[X^1_j \geq (1+\delta) \mu\right] \leq \exp\left(-\frac{\delta^2 \mu}{2+\delta}\right) = \exp\left(-\frac{\mu}{3}\right) \leq \exp\left(-\frac{\frac{5}{4}\beta}{3}\right) \leq \frac{1}{n^2}.
    \end{align*}

    Since $\frac{n}{\beta^{\rho}} \leq 4\beta^2$, we conclude
    \begin{align*}
        \Pr\left[\cup_{j \in \left[\beta^{\rho}\right]} \left\{X^1_j \geq 32\beta^2\right\}\right] \geq \Pr\left[\cup_{j \in \left[\beta^{\rho}\right]} \left\{X^1_j \geq \frac{8n}{\beta^{\rho}}\right\}\right] \leq \sum_{j \in \left[\beta^{\rho}\right]} \Pr\left[X^1_j \geq \frac{8n}{\beta^{\rho}}\right] \leq \beta^{\rho} \cdot \frac{1}{n^2} < \frac{1}{n},
    \end{align*}
    and
    \begin{align*}
        \Pr\left[\cup_{k \in [2..\rho]}\cup_{j \in \left[\beta^{\rho+1-k}\right]} \left\{X^k_j \geq 24\beta\right\}\right] \leq \sum_{k \in [2..\rho]} \sum_{j \in \left[\beta^{\rho+1-k}\right]} \Pr\left[X^k_j \geq 24\beta\right] \leq \sum_{k \in [2..\rho]} \beta^{\rho+1-k} \cdot \frac{1}{n^2} \leq \beta^{\rho} \cdot \frac{1}{n^2} < \frac{1}{n}.
    \end{align*}
\end{proof}

\end{document}